\documentclass[11pt, oneside, reqno]{amsart}

\usepackage[top=2.5cm, bottom=2.2cm,left=2cm,right=2cm,footnotesep=20pt]{geometry}
\usepackage{amsmath, amsthm, amssymb, mathrsfs}
\usepackage[all, cmtip]{xy}
\usepackage[titletoc,title]{appendix}
\usepackage[usenames,dvipsnames]{color}
\usepackage{slashed} 
\usepackage{xfrac}
\usepackage{caption}
\usepackage{subcaption}
\usepackage{eucal}
\usepackage{verbatim}
\usepackage[backend=bibtex,
            isbn=false,
            doi=false,
            maxbibnames=5,
            giveninits=true,
            style=alphabetic,
            citestyle=alphabetic]{biblatex}
\renewbibmacro{in:}{}
\usepackage{enumerate}
\usepackage{multirow}
\usepackage{array}
\usepackage{tikz}
\usetikzlibrary{decorations.pathmorphing,decorations.markings}
\usepackage[bookmarks=true, linkbordercolor={0 0 1}]{hyperref}
\usepackage{epstopdf}
\DeclareGraphicsRule{.tif}{png}{.png}{`convert #1 `basename #1 .tif`.png}

\theoremstyle{definition} \newtheorem{definition}{Definition}[section]

\newtheorem{lemma}[definition]{Lemma}
\newtheorem{theorem}[definition]{Theorem}
\newtheorem{prop}[definition]{Proposition}
\newtheorem{conjecture}[definition]{Conjecture}
\newtheorem{corollary}[definition]{Corollary}

\newtheorem*{lemma*}{Lemma}

\newtheorem{thm}{Theorem}[section]

\theoremstyle{definition}
\newtheorem{dfn}[thm]{Definition}
\newtheorem{dfn/lem}{Definition/Lemma}

\theoremstyle{remark}

\theoremstyle{definition} \newtheorem{remark}[definition]{Remark}
\theoremstyle{definition} 
\theoremstyle{definition} 
\theoremstyle{definition} 
\theoremstyle{definition} 
\theoremstyle{definition} 

\def\clie{{\rm C}_{\rm Lie}}

\def\cloc{{\rm C}_{\rm loc}}
\def\oloc{\mathcal{O}_{\rm loc}}
\def\bu{\bullet}
\def\Bar{\overline}
\def\Hat{\widehat}

\def\xto{\xrightarrow}
\def\Sym{{\rm Sym}}

\def\<{\langle}
\def\>{\rangle}

\def\cE{\mathcal E}
\def\cL{\mathcal L}
\def\cM{\mathcal M}\def\cN{\mathcal N}\def\cO{\mathcal O}\def\cP{\mathcal P}
\def\cQ{\mathcal Q}\def\cS{\mathcal S}\def\cT{\mathcal T}

\def\CC{\mathbb C}

\def\RR{\mathbb R}

\def\ZZ{\mathbb Z}

\def\fg{\mathfrak g}\def\fh{\mathfrak h}

\def\fw{\mathfrak w}

\newcommand{\dd}{\partial}
\newcommand{\dbar}{\overline{\partial}}

\renewcommand{\gg}{\mathfrak{g}}

\renewcommand{\sp}{\mathfrak{sp}}

\newcommand{\GL}{\mathrm{GL}}

\newcommand{\Sp}{\mathrm{Sp}}

\newcommand{\U}{\mathrm{U}}
\newcommand{\spin}{\mathrm{Spin}}

\newcommand{\iso}{\cong}

\newcommand{\sub}{\subseteq}

\newcommand{\inj}{\hookrightarrow}

\newcommand{\bb}[1]{\mathbb{#1}}

\newcommand{\mr}[1]{\mathrm{#1}}
\newcommand{\mc}[1]{\mathcal{#1}}
\newcommand{\mf}[1]{\mathfrak{#1}}
\newcommand{\ol}[1]{\overline{#1}}
\newcommand{\ul}[1]{\underline{#1}}

\renewcommand{\hom}{\operatorname{Hom}}


\DeclareMathOperator{\dens}{Dens}

\def\d{{\rm d}}
\newcommand{\Obs}{\mathrm{Obs}}

\newcommand{\obs}{\mathrm{Obs}}

\newcommand{\dR}{\mathrm{dR}}

\renewcommand{\U}{\mathrm{U}}
\newcommand{\ham}{/\!\!/}

\headheight 0.5cm
\headsep 0.5cm
\footskip 0.75cm

\setlength{\parindent}{0pt}
\setlength{\parskip}{11pt}

\let\OLDthebibliography\thebibliography
\renewcommand\thebibliography[1]{
  \OLDthebibliography{#1}
  \setlength{\parskip}{0pt}
  \setlength{\itemsep}{4pt plus 0.3ex}
}

\def\ham{\fh}
\def\define{\overset{\rm def}{=}}
\def\ep{{\varepsilon}}

\def\cred{{\rm C}_{\rm red}}

\linespread{1.25}
\setcounter{tocdepth}{1}

\bibliography{Twist}

\title{Holomorphic Poisson Field Theories}

\author{Chris Elliott}
\address{University of Massachusetts, Amherst \\ Department of Mathematics and Statistics \\ 710 N Pleasant St \\ Amherst, MA 01003}
\email{celliott@math.umass.edu}

\author{Brian R. Williams}
\address{School of Mathematics\\
University of Edinburgh \\ 
Edinburgh \\ 
UK}
\email{brian.williams@ed.ac.uk}

\date{\today}

\begin{document}

\begin{abstract}
We construct a class of quantum field theories depending on the data of a holomorphic Poisson structure on a piece of the underlying spacetime. 
The main technical tool relies on a characterization of deformations and anomalies of such theories in terms of the Gelfand--Fuchs cohomology of formal Hamiltonian vector fields.
In the case that the Poisson structure is non-degenerate such theories are topological in a certain weak sense, which we refer to as ``de Rham topological". 
While the Lie algebra of translations acts in a homotopically trivial way, we will show that the space of observables of such a theory {\em does not} define an $E_n$-algebra. 
Additionally, we will highlight a conjectural relationship to theories of supergravity in four and five dimensions. 
\end{abstract}

\maketitle

\section{Introduction}

This paper is devoted to the construction of a novel family of quantum field theories defined on the product of a holomorphic Poisson manifold $X$ and a smooth manifold $M$.  The action functional of these theories is analogous to the BF theory action functional, but where the interaction term is defined using the holomorphic Poisson bracket on $X$, so we refer to these theories as holomorphic \emph{Poisson BF theories}.  We will show that these theories have a number of interesting properties.
\begin{enumerate}
 \item The only potential anomaly to quantization of Poisson BF theory occurs at first order in $\hbar$.  
 On $\CC^{2n} \times \RR^m$, this anomaly is given by a class in the cohomology of the infinite-dimensional Lie algebra $\ham_{2n}$ of Hamiltonian vector fields on $\CC^{2n}$.  We show that if $m \le 6$ this cohomology class vanishes, and so the Poisson BF theory can be quantized to all orders (Section \ref{sec:quantization}).
 \item Poisson BF theory is gravitational in nature. 
 Indeed, there is a fundamental field of Poisson BF theory of spin $2$ and the interactions contain spacetime derivatives.
 Moreover, if $X$ is a holomorphic symplectic surface, Poisson BF theories appear to arise as twists of minimal supergravity theories in dimensions 4 and 5.  Such supergravity theories occur by dimensionally reducing 11-dimensional $\mc N=1$ supergravity on a $G2$-manifold or Calabi--Yau 3-fold respectively (Section \ref{sugra_section}).
 \item Poisson BF theories are examples of theories that are ``de Rham topological'' but not ``Betti topological''.  That is, their algebras of observables are not locally constant, but the group of translations acts homotopically trivially.  The factorization algebra of observables has the structure of an algebra over the operad $C_\bullet(\mr{Disk}_{2n+m}^{\mr{col}})$ of singular chains for colored little $(2n+m)$-disks (Section \ref{factn_section}).
\end{enumerate}

\subsection{Topological Field Theories}
In the literature on topological quantum field theories, a distinction is often drawn between topological theories of \emph{Schwarz} type, and of \emph{Witten} (or cohomological) type \cite{SchwarzICM}.  A field theory is topological of Schwarz type if the action functional is strictly independent of the metric, so the gravitational stress-energy tensor $\mc T^{\mu \nu} = \frac{\delta S}{\delta g^{\mu \nu}}$ vanishes.  Being topological of Witten-type is slightly more subtle, one usually requires the existence of an odd observable $Q$ satisfying the condition $[Q,Q] = 0$, such that $\mc T^{\mu \nu} = [Q, \mc G^{\mu \nu}]$ for some family of observables $\mc G^{\mu \nu}$.

This condition can be phrased more naturally in the Batalin--Vilkovisky (BV) formalism, which provides for a description of classical and quantum field theory in terms of homological algebra and Lie theory, or equivalently in terms of formal derived geometry.  In the BV formalism, the local observables form a cochain complex that is further equipped with a bracket of cohomological degree $+1$.  
We refer to the differential on this cochain complex as the ``classical BV complex" and the bracket as the ``BV bracket". 
Now, a theory is topological of Schwarz type if the stress-energy tensor strictly vanishes, and it is topological of Witten type if the stress-energy tensor vanishes \emph{up to homotopy}, i.e. if it is an exact element of the cochain complex.  From a homotopy-theoretic point of view only the latter notion is invariant: it is often possible to describe a perturbative quantum field theory in many different homotopy-equivalent ways, and a theory may be topological of Schwarz type in some models, but only topological of Witten type in others.

Witten-type topological field theories appear naturally when one studies twisting for supersymmetric field theories.  Given a supersymmetric field theory on $\RR^{p,q}$ and an odd element $Q$ of the supersymmetry algebra such that $[Q,Q]=0$, one can \emph{twist} the supersymmetric theory by adding the action of $Q$ to the BV differential on the algebra of local observables.  According to our discussion above, the resulting twist will be \emph{topological} if the following condition holds:
\begin{itemize}
 \item[T1:] $\mc T^{\mu \nu} = [Q, \mc G^{\mu \nu}]$ for some $\mc G^{\mu \nu}$.
\end{itemize}
On the other hand, there is a weaker condition that is often checked first, that only depends on $Q$ itself and not on the choice of theory.  
The super Lie algebra of supertranslations is an extension of the abelian Lie algebra $\RR^{p,q}$ of translations by an odd space $\Pi \Sigma$, where $\Sigma$ is a spinorial representation of $\spin(p,q)$ (so $Q$ is an element of $\Sigma$).
\begin{itemize}
 \item[T2:] The map $[Q, -] \colon \Sigma \to \RR^{p,q}$ is surjective.
\end{itemize}
This condition T2 is guaranteed, for example, if we choose a twisting homomorphism from $\spin(p,q)$ to the group $G_R$ of R-symmetries such that $Q$ is invariant under the twisted $\spin(p,q)$-action (this has the additional advantage of allowing the twisted theory to be defined on all oriented manifolds).  

It is possible to rephrase condition T2 in language similar to condition T1.  The \emph{canonical} stress-energy tensor is the conserved current associated to the translation symmetry, condition T2 says that the canonical stress-energy tensor is $Q$-exact.  The canonical stress-energy tensor and the gravitational stress-energy tensor $T^{\mu\nu}$ do not typically coincide, they differ by a correction term (the Belinfante--Rosenfeld correction) derived from the conserved currents for the rotation action.

Condition T1 implies T2, but are there examples satisfying only T2?  Let us discuss what the two conditions tell us, formally, about the algebra of local observables.

\subsection{Betti vs. de Rham topological factorization algebras}
The theory of factorization algebras provides an effective mathematical model for the study of classical and quantum field theories, describing their dependence on an open subset of spacetime, and their ``operator products".  
It is the central premise of Costello and Gwilliam's work in \cite{Book1,Book2} that one can associate to every quantum field theory on a manifold $M$ a factorization algebra of ``observables" on $M$. 
Let us provide an informal definition.

\begin{definition} \label{dfn:fact}
A \emph{prefactorization algebra} on a smooth manifold $M$ is an assignment of a cochain complex $\obs(U)$ to each open subset $U \sub M$, together with structure maps $\obs(U_1) \otimes \cdots \otimes \obs(U_n) \to \obs(V)$ for each set of pairwise disjoint subsets $U_1, \ldots, U_n \sub V$, satisfying natural compatibility conditions.  A \emph{factorization algebra} is a prefactorization algebra satisfying a descent condition (see \cite[Chapter 6]{Book1} for a precise definition).
\end{definition}

In general, the open set $\obs(U)$ is highly dependent on the geometry of $U$, for instance if $U \sub V$ are a pair of homotopy equivalent open subsets, the map $\obs(U) \to \obs(V)$ will typically be far from a quasi-isomorphism.  However, if $\obs$ describes the local observables in a topological quantum field theory, this should not be the case: the local observables should be topological in the following sense.

\begin{definition}
A factorization algebra $\obs$ is \emph{Betti topological} if, whenever $U \inj V$ is a homotopy equivalence, the associated map $\obs(U) \to \obs(V)$ is a quasi-isomorphism.
\end{definition}

\begin{remark}
A Betti topological factorization algebra is equivalent to what is often referred to as a locally constant factorization algebra.
\end{remark}

For example, the local observables in a Schwarz-type topological quantum field theory will manifestly form a Betti topological factorization algebra, because the fields and the action functional will not be sensitive to the local geometry of an open set.  The same is true of a Witten-type topological quantum field theory, such as a topological twist satisfying the condition T1 (see \cite[Section 3.5]{ElliottSafronov}).  When we consider twists only satisfying the condition T2 however, we can only guarantee an apparently weaker.  Such theories are \emph{de Rham} topological, in the following sense.

\begin{definition}
A factorization algebra $\obs$ on $\RR^{n}$ is \emph{de Rham topological} if it is translation invariant, meaning that there is a smooth action of the group $\RR^{n}$, and the translation action is \emph{homotopically trivial}, meaning that there is additionally an infinitesimal action of the abelian Lie algebra $\RR^{n}$ of cohomological degree $-1$, say
\[\eta \colon \RR^{n} \to \mr{Der}(\obs),\]
such that $\d \eta(v) = \dd v$ for all $v \in \RR^{n}$.
\end{definition}

In \cite{ElliottSafronov}, Safronov and the first author studied these de Rham topological factorization algebras, and learned that they are not in fact that distant from Betti topological theories.  To explain their relationship, we will introduce a concept from homotopical algebra, the notion of an $\bb E_n$-algebra.  The $\bb E_n$-operad is an operad valued in cochain complexes.  It is built by taking the singular chain complex of the operad $\mr{Disk}_n$ of little $n$-disks: the operad valued in smooth manifolds whose space of order $k$ operations is the space of embeddings of $k$ $n$-disks into the unit $n$-disk.  The relationship between $\bb E_n$-algebras and factorization operators begins with a famous result of Lurie.

\begin{theorem}[{\cite[Theorem 5.4.5.9]{HA}}]
There is a fully faithful embedding from the $\infty$-category of $\bb E_n$-algebras into the $\infty$-category of factorization algebras, whose essential image consists of Betti topological factorization algebras.
\end{theorem}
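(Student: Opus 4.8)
The plan is to follow Lurie's strategy, which realizes both $\bb E_n$-algebras and factorization algebras on $\RR^n$ as algebras over a common colored $\infty$-operad and then identifies the relevant subcategories. Fix the symmetric monoidal $\infty$-category $\cC^{\otimes}$ in which observables take values. Introduce the colored $\infty$-operad $\mr{Disk}(\RR^n)$ whose colors are the open subsets of $\RR^n$ homeomorphic to $\RR^n$, and whose space of operations $(U_1, \dots, U_k) \to V$ is a point when the $U_i$ are pairwise disjoint and all contained in $V$, and empty otherwise. A $\mr{Disk}(\RR^n)$-algebra in $\cC$ is then precisely a prefactorization algebra valued in $\cC$ whose structure maps refer only to inclusions of disks; a \emph{factorization algebra} is such an algebra satisfying the Weiss-cover descent condition, and it is \emph{locally constant} if every structure map $\obs(U) \to \obs(V)$ attached to an inclusion of disks is an equivalence (for factorization algebras on $\RR^n$ this coincides with being Betti topological). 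The first task is to construct a map of $\infty$-operads $\bb E_n \to \mr{Disk}(\RR^n)$ using a rectilinear model of the little $n$-disks operad, sending a configuration of little cubes inside the unit cube to the corresponding configuration of open subsets of $\RR^n$.

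Next I would prove the key comparison: restriction along $\bb E_n \to \mr{Disk}(\RR^n)$ induces an equivalence from $\bb E_n$-algebras in $\cC$ onto the full subcategory of \emph{locally constant} $\mr{Disk}(\RR^n)$-algebras. Conceptually, $\mr{Disk}(\RR^n)$, localized at its unary equivalences (all inclusions of one disk into another), is equivalent to $\bb E_n$ as an $\infty$-operad; this is exactly the assertion that the map $\bb E_n \to \mr{Disk}(\RR^n)$ is an \emph{approximation} to the $\infty$-operad in Lurie's sense, so that the induced functor on algebras is an equivalence onto the locally constant ones. The geometric input is classical: the space of configurations of $k$ pairwise disjoint open sub-disks inside a fixed disk deformation retracts onto the corresponding space of round little disks, so the spaces of $k$-ary operations agree coherently; the work is in packaging this coherence $\infty$-operadically and verifying the approximation criterion (Higher Algebra, Section 2.3.3).

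The remaining ingredient is that the inclusion of locally constant $\mr{Disk}(\RR^n)$-algebras into all $\mr{Disk}(\RR^n)$-algebras lands inside the factorization algebras and is fully faithful onto the locally constant ones. On one hand, a locally constant $\mr{Disk}(\RR^n)$-algebra automatically satisfies Weiss descent: given a Weiss cover of an open $U$, local constancy lets one replace it by the combinatorics of a good cover, and the descent statement reduces, by a Mayer--Vietoris / {\v C}ech colimit argument, to the defining structure maps; conversely a factorization algebra that happens to be locally constant is recovered from its underlying $\mr{Disk}$-algebra. Composing the two equivalences then yields the fully faithful embedding $\mr{Alg}_{\bb E_n}(\cC) \xrightarrow{\ \sim\ } \mr{Alg}^{\mr{lc}}_{\mr{Disk}(\RR^n)}(\cC) \inj \mr{Fact}(\RR^n)$ with essential image the Betti topological factorization algebras. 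The main obstacle is the middle step: constructing the comparison map of $\infty$-operads with enough care to check the approximation property, since the underlying geometry of configuration spaces is standard, but the $\infty$-operadic bookkeeping — coherently comparing these spaces, and controlling the colimits over Weiss covers in the descent argument — carries all the technical weight.
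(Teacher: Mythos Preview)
The paper does not give its own proof of this statement: it is quoted as a black-box result from Lurie's \emph{Higher Algebra} (Theorem 5.4.5.9) and used without further argument. So there is no in-paper proof to compare against.

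That said, your outline is a faithful sketch of Lurie's actual argument: model $\bb E_n$ and the disk-based prefactorization structure as algebras over related $\infty$-operads, show the comparison map $\bb E_n \to \mr{Disk}(\RR^n)$ is an approximation in the sense of \emph{Higher Algebra} \S 2.3.3 (so restriction is an equivalence onto locally constant $\mr{Disk}$-algebras), and then identify locally constant $\mr{Disk}$-algebras with locally constant factorization algebras via Weiss descent. One small clarification: the precise statement in \emph{Higher Algebra} 5.4.5.9 is formulated directly in terms of $N(\mr{Disk}(M))^\otimes$-algebras rather than factorization algebras with a separate descent condition, so the ``remaining ingredient'' you isolate (that locally constant $\mr{Disk}$-algebras automatically satisfy Weiss codescent and hence extend to factorization algebras on all opens) is really the bridge between Lurie's formulation and the Costello--Gwilliam one used in this paper; that step is handled elsewhere in the literature (e.g.\ Ayala--Francis or \emph{Higher Algebra} \S 5.5.4), not in Theorem 5.4.5.9 itself. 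Your identification of the approximation property as the technical crux is correct.
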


One way of stating this result is to say that a Betti topological factorization algebra is completely determined by its value on the unit disk.  When we consider de Rham topological factorization algebras, it turns out that they are determined by their values on disks, but but with dependence on scale.  There is a colored operad $\mr{Disk}_n^{\mr{col}}$, with space $\RR_{>0}$ of colors corresponding to radii: this is the colored operad valued in smooth manifolds whose space of order $k$ operations $\mr{Disk}_n^{\mr{col}}(r_1, \ldots, r_k | R)$ is the space of embeddings of $k$ $n$-disks of radius $r_1, \ldots, r_k$ into a disk of radius $R$.

\begin{theorem}[{\cite[Theorem 2.23 and 2.29]{ElliottSafronov}}]
There is an equivalence of $\infty$-categories between the $C_\bullet(\mr{Disk}_n^{\mr{col}})$-algebras and de Rham topological factorization algebras.  The essential image of the $\infty$-category of $\bb E_n$ algebras consists of those de Rham topological factorization algebras where the map $\obs(B_r(0)) \to \obs(B_R(0))$ is a quasi-isomorphism, where $B_r(0) \sub B_R(0)$ are concentric $n$-disks about 0 with radii $r < R$.
\end{theorem}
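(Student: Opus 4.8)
The plan is to exhibit mutually inverse functors between the $\infty$-category of de Rham topological factorization algebras on $\RR^n$ and the $\infty$-category of $C_\bullet(\mr{Disk}_n^{\mr{col}})$-algebras, and then to identify the essential image of $\bb E_n$-algebras by tracking which colored operations become quasi-isomorphisms. First I would construct the forward functor $\mathcal F$. Given a de Rham topological factorization algebra $\obs$, set $A_r := \obs(B_r(0))$; this is the component of color $r \in \RR_{>0}$. A singular $k$-simplex in $\mr{Disk}_n^{\mr{col}}(r_1,\dots,r_m\mid R)$ is a $\Delta^k$-family of configurations of $m$ pairwise-disjoint round disks of radii $r_1,\dots,r_m$ inside $B_R(0)$; composing the factorization structure map with the translation isomorphisms $\obs(B_{r_i}(0))\xrightarrow{\sim}\obs(B_{r_i}(x_i))$ gives a $\Delta^k$-family of maps $A_{r_1}\otimes\cdots\otimes A_{r_m}\to A_R$. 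Because the configuration space directions are precisely the directions in which the individual disks translate, the variation of this family along the simplex is governed by the degree $-1$ operators $\eta(v)$, and the defining identity $\d\eta(v)=\dd v$ makes the ``transgression'' (iterated integral) of this data along $\Delta^k$ into a well-defined chain-level operad action, with the differential of the $k$-dimensional operation matching the alternating sum of its simplicial faces. A tidier way to organize and verify the coherence here is to note that $C_\bullet(\mr{Disk}_n^{\mr{col}})$ is the twisted enveloping operad of the discrete colored operad of disjoint-disk embeddings twisted by the translation action, so that a de Rham topological factorization algebra, restricted to disks, is essentially by definition an algebra over it.

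Next I would build the inverse functor $\mathcal G$ by a factorization-homology formula. For a $C_\bullet(\mr{Disk}_n^{\mr{col}})$-algebra $A_\bullet$ and an open $U\sub\RR^n$ put
\[
\obs(U)\;:=\;\colim_{\bigsqcup_i B_{r_i}(x_i)\sub U}\;A_{r_1}\otimes\cdots\otimes A_{r_m},
\]
the homotopy colimit over the category of finite disjoint unions of round disks inside $U$, with transition maps given by the colored-operad action (the translations moving the centers $x_i$ being part of the chain-level structure). Translation invariance is immediate; homotopical triviality of translations is witnessed by the infinitesimal part of the ``move one disk'' operation; and the Weiss-codescent axiom promoting this prefactorization algebra to a factorization algebra is standard once one knows that disjoint-disk covers are cofinal among Weiss covers and that the formula above computes a genuine homotopy colimit. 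This is the de Rham analogue of the Costello--Gwilliam and Lurie reconstruction arguments.

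I would then check the round-trips. The composite $\mathcal F\mathcal G$ is naturally equivalent to the identity, since restricting the colimit formula to a single disk recovers $A_r$ essentially formally. The substantive direction is $\mathcal G\mathcal F\simeq\id$: one must show that a de Rham topological $\obs$ is recovered as the colimit over disjoint disks in $U$, which follows by Weiss-covering $U$ by such disks and applying codescent to reduce to the tautological case of a single disk; here the homotopical triviality of translations is exactly what forces the relevant homotopy colimit over the contractible configuration parameters to collapse correctly, and without it this direction genuinely fails. Finally, for the essential image of $\bb E_n$-algebras: an $\bb E_n$-algebra is a $C_\bullet(\mr{Disk}_n)$-algebra, and restriction along the rescaling morphism $\mr{Disk}_n^{\mr{col}}\to\mr{Disk}_n$ (identify all colors, and rescale each configuration into the unit disk) produces exactly those $C_\bullet(\mr{Disk}_n^{\mr{col}})$-algebras in which every rescaling operation $A_r\to A_R$ is a quasi-isomorphism; conversely such an algebra rectifies to a constant diagram in the color direction and so descends to $\mr{Disk}_n$. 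Under $\mathcal G$ the rescaling operation $A_r\to A_R$ is carried to the inclusion $\obs(B_r(0))\to\obs(B_R(0))$, matching the stated condition, and by the colimit formula the requirement that these be quasi-isomorphisms for all $r<R$ propagates to local constancy, i.e. to $\obs$ being Betti topological — which by Lurie's theorem is precisely the image of $\bb E_n$.

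The step I expect to be the main obstacle is the coherence in $\mathcal F$: upgrading the strict factorization structure maps together with the single family of operators $\eta(v)$ to a genuine homotopy-coherent algebra over the chain operad $C_\bullet(\mr{Disk}_n^{\mr{col}})$, which amounts to checking that the $\eta$-transgression is compatible with operadic composition through all higher coherences. Presenting $C_\bullet(\mr{Disk}_n^{\mr{col}})$ as a twisted enveloping operad and appealing to a rectification theorem is, I believe, the cleanest route past this; the secondary difficulty is the homotopy-colimit and Weiss-codescent bookkeeping required for $\mathcal G\mathcal F\simeq\id$.
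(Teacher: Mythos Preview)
This theorem is not proved in the present paper at all: it is quoted from \cite{ElliottSafronov} (their Theorems 2.23 and 2.29) and used as background in the introduction, with no argument given here. So there is no ``paper's own proof'' to compare your proposal against.

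That said, your outline is broadly in the spirit of how such results are established: restrict a de Rham topological factorization algebra to round disks to get a colored-disk algebra, and conversely reconstruct the factorization algebra by a Weiss-style colimit over disjoint disks, with the $\eta$-operators supplying the homotopies needed to make the chain-level operad act coherently. The points you flag as delicate --- packaging the coherence of the $\eta$-transgression into a genuine $C_\bullet(\mr{Disk}_n^{\mr{col}})$-algebra structure, and the codescent argument for $\mathcal G\mathcal F\simeq\id$ --- are indeed where the real work lies, and your sketch does not actually carry them out. If you want to know how these are handled, you should consult \cite{ElliottSafronov} directly rather than this paper.
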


We can sloganize this result by saying that the difference between theories satisfying condition T1 and condition T2 is the presence or absence of \emph{dilation invariance}.

\subsection{Goals of this Paper}
The main aim of this paper is to describe a class of quantum field theories whose local observables are de Rham but not Betti topological.  The examples we construct will be defined on the product of a holomorphic Poisson manifold and an oriented real manifold, and theories of this type in dimension 4 and 5 conjecturally occur as twists of minimal supergravity theories, as we will discuss in Section \ref{sugra_section}.

The basic idea is as follows.  There is a procedure in the BV formalism for constructing classical field theories on $M$ of ``cotangent type''.  We start with a sheaf of differential graded Lie algebras $(\mc L, \d)$ on $M$.  The cotangent theory associated to $L$ has fields given by elements $(A,B)$ of the cochain complex $\mc L \oplus \mc L^![-3]$, where $L^! = \hom(L, \dens_M)$, with an action functional of the form
\[S(A,B) = \int_M \langle B, \d A + \frac 12 [A,A] \rangle.\]
The most famous example of a field theory of this form is BF theory, which is generated by this procedure from the dg Lie algebra $L = \Omega^\bullet(M, \gg)$ for a Lie algebra $\gg$.

We will define a classical field theory of cotangent type on $\CC^{2n} \times \RR^m$ starting with the cochain complex $L = \Omega^{0,\bu}(\CC^{2n}) \, \Hat{\otimes} \, \Omega^\bu(\RR^m)$.  
We then equip this complex with a Lie bracket coming from the standard holomorphic symplectic structure on $\CC^{2n} \iso T^*\CC^n$.  
Concretely, this Lie bracket is defined on the $\Omega^{0,\bu}(\CC^{2n})$ factor of $L$ by the formula
\[[f \d \ol z^{i_1} \wedge \cdots \wedge \d \ol z^{i_k}, g \d \ol z^{j_1} \wedge \cdots \wedge \d \ol z^{j_\ell}] = \{f,g\} \d \ol z^{i_1} \wedge \cdots \wedge \d \ol z^{i_k} \wedge \d \ol z^{j_1} \wedge \cdots \wedge \d \ol z^{j_\ell}.\]
We will refer to this theory as an example of \emph{Poisson BF theory}, because it has an action functional analogous to that of BF theory, with bracket associated to the holomorphic Poisson structure on $\CC^{2n}$.  

\begin{remark}
We will define theories of this type much more generally: on the product $X \times M$ of a holomorphic Poisson manifold and a real oriented manifold.  We emphasise that this construction is not purely arbitrary from a physical point of view.  We explain in Section \ref{sugra_section} that in low dimensions, Poisson BF theories appear to arise as partially topological twists of supergravity theories.
\end{remark}

This classical field theory admits a natural quantization which is manifestly translation invariant. 
We argue in Section \ref{Quantization_section} that classical Poisson BF theories can, for purely formal reasons, be extended to 1-loop exact quantum theories as long as we verify that a certain 1-loop anomaly vanishes.  We show that this occurs for all $n$, and all $m \le 6$, by relating the anomaly to a certain cocycle in the cohomology of an infinite-dimensional Lie algebra, which we demonstrate to be a coboundary in Theorem \ref{thm:quantization}. 

Poisson BF theory is not Betti topological: we can see this straight away by constructing its algebra of observables and checking that the inclusion of concentric balls is not an equivalence.  On the other hand, we can see that it is automatically quite close to being de Rham topological, without us needing to do much work.  The group $\RR^{2n} \times \RR^m$ of translations in the antiholomorphic $\ol z^1, \ldots, \ol z^{2n}$ and real $t^1, \ldots, t^m$ directions acts homotopically trivially because the action functional only depends on a complex structure on $\CC^{2n}$ (for an explicit potential, see Section \ref{translation_section}).  Less obviously, we will also show the following.

\begin{theorem}[{See Theorem \ref{de_rham_holo_action_thm}, Theorem \ref{quantum_lc_thm}}]
The Lie algebra of holomorphic translations on $\CC^{2n}$ acts on Poisson BF theory in a homotopically trivial way. 
Therefore the factorization algebra of classical observables of Poisson BF theory is de Rham topological.  When a quantization exists, the factorization algebra of quantum observables is also de Rham topological.
\end{theorem}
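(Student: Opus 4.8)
The plan is to exhibit an explicit degree $-1$ infinitesimal action of the abelian Lie algebra $\CC^{2n}$ of holomorphic translations on the factorization algebra of observables, whose differential with the classical BV differential recovers the action of translations themselves. Concretely, Poisson BF theory on $\CC^{2n}\times\RR^m$ is built from the cochain complex $\mc L = \Omega^{0,\bu}(\CC^{2n})\,\Hat\otimes\,\Omega^\bu(\RR^m)$, and the holomorphic translation $\dd_{z^i}$ acts on fields. The key observation is that contraction with $\dd_{z^i}$ — using the fact that the holomorphic symplectic form identifies $T^{1,0}\CC^{2n}$ with $\Omega^{1,0}\CC^{2n}$, so that holomorphic vector fields can be paired into the $\Omega^{0,\bu}$ factor — provides a contracting homotopy $\iota_{\dd_{z^i}}$ at the level of the elliptic complex $\mc L$, satisfying the Cartan-type relation $[\dbar, \iota_{\dd_{z^i}}] = \mc L_{\dd_{z^i}} = \dd_{z^i}$. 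I would first verify this relation at the level of the linear theory $\mc L \oplus \mc L^![-3]$: this is the content of Theorem~\ref{de_rham_holo_action_thm}, and it is essentially the observation that holomorphic translation is $\dbar$-exact on the Dolbeault complex, transported through the symplectic identification.

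The second step is to upgrade this contracting homotopy from a statement about the linear (free) theory to a statement about the interacting classical theory, i.e. to check that $\eta_i := \iota_{\dd_{z^i}}$ still squares against the \emph{full} classical BV differential $\d + \{S_{\mathrm{int}}, -\}$ to give $\dd_{z^i}$. For this one must check compatibility of the contraction operator with the Poisson bracket interaction term $\int \langle B, \tfrac12[A,A]\rangle$. Since the bracket on $\mc L$ is defined fiberwise via $\{f,g\}$ on $\CC^{2n}$ and wedge on the $\d\ol z$'s, and holomorphic translation is a symmetry of the Poisson structure (the standard symplectic form is translation invariant), the operator $\iota_{\dd_{z^i}}$ is a derivation of the bracket up to a term involving $\dd_{z^i}$ itself; tracking the signs and degrees gives precisely $\d\eta_i(v) = \dd v$ as an equation of derivations of the factorization algebra of classical observables, which by definition makes it de Rham topological. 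The holomorphic translations commute with the already-constructed homotopies for the antiholomorphic and real translations, so together they assemble into a homotopically trivial action of all of $\RR^{2n}\times\RR^m$.

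The final step, Theorem~\ref{quantum_lc_thm}, is to promote this to the quantum level whenever the quantization of Theorem~\ref{thm:quantization} exists. Here the point is that the operators $\eta_i$ are built from local, translation-invariant functionals (they are first-order in the fields, hence define genuine derivations of the Weyl/BV algebra), and the quantization is manifestly translation invariant and 1-loop exact. One must check that $\eta_i$ can be chosen to commute with the quantum BV differential up to the desired homotopy, i.e. that there is no quantum correction to the relation $\d^{\mathrm{q}}\eta_i(v) = \dd v$. By the usual obstruction-theoretic argument, such a correction would live in a local cohomology group which — being computed, as in Section~\ref{sec:quantization}, by Gelfand--Fuchs cohomology of $\ham_{2n}$ in the relevant degree — vanishes in the range where the quantization exists; alternatively one observes directly that $\eta_i$ is linear in the fields and the interaction is cubic, so no new one-loop diagrams contribute to $[\d^{\mathrm q},\eta_i]$ beyond those already accounted for classically.

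The main obstacle is the second step: verifying that the naive contracting homotopy coming from the Dolbeault complex is genuinely compatible with the \emph{interacting} BV structure, rather than only the free one. The danger is that $\iota_{\dd_{z^i}}$ fails to be a strict derivation of the Poisson-BF bracket and the correction term is not itself of the form $\dd_{z^i}$ of something controllable. Resolving this requires a careful bookkeeping of how contraction interacts with the wedge-and-Poisson bracket on $\Omega^{0,\bu}(\CC^{2n})$, using crucially that the holomorphic symplectic form is constant (translation-invariant) so that $\mathcal L_{\dd_{z^i}}$ annihilates it; this is what makes the correction term vanish or reduce to the expected expression.
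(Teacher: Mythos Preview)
Your proposal has a genuine gap in the key step: the homotopy you propose for the holomorphic translations does not exist. On the Dolbeault complex $\Omega^{0,\bullet}(\CC^{2n})$, contraction with a \emph{holomorphic} vector field $\partial_{z^i}$ is identically zero, since $(0,q)$-forms carry only $\d\bar z$ indices; there is no Cartan-type identity $[\dbar,\iota_{\partial_{z^i}}] = \partial_{z^i}$ here. Your attempt to rescue this by passing $\partial_{z^i}$ through the symplectic form only produces a $(1,0)$-form $\d w_i$, which again has no natural degree $-1$ action on $\Omega^{0,\bullet}$. In short, the mechanism that trivializes the antiholomorphic and real translations (Cartan's formula against $\dbar + \d_{\RR^m}$) simply does not apply to the holomorphic directions.

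The paper's actual argument uses a completely different mechanism: the homotopy for $\partial_{z^i}$ is the local functional $\cQ_{w_i}(A,B) = \int B \wedge p^* w_i$, built from the \emph{Hamiltonian function} $w_i$ for $\partial_{z^i}$. Crucially, this potential is annihilated by the free differential ($\dbar w_i = 0$), and the trivialization arises entirely from the \emph{interaction}: one computes $\{I, \cQ_{w_i}\}_{\mr{BV}} = -\int B \wedge \{w_i, A\}_\Pi = -\int B \wedge \partial_{z^i} A = -\cT_{\partial_{z^i}}$. So the Poisson bracket is not a complication to be checked for compatibility with a pre-existing homotopy, as you frame it; it is the very source of the homotopy. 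This is also why the result genuinely requires the Poisson structure to be nondegenerate (as the paper remarks): only then is every holomorphic translation a Hamiltonian vector field.

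For the quantum statement your outline is closer to the mark, though the paper's argument is sharper than either of the alternatives you sketch. The point is not a cohomology vanishing, nor merely that $\eta_i$ is linear: it is that the potential $\cQ_v$ depends only on the $B$-field, while the equivariant one-loop obstruction is computed by wheel diagrams whose external legs are necessarily $A$-type (exactly as in Lemma~\ref{lemma:Atype}). Hence no wheel can have a degree $-1$ background leg, and the obstruction to lifting the de Rham structure vanishes identically, with no appeal to Gelfand--Fuchs cohomology needed at this stage.
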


\begin{remark}
The Lie algebra of holomorphic translations spanned by $\{\partial_{z_1}, \ldots, \partial_{z_{2n}}\}$ is an abelian sub Lie algebra of holomorphic functions modulo constant functions $\cO^{\rm hol}(\CC^{2n}) / \CC$. 
In fact, we will see that there is a homotopically trivial action of the dg Lie algebra $\Omega^{0,\bu}(\CC^{2n})$, which resolves the Lie algebra of holomorphic functions, on Poisson BF theory.
\end{remark}

\begin{remark}
We should point out that our example of a theory with is de Rham but not Betti topological does not provide an example of a \emph{twist} where the supercharge $Q$ satisfies condition T2 but not T1.  Although we conjecture that Poisson BF theory does arise as a twist, the corresponding supercharge will only be holomorphic-topological, i.e. it will only satisfy T2 for \emph{some} directions.  In fact, for supersymmetric Yang--Mills theories we know that all twists satisfying T2 also satisfy T1!  This follows by exhaustive classification of all possible twists \cite{ESW}.  Additionally, twists of superconformal theories satisfying condition T2 also satisfy T1, as shown in \cite{ElliottSafronov}.  We speculate that in fact all twists satisfying condition T2 (making them a priori only de Rham topological) also satisfy condition T1 (making them Betti topological, or Witten type), but at the moment we don't know a general argument justifying this in all cases. 
\end{remark}

\section*{Acknowledgements}
Our main example is based on a suggestion made by Kevin Costello during a visit to the Perimeter Institute in 2017.  
Research at Perimeter Institute is supported by the Government of Canada through Industry Canada and by the Province of Ontario through the Ministry of Economic Development \& Innovation.
B.R.W. is partially supported by the National Science Foundation Award DMS-1645877 and by the National Science Foundation under Grant No. 1440140, while the author was in residence at the Mathematical Sciences Research Institute in Berkeley, California, during the semester of Spring 2020.

\section{Classical Field Theories on Holomorphic Poisson Manifolds}
Let $X$ be a holomorphic Poisson manifold of dimension $d$ with holomorphic Poisson tensor $\Pi \in \wedge^2 T^{1,0}_X$.  
The Poisson tensor equips the sheaf of holomorphic functions $\cO^{\mr{hol}}(X)$ with a Lie bracket that we denote by $\{-,-\}_\Pi$. 
This bracket extends to a bracket on the Dolbeault complex $\Omega^{0,\bu}(X)$.  Recall that a \emph{local Lie algebra} on $X$ is a sheaf of differential graded Lie algebras on $X$ where the differential and the bracket are given by differential operators (\cite[Section 6.2]{Book1}).

\begin{definition}\label{dfn:localLie1}
Let $\mc L_{\Pi}$ be the local Lie algebra on $X$ with underlying cochain complex $(\Omega^{0,\bullet}(X), \ol \dd)$, and with Lie bracket given by the formula 
\[[\alpha, \beta] = (-1)^{|\alpha||\beta| +1} \{\alpha \wedge \beta\}_\Pi .\]
That is, the unique graded antisymmetric extension of the Poisson bracket of functions on $X$ to the Dolbeault complex. 
\end{definition}

This is a local Lie algebra defined on any holomorphic Poisson manifold. There is a related local Lie algebra defined on product manifolds of the form $X \times M$, where $M$ is a smooth (not necessarily complex) manifold. This local Lie algebra will be manifestly topological (of Schwarz type) along $M$.

\begin{definition}
Let $Y = X \times M$, where $M$ is a smooth manifold and $X$ is a holomorphic Poisson manifold as above.  
Let $\mc L_{\Pi,M}$ be the local dg Lie algebra on $Y$ whose sheaf of sections is
\[
\mc L_\Pi \; \Hat{\otimes} \; \Omega^\bu(M) = \Omega^{0,\bu}(X) \; \Hat{\otimes}\; \Omega^\bu(M).
\]
The dg Lie algebra structure is obtained by tensoring the dg Lie algebra structure on $\mc L_\Pi$ with the commutative dg algebra $\Omega^\bu(M)$.  
\end{definition}

As a trivial remark, note that in the case where $M$ is a point, thought of as a zero dimensional manifold, the local Lie algebra $\mc L_{\Pi,M}$ becomes the original local Lie algebra $\cL_\Pi$. 

Using $\mc L_{\Pi,M}$ we arrive at the definition of the classical field that we will focus on for the remainder of the paper.  Let us first state the definition of a classical field theory in the Batalin--Vilkovisky (BV) formalism \cite{BatalinVilkovisky} that we will be studying.

\begin{definition}
A \emph{classical field theory} on a manifold $X$ is a sheaf $\mc E$ of cochain complexes on $X$, where the shift $\mc E[1]$ is equipped with the structure of a local Lie (or more generally, $L_\infty$) algebra, and $\mc E$ is equipped with a $(-1)$-shifted symplectic pairing $\mc E \otimes \mc E[-1] \to \dens_X$.  The \emph{cotangent theory} to a local Lie algebra $\mc L$ is the direct sum $\mc L[1] \oplus \mc L^![-2]$, where $\mc L^! = \mr{Hom}(\mc L, \dens_X)$, equipped with its canonical shifted symplectic structure.
\end{definition}

\begin{remark}
For a discussion of where this definition comes from, and what it means, we refer the reader to the extensive discussion in \cite{CostelloBook, Book2} and \cite[Section 1.1]{ESW}.
\end{remark}

The focus of this paper is the following example.
\begin{definition}
We define \emph{Poisson BF Theory} on $Y = X \times M$ to be the cotangent theory to the local Lie algebra $\mc L_{\Pi, M}$.  
We will denote this theory by $\mc E_{\Pi}$.
\end{definition}

\begin{remark}
Such theories exist in slightly more generality.
Indeed, we can consider holomorphic Poisson BF theory on the total space $Y$ of any smooth fibration whose base is equipped with a holomorphic Poisson structure. 
Since we will be only interested in the flat case in what follows, we will not return to this general situation. 
\end{remark}

Equivalently, a field theory in the BV formalism is specified by a space of fields equipped with a $(-1)$-shifted symplectic structure together with a {\em local action functional} $S$ on the fields which satisfies the classical master equation $\{S,S\}_{\mr{BV}} = 0$.
Here, the bracket $\{-,-\}_{\mr{BV}}$ denotes the BV bracket induced from the shifted symplectic structure.

Spelling this definition out, the fields consist of pairs 
\[
(A, B) \in \Omega^{0,\bullet}(X) \; \Hat{\otimes}\; \Omega^\bu(M) [1] \; \oplus \; \Omega^{d,\bullet}(X) \; \Hat{\otimes} \;\Omega^\bu(M) [d+m-2] 
\] 
where $d = \dim_{\CC}(X)$ and $m = \dim_{\RR}(M)$. 
The local functional representing the BV action is
\begin{equation}\label{eqn:action2}
S_{\Pi,M}(A,B) = \int_{Y} B \wedge \left( \ol \dd A + \frac 12 \{A \wedge A\}_\Pi\right).
\end{equation}
It will be useful in the next section to split the action functional above as $S = S_{\rm free} + I$ where $I = \frac 12 \int_Y  B \wedge \{A \wedge A\}_\Pi$. 

\begin{remark}
A special case of the above construction is when the holomorphic Poisson structure $\Pi$ is nondegenerate, so that $X$ is a holomorphic symplectic manifold. 
In this case, we will find that the theories we have introduced locally possess de Rham translation invariance.
This is not true in the case that $\Pi$ is genuinely degenerate.
\end{remark}

\begin{remark}
There is a related theory studied by Costello in \cite{CostelloM2}.  One can define a theory on $\CC^{2n} \times \RR^m$ with the same classical BV complex of fields as we have discussed here, but where the interaction is defined not using the Poisson bracket, but instead using the a quantization thereof: the \emph{Moyal star-product}.  Costello studies theories of this type where $n=m=1$, and argues that such theories arise from supergravity in a suitable twisted $\Omega$-background.
\end{remark}

\section{Quantization and Anomalies} \label{sec:quantization}

From now on we will mostly restrict attention to the cotangent theory associated to the local Lie algebra $\mc L_{\Pi,\RR^m}$ on $Y = \CC^{2n} \times \RR^m$ where $X = \CC^{2n}$ is equipped with its standard holomorphic symplectic structure.  
We will denote the local Lie algebra $\mc L_{\Pi,\RR^m}$ where $\Pi$ is the standard Poisson structure on $\CC^{2n}$ simply by $\cL_{n,m}$.  
In this section we will use the BV formalism as developed in \cite{CostelloBook, Book2} together with special results for mixed holomorphic-topological theories as in \cite{BWhol, GWcs, GWR}.

\subsection{Background on Quantization} \label{sec:renorm}

For now, let us fix the data of a classical field theory in the BV formalism. 
As we've discussed, this consists of a $(-1)$-symplectic sheaf of fields $\cE$ together with a dg Lie structure.  This can equivalently be encoded in terms of a local action functional $S_0 = S_{\rm free} + I$ satisfying the classical master equation
\[
\{S, S\}_{\mr{BV}} = 0, \text{ or equivalently } Q I + \frac 12 \{I, I\}_{\mr{BV}} = 0 .
\]
Here, $I$ is an element of the space $\oloc(\cE)$ of {\em local functionals} on $\cE$. 
This space consists of functionals on the fields which are given by Lagrangian densities, where we quotient out by total derivatives.
For a precise definition we refer to \cite[Definition 3.5.1.1]{Book2}. 

The starting point in the renormalization group approach to quantum field theory is an effective family of $\hbar$-dependent functionals $\{I[L]\}$ on the space of fields parametrized by a ``length scale" $L > 0$. Heuristically, the ``$L \to 0$" limit, although na\"ively ill-defined, represents the full quantum action of the field theory. 

In order to make sense of the quantum action, for each $L > 0$ the functional $I[L] \in \cO(\cE)[[\hbar]]$ must satisfy various conditions, which can be found in \cite[Definition 8.2.9.1]{Book2}. 
The first condition this family must satisfy is that its $\hbar \to 0$ and $L \to 0$ limit agrees with the classical interaction $I$ defining the field theory. In addition, the family of functionals must satisfy  (1) the {\em renormailzation group equation} and (2) the {\em quantum master equation}. 

The renormalization group equation says that
\[
I[L'] = W_{L<L'} (I[L])
\]
where $W_{L<L'}$ is an isomorphism called the renormalization group flow. 
It is defined as a sum over weights of graphs: the Feynman diagram expansion. 
Very roughly, the weight is obtained by placing the classical interaction at each vertex of the graph and labelling the edges by elements of $\mc E^{\otimes 2}$ called ``propagators", then contracting tensors according to the shape of the diagram. 
A family $\{I[L]\}$ of interactions satisfying the renormalization group equation is called an {\em effective family}, or {\em prequantization}. 
For more on this construction see Chapter 2 of \cite{CostelloBook}.

One of the main results of \cite{CostelloBook} is that an effective family exists for any classical field theory. In general, however, to construct the family involves some serious analysis involving the introduction of ``counterterms". 
Thankfully, for the class of theories we consider, the effective families can be understood completely explicitly, without the introduction of counterterms. 

The first step is the following easy combinatorial observation.  One can check that any effective family for a cotangent field theory necessarily only involves graphs that have genus at most one. Thus, any effective family is at most linear in its $\hbar$ expansion, referred to as ``one-loop exact" effective family.

The second step is a consequence of the following result, which states that for theories of mixed holomorphic-topological type there exists a one-loop quantization which is void of counterterms.  A classical field theory on $\CC^n \times \RR^m$ is called a theory of mixed holomorphic-topological type if there is a homotopically trivial action (see Definition \ref{dR_action_def}) of the group $\RR^n \times \RR^m$, where $\RR^n$ acts on $\CC^n$ by translations in the antiholomorphic directions.

\begin{theorem}[\cite{BWhol, GWcs, GWR}] \label{thm:oneloop}
For any mixed holomorphic-topological theory on $\CC^n \times \RR^m$, $n,m \geq 0$, there exists 
a translation-invariant effective family $\{I[L]\}$ which to first order in $\hbar$ is finite. 
\end{theorem}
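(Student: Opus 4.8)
The strategy is to reduce the statement to the known results of \cite{BWhol, GWcs, GWR} by exhibiting the analytic structure of the propagator and the Feynman integrals explicitly, so that one can verify finiteness term-by-term without counterterms. First I would fix a translation-invariant gauge-fixing operator: on $\CC^n \times \RR^m$ the natural choice is $Q^{\mathrm{GF}} = \ol\dd^* + \d^*$ built from the flat metric, where $\ol\dd^*$ is the Dolbeault codifferential on the holomorphic factor and $\d^*$ the de Rham codifferential on the topological factor. The key point is that the associated heat kernel factorizes as a tensor product of the Dolbeault heat kernel on $\CC^n$ and the de Rham heat kernel on $\RR^m$, and the propagator $P(\epsilon, L) = \int_\epsilon^L (Q^{\mathrm{GF}} \otimes 1) K_t \, \d t$ inherits this product structure. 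The explicit form of the $\CC^n$-component of the propagator — a polynomial in $\ol z$, $\d \ol z$ times a Gaussian — is exactly what makes the holomorphic-type theories finite, and this is the content extracted from \cite{BWhol, GWcs}.

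Next I would analyze the Feynman weights. Because the theory is a cotangent theory, as recalled in the preceding discussion, only graphs of genus $\le 1$ contribute, so it suffices to bound tree weights and one-loop (wheel) weights. For each such graph, the weight is an integral over the product of edge-length parameters and internal vertices of a product of propagators contracted against the interaction vertices $I = \frac12 \int_Y B \wedge \{A \wedge A\}_\Pi$. The holomorphic directions contribute Gaussian integrals with polynomial prefactors, which converge absolutely and produce no short-distance singularity as $\epsilon \to 0$; this is where the ``holomorphic gauge'' finiteness enters. The topological $\RR^m$-directions contribute the usual BF-type/Chern--Simons-type integrals, whose $\epsilon \to 0$ finiteness for $\Omega^\bullet(\RR^m)$-valued fields is precisely the statement proved in \cite{GWR} (and in the pure topological case goes back to the finiteness of perturbative Chern--Simons/BF theory). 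Combining the two factors, every graph weight has a well-defined $\epsilon \to 0$ limit, and these limits assemble into a translation-invariant effective family $\{I[L]\}$ satisfying the renormalization group equation by construction.

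The one remaining subtlety is the interaction of the two types of propagators at a shared vertex: the vertex $B \wedge \{A \wedge A\}$ mixes the holomorphic and topological form-degrees, and one must check that the polynomial growth in the $\ol z$-variables coming from the holomorphic propagators is not amplified by the $\RR^m$-integrations in a way that destroys convergence. Here the point is that each internal edge carries at most a bounded power of $\ol z$ (controlled by the finite number $n$ of holomorphic coordinates and the bounded form-degree of $\Omega^{0,\bullet}(\CC^n)$), so the Gaussian weight in the holomorphic variables always dominates; this is the technical core inherited from \cite{BWhol}. Assembling these observations, the translation-invariant one-loop effective family exists and is finite to first order in $\hbar$.

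\textbf{Expected main obstacle.} The delicate step is not the existence of the effective family — that is guaranteed in general by \cite{CostelloBook} — but showing the \emph{absence of counterterms}, i.e. that the naive $\epsilon \to 0$ limit already converges. For a generic theory this limit is divergent and must be regularized; here the claim is that the mixed holomorphic-topological structure forces cancellation. The heart of the argument, and the place where I expect to lean hardest on the cited references, is the uniform bound on the holomorphic-direction Gaussian integrals with polynomial insertions, uniformly over the topological edge-length parameters and vertex positions, so that Fubini applies and the product of two individually-convergent families gives a convergent total. Once that bound is in hand, the rest is a bookkeeping exercise in the graph combinatorics of genus-$\le 1$ diagrams.
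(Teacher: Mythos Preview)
The theorem you are attempting to prove is not proved in this paper at all: it is quoted, with the citation \cite{BWhol, GWcs, GWR} in the header, as an external result.  There is therefore no ``paper's own proof'' to compare against; the authors simply invoke the theorem as input and immediately apply it to Poisson BF theory.

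That said, your sketch is a fair high-level account of what those references actually do: fix the gauge $Q^{\mathrm{GF}} = \ol\dd^* + \d^*$, write the propagator as a heat-kernel integral with an explicit Gaussian-times-polynomial form in the holomorphic variables tensored with the de Rham heat kernel, and show that the $\epsilon \to 0$ limit of each tree and wheel weight exists by exploiting this explicit analytic structure.  One small conflation to flag: you invoke the cotangent hypothesis to reduce to genus $\le 1$, but the theorem as stated carries no cotangent assumption --- it applies to \emph{any} mixed holomorphic-topological theory and only asserts finiteness \emph{to first order in $\hbar$}.  So restricting to trees and wheels is already the content of the statement, not a consequence of any BF-type structure; the cotangent/one-loop-exact remark belongs to the surrounding discussion of Poisson BF theory (where it is used to conclude that \emph{all} counterterms vanish), not to the proof of Theorem~\ref{thm:oneloop} itself.
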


\begin{remark}
This theorem only concerns the effective family of holomorphic-topological theories to first order in $\hbar$. 
It does not imply that counterterms of order $\hbar^n$, $n > 1$ vanish. 
\end{remark} 

Since any theory of BF type is exact at one-loop, this theorem implies that a translation invariant effective family for Poisson BF theory exists on $\CC^{2n} \times \RR^m$ for any $n,m$ and that all counterterms vanish.

Once an effective family is constructed, the next condition required of a quantization in the BV formalism is the {\em quantum master equation} (henceforth abbreviated QME). Heuristically, if $I^{\mr{q}}$ denotes the na\"ive quantum action, the QME reads
\[(Q + \hbar \triangle) e^{I^{\mr{q}}/\hbar} = 0.\]
Where $\triangle$ is the ``scale zero" BV Laplacian associated to the shifted symplectic structure defining the classical BV theory. 
There are two problems with this equation: first, the ``scale zero" BV Laplacian is ill-defined as it involves contractions of distributions; second, $I^q$ is only defined by means of an effective family $\{I[L]\}$ as described above. 

To make sense of this, one introduces a regularized QME at each scale $L$, which can equivalently be written as
\[
Q I[L] + \hbar \triangle_L I[L] + \frac{1}{2} \{I[L], I[L]\}_L = 0 .
\]
Here $\triangle_L$ is the well-defined regularized scale $L$ BV Laplacian, and $\{-,-\}_L$ is a regularized version of the classical BV bracket.
The $\hbar \to 0$, $L \to 0$ limit of the above equation is precisely the classical master equation. 
An effective family $\{I[L]\}$ is a quantum field theory if it satisfies the scale $L$ QME for every $L > 0$. 
For more complete details we refer to \cite[Chapter 8]{Book2}. 

In general, not every effective theory satisfies the QME. 
The {\em scale} $L$ {\em obstruction} to satisfying the QME describes the failure of $I[L]$ to satisfy the scale $L$ QME.
Since $I[L]$ is filtered by powers of $\hbar$, so is the obstruction.
For theories of cotangent type as considered in this paper, $I[L]$ truncates at order $\hbar$, hence we only need to consider the $\hbar$-linear obstruction which we denote by $\hbar \Theta[L]$. 
The $L \to 0$ limit of $\Theta[L]$ is defined and determines a cohomological degree $+1$ {\em local functional}
\[
\Theta \define \lim_{L \to 0} \Theta [L] \in \oloc(\cE) .
\]
Moreover, $\Theta$ is closed for the classical differential $\{S,-\}_{\mr{BV}}$, hence determines a cohomology class $[\Theta] \in \mr H^1(\oloc(\cE), \{S,-\}_{\mr{BV}})$, see \cite[\S 5.11]{CostelloBook}.

In fact, for theories of cotangent type such as Poisson BF theory, we have more control over the obstruction. 
Before stating this result, we recall that the space of fields of our theory is of the form $\cE = \cL_{n,m} [1] \oplus \cL_{n,m}^! [-2]$. 
On can identify the operator $\{S, -\}_{\mr{BV}}$ acting on local functionals $\oloc(\cE)$ with the Chevalley--Eilenberg differential for the dg Lie algebra $\cE[-1] = \cL_{n,m} \ltimes \cL_{n,m}^![-3]$. 
Therefore, we will use the notation $\cloc^\bu(\cE[-1])$, which we refer to as the local Chevalley--Eilenberg cocahin complex, for the cochain complex $(\oloc(\cE), \{S,-\}_{\mr{BV}})$. 
Notice further that there is an embedding of cochain complexes $\cloc^\bu(\cL_{n,m}) \hookrightarrow \cloc^\bu(\cE[-1])$ consisting of local cochains which depend only on the $A$-fields.

\begin{lemma}\label{lemma:Atype}
Let $\hbar \Theta_{n,m}$ be the one-loop obstruction for Poisson BF theory on $\CC^{2n} \times \RR^m$ satisfying the QME. 
Then $\Theta_{n,m}$ is a degree $+1$ cocycle of the local cohomology $\cloc^\bu(\cL_{n,m})$. 
\end{lemma}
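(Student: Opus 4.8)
The plan is to show that the one-loop obstruction $\Theta_{n,m}$, which a priori lives in $\cloc^\bu(\cE[-1])$ as a cocycle of cohomological degree $+1$, actually lands in the subcomplex $\cloc^\bu(\cL_{n,m})$ of local cochains depending only on the $A$-fields. The key structural input is that Poisson BF theory is a cotangent theory: the fields are $\cE = \cL_{n,m}[1] \oplus \cL_{n,m}^![-2]$, the interaction $I = \frac12 \int_Y B \wedge \{A \wedge A\}_\Pi$ is \emph{linear} in the $B$-field, and the free part $S_{\rm free} = \int_Y B \wedge \ol\dd A$ is bilinear with exactly one $B$. Because the propagator pairs an $A$-leg with a $B$-leg (it is the inverse of the free kernel, which couples $A$ to $B$), in any Feynman diagram each internal edge carries one $A$-half and one $B$-half, and each trivalent vertex contributes the interaction vertex $B\,\{A\wedge A\}$ with exactly one $B$-leg. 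A standard counting argument for one-loop graphs (wheels, by the genus $\le 1$ observation already invoked in the excerpt) then shows that at a vertex of such a wheel the single $B$-leg must be internal — it is consumed by one of the two adjacent propagators — so no vertex of the wheel has an external $B$-leg. Hence the weight of every wheel, and therefore $\Theta[L]$ and its limit $\Theta_{n,m}$, is a functional of the external $A$-legs alone.

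The steps I would carry out, in order, are: (1) recall from the cotangent structure that $\{I[L]\}$ is one-loop exact and that the scale-$L$ obstruction $\hbar\Theta[L]$ is computed by the $\hbar$-linear part of the failure of the QME, which by Costello's analysis is a sum of weights of wheels (one-loop graphs) with the interaction vertex at each node; (2) observe that the propagator $P_{\ep < L}$ is a symmetric element of $(\cE \otimes \cE)$ valued in the $A\otimes B$ component only — this is immediate from the form of $S_{\rm free}$, whose kernel is the identity pairing between $\cL_{n,m}[1]$ and $\cL_{n,m}^![-2]$; (3) combine (1) and (2): at each of the, say, $k$ vertices of a $k$-wheel there is one $B$-leg and two $A$-legs; the $k$ internal edges each absorb one $A$ and one $B$; counting shows the $k$ $B$-legs are exactly matched with the internal edges, so all external legs are $A$-legs; conclude $\Theta_{n,m}\in \cloc^\bu(\cL_{n,m})\subseteq \cloc^\bu(\cE[-1])$; (4) finally note that $\Theta_{n,m}$ is automatically $\{S,-\}_{\mr{BV}}$-closed by the general theory (\cite[\S5.11]{CostelloBook}), and that the differential $\{S,-\}_{\mr{BV}}$ restricts to the Chevalley--Eilenberg differential of $\cL_{n,m}$ on the subcomplex $\cloc^\bu(\cL_{n,m})$, so $\Theta_{n,m}$ is a degree $+1$ cocycle there, which is exactly the assertion.

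The main obstacle — or rather the point requiring care — is step (3): one must make sure the leg-counting genuinely forces \emph{all} external legs to be of $A$-type, and that there is no contribution from lower-valence graphs or from the tadpole-type pieces of the BV Laplacian that could attach an external $B$-leg. Concretely, one should check that $\triangle_L$ contracts an $A$-leg with a $B$-leg (again by the form of the symplectic pairing), so applying it to a product of interaction vertices cannot liberate a $B$-leg either; and that the only diagrams with nonzero weight are the wheels, as already established by the cotangent/genus-$\le 1$ reasoning. Once this bookkeeping is pinned down the lemma follows formally. A clean way to package the argument is to note that the interaction $I$, viewed as a function on $\cE$, factors through the projection $\cE \to \cL_{n,m}[1] \times (\text{linear in }B)$, and that the entire Feynman-diagrammatic construction of $\{I[L]\}$ and of $\Theta[L]$ preserves the $\ZZ$-grading by $B$-degree minus number of internal $B$-contractions; tracking this grading shows the obstruction sits in $B$-degree $0$, i.e. in $\cloc^\bu(\cL_{n,m})$.
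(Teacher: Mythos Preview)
Your proposal is correct and follows essentially the same approach as the paper's proof: invoke Costello's result that the one-loop anomaly is computed by wheel diagrams, then use the combinatorial fact that the BF-type interaction $\int B\{A,A\}_\Pi$ has exactly one $B$-leg per vertex, forcing all external legs of a wheel to be $A$-legs. The paper's argument is considerably terser (it simply asserts the combinatorial step), whereas you have carefully spelled out the propagator/leg-counting and the role of the $A$--$B$ pairing; this extra detail is sound and not a deviation in strategy.
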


\begin{proof}
We use a general result of Costello \cite[Corollary 16.0.5]{CostelloWittengenus} which states that the one-loop anomaly of any BV theory reduces to the weight of wheel graphs.  From here, the proof is purely combinatorial.
The classical interaction for a theory of BF type is generally of the form $\int B [A, A]$.
Thus, for a wheel graph whose vertices are labeled by this interaction, the weight is purely a function of the $A$-field and the result follows.
\end{proof}

\subsection{Quantization of Poisson BF Theory} \label{Quantization_section}

We are now ready to state the main result of this section.

\begin{theorem} \label{thm:quantization}
Poisson BF theory on $\CC^{2n} \times \RR^m$ admits a translation invariant quantization in each of the following cases:
\begin{itemize}
\item[(a)] $m$ is even,
\item[(b)] $m = 1, 3$ or $5$.
\end{itemize}
\end{theorem}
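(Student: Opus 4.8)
The plan is to reduce the theorem, in every case, to the vanishing of the cohomology class $[\Theta_{n,m}] \in \mr{H}^1\big(\cloc^\bu(\cL_{n,m})\big)$ produced by Lemma \ref{lemma:Atype}. By Theorem \ref{thm:oneloop} and the fact that cotangent theories are one-loop exact, there is a translation-invariant prequantization $\{I[L]\}$ of Poisson BF theory with no counterterms; the only obstruction to correcting it, at order $\hbar$, into a solution of the quantum master equation at every scale is the class $[\Theta_{n,m}]$, which by Lemma \ref{lemma:Atype} already lies in the subspace of the local cohomology depending only on the $A$-fields (\cite[\S 5.11]{CostelloBook}). So it suffices to show $[\Theta_{n,m}] = 0$ in the stated cases.

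The next step is to reinterpret this class as a Gelfand--Fuchs cohomology class. Since $\Theta_{n,m}$ is translation-invariant, one may work inside the subcomplex of translation-invariant local cochains on $\cL_{n,m}$. Applying the formal-geometry (Gelfand--Kazhdan) descent techniques developed for holomorphic and mixed holomorphic--topological theories in \cite{BWhol, GWcs, GWR}, this subcomplex is quasi-isomorphic to a complex computing the reduced continuous Lie algebra cohomology of the formal Hamiltonian vector fields $\ham_{2n}$, placed in a fixed cohomological degree determined by the complex dimension $2n$ and the number $m$ of topological directions. Via \cite[Corollary 16.0.5]{CostelloWittengenus} the class $[\Theta_{n,m}]$ is represented by a sum of wheel weights, and --- just as in the proof of Lemma \ref{lemma:Atype} --- only wheels whose external legs are all $A$-fields contribute. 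I would then write this cocycle explicitly as a product of a holomorphic factor (an integral over a configuration space of points in $\CC^{2n}$ built from the $\dbar$-propagator, carrying the Chern--character-type content detected by $\mr{H}^\bu_{\mr{Lie}}(\ham_{2n})$) and a topological factor (a configuration-space integral over $\RR^m$).

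For part (a), when $m$ is even, I expect a short structural argument rather than a computation: the reflection $t \mapsto -t$ on $\RR^m$ together with the reflection symmetry of a wheel graph acts on the topological configuration-space integral, and for $m$ even the relevant orientation and form-degree signs conspire to make each wheel weight odd under this symmetry, hence zero; equivalently, the Gelfand--Fuchs degree that $\Theta_{n,m}$ would occupy has the wrong parity for $\mr{H}^\bu_{\mr{Lie}}(\ham_{2n})$ to be nonzero there. Either way the argument is insensitive to the size of $m$, which is why (a) holds for all even $m$. For part (b), with $m = 1, 3, 5$, this parity mechanism fails and one must instead exhibit an explicit primitive, using the known low-degree structure of the continuous cohomology of $\ham_{2n}$ (in the spirit of Gelfand--Kalinin--Fuks) to verify that in the precise degree where $\Theta_{n,m}$ lives the relevant group is acyclic, or that the explicit cocycle above is manifestly a coboundary.

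The main obstacle is exactly this odd case. The continuous cohomology of $\ham_{2n}$ is delicate, and it is the breakdown of the argument beyond $m = 5$ (the next odd value being $m = 7$) that accounts for the range appearing in the theorem. A secondary point that must be handled carefully, but is routine, is the bookkeeping of sign conventions and degree shifts in the descent isomorphism, so that the ``degree $+1$'' BV obstruction is matched with the correct Gelfand--Fuchs degree --- getting this exactly right is what pins down for which parity of $m$ the argument of part (a) applies.
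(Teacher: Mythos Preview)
Your overall strategy --- reduce to the one-loop obstruction and identify it with a Gelfand--Fuchs class for $\ham_{2n}$ --- matches the paper, but both halves of your argument have genuine gaps.

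For part (a) your proposed mechanism is not the one that works, and neither version of it is correct as stated. The reflection $t\mapsto -t$ on $\RR^m$ does not obviously kill the wheel integrals for even $m$; you would have to track simultaneously the orientation on $\RR^m$, the parity of the propagator, and the automorphism of the wheel, and there is no reason these conspire as you claim. Your ``equivalently'' is also false: the anomaly sits in degree $4n+m+1$ of $\mr H^\bu(\ham_{2n})$, which for even $m$ is odd, but $\mr H^\bu(\ham_{2n})$ certainly has odd-degree classes (already $\mr H^3(\sp(2n))$ contributes), so parity alone proves nothing. The paper's argument for even $m\ge 2$ is entirely different and does not pass through $\ham_{2n}$ at all: one endows $\RR^m=\CC^{m/2}$ with its standard complex structure, recognizes Poisson BF theory as a translation-invariant \emph{holomorphic} theory on $\CC^{2n+m/2}$ deformed by a holomorphic differential operator, and then uses the explicit form of the one-loop anomaly for such theories (each term contains a holomorphic derivative in every coordinate direction) to exhibit a primitive via contraction with $\partial/\partial z_r$ in a filtration spectral sequence.

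For part (b) you are missing the key reduction. The full continuous cohomology of $\ham_{2n}$ is not known, so ``the relevant group is acyclic in that degree'' is not something you can simply cite. What the paper does is introduce the weight grading on $\ham_{2n}$ (coming from the Euler vector field on $\CC^{2n}$, shifted so that the Poisson bracket has weight zero) and prove, by checking that the BV pairing and hence the BV Laplacian are weight zero, that $[\Theta_{n,m}]$ lies in the weight-zero summand $\mr H^{4n+m+1}_{(0)}(\ham_{2n})$. Only then does the Gelfand--Kalinin--Fuchs description of the non-positive weight part apply, and an elementary degree count with the generators $\Psi_i$ and $h_j$ shows this group vanishes for $m=0,1,3,5$. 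Without the weight constraint your appeal to ``known low-degree structure'' is vacuous. A smaller omission: before invoking $\ham_{2n}$ one must first observe that the anomaly, being built from the interaction $\int B\{A,A\}_\Pi$, vanishes on constant inputs and hence descends from $\cL_{n,m}$ to the local Lie algebra resolving holomorphic \emph{symplectic} vector fields; this is what makes the jet-bundle identification with $\clie^\bu(\ham_{2n})$ legitimate.
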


\begin{remark}
We speculate that the quantization exists (and is unique) for all values of $m$, though we do not prove that here, see Remark \ref{general_quantization_remark}.
\end{remark}

Since Poisson BF theory is of BF type, a consequence of Theorem \ref{thm:oneloop} is that in order to prove this result we need to show that the one-loop anomaly $\Theta_{n,m}$ is cohomologically trivial in the cases mentioned.  
This anomaly is a degree one cocycle in the local cohomology of the Lie algebra $\cL_{n,m}$.  Since we are using the standard Poisson structure on $\CC^{2n}$, the classical theory, as well as the prequantization we have just constructed, is translation invariant.
Thus, we study quantizations that are also translation invariant.  Let us summarize the steps we will follow in order to prove Theorem \ref{thm:quantization}.
\begin{enumerate}
 \item We begin by proving the theorem for even $m \ge 2$.  For such theories we can use a very abstract argument to show that the anomaly vanishes, by realizing it as a deformation of a completely holomorphic theory, then using structural results from \cite{BWhol}.  This is Lemma \ref{even_m_lemma}.
 \item Next, we will begin to study, in detail, the local cohomology of $\cL_{n,m}$.  This local Lie algebra admits a map to the Lie algebra $\mr{Vect}^{\mr{hol}}_{\mr{symp}}(\CC^{2n})$ of  holomorphic symplectic vector fields.  In Lemma \ref{symp_vf_lemma} we will show that the anomaly class $[\Theta_{n,m}]$ can be lifted to a cocycle for $\mr{Vect}^{\mr{hol}}_{\mr{symp}}(\CC^{2n})$.
 \item Now, this is useful, because we can relate the cohomology of \emph{this} Lie algebra to something more mathematically familiar.  We show in Proposition \ref{ham_prop} that the Lie algebra cohomology of $\mr{Vect}^{\mr{hol}}_{\mr{symp}}(\CC^{2n})$ is equivalent -- up to a shift -- to the cohomology of an infinite-dimensional Lie algebra $\fh_{2n}$ of formal Hamiltonian vector fields, studied in work of Gelfand, Kalinin and Fuchs.  We have therefore reduced our calculation of $[\Theta_{n,m}]$ to the calculation of a cohomology class in $\fh_{2n}$.
 \item Even better, the cohomology of $\fh_{2n}$ is naturally graded, and the non-positively graded piece is relatively well understood.  In Lemma \ref{weight_lemma} we show that $[\Theta_{n,m}]$ lives in the weight 0 summand $\mr H^{4n+m+1}_{(0)}(\fh_{2n})$.
 \item Finally, we compute this weight 0 cohomology group, and show in Lemma \ref{ham_cohomology_lemma} that it vanishes when $m = 0,1,3$ or 5, thus proving the Theorem.
\end{enumerate}

So, we will first address the case when $m$ is even and greater than zero, in which case the anomaly vanishes for structural reasons.

\begin{lemma} \label{even_m_lemma}
If $m \ge 2$ is even then the class $[\Theta_{n,m}]$ of the 1-loop anomaly vanishes.
\end{lemma}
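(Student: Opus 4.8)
The plan is to exhibit Poisson BF theory on $\CC^{2n} \times \RR^m$, for $m \ge 2$ even, as a deformation of a \emph{purely holomorphic} theory and then invoke the vanishing of one-loop anomalies for such theories established in \cite{BWhol}. The idea is that the $m$ real topological directions can, when $m$ is even, be recombined with a complex structure: writing $m = 2k$, the factor $\Omega^\bu(\RR^{2k})$ is formal, and more to the point there is a translation-invariant quasi-isomorphism relating the de Rham complex of $\RR^{2k}$ to the Dolbeault complex $\Omega^{0,\bu}(\CC^k)$ — this is the familiar observation that a $2k$-dimensional topological theory can be presented as a $k$-complex-dimensional holomorphic theory on $\CC^k$. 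Under this identification, $\cL_{n,m}$ becomes the local Lie algebra $\Omega^{0,\bu}(\CC^{2n}) \,\Hat\otimes\, \Omega^{0,\bu}(\CC^{k})$ on $\CC^{2n+k}$, i.e. the Poisson-type local Lie algebra associated to the (degenerate) holomorphic Poisson structure on $\CC^{2n} \times \CC^k$ pulled back from the standard symplectic form on $\CC^{2n}$. So Poisson BF theory on $\CC^{2n}\times\RR^{2k}$ is equivalent, as a classical BV theory, to a holomorphic Poisson BF theory on the complex manifold $\CC^{2n+k}$.

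The next step is to recall the structural input: for a \emph{holomorphic} field theory on $\CC^N$ — one built from the Dolbeault complex with a holomorphic (rather than mixed holomorphic-topological) presentation — the results of \cite{BWhol} show not only that the prequantization is finite (Theorem~\ref{thm:oneloop} with $m=0$), but that the one-loop obstruction $\Theta$ lies in a local cohomology group which, by a weight/Hodge-degree count, is concentrated in a range of form-degrees that the anomaly cannot occupy. Concretely, by Lemma~\ref{lemma:Atype} the obstruction lives in $\cloc^\bu(\cL_{n,m})$, and for a purely holomorphic theory every local functional is an integral of a density built from the fields and their holomorphic derivatives; the degree-$+1$ piece of local Chevalley--Eilenberg cohomology for a holomorphic cotangent theory of this Poisson/BF type vanishes for dimensional reasons, because a wheel-graph weight (the only contribution, again by \cite[Corollary 16.0.5]{CostelloWittengenus}) produces a functional of the $A$-field alone whose natural form-degree is too small to integrate against $\dens_{\CC^N}$ after the $\dbar$-constraints are imposed. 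I would cite the precise statement from \cite{BWhol} giving vanishing of the relevant one-loop anomaly group for holomorphic theories of this shape, rather than re-derive it.

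Assembling: the equivalence of classical theories established in the first step is compatible with the BV structures, hence identifies the one-loop obstruction $\Theta_{n,m}$ with the one-loop obstruction of the holomorphic Poisson BF theory on $\CC^{2n+k}$, and the latter is a coboundary (indeed zero in cohomology) by the second step. Therefore $[\Theta_{n,m}] = 0$ when $m = 2k \ge 2$, which is the claim.

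The main obstacle is the first step: one must check that the quasi-isomorphism $\Omega^\bu(\RR^{2k}) \simeq \Omega^{0,\bu}(\CC^k)$ can be chosen translation-invariant and compatibly with the gauge-fixing/propagator data, so that it really does intertwine the effective families and the regularized QME obstructions, not merely the underlying classical cochain complexes. This is exactly the kind of ``model-independence of the anomaly'' statement that is standard in this literature but requires care; I expect it to follow from the fact that both sides are computed from translation-invariant elliptic complexes with harmonic gauge fixing, so that the anomaly cocycle is insensitive to the presentation, but writing the comparison cleanly (and noting that the holomorphic structure one gets on $\CC^{2n}\times\CC^k$ carries only a \emph{degenerate} Poisson tensor, which is nevertheless enough to run the \cite{BWhol} machinery) is where the real content lies.
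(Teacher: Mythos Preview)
Your first step contains a genuine error: there is \emph{no} translation-invariant quasi-isomorphism $\Omega^\bullet(\RR^{2k}) \simeq \Omega^{0,\bullet}(\CC^k)$. The de Rham complex of $\RR^{2k}$ has cohomology $\CC$ in degree zero, whereas the Dolbeault complex $(\Omega^{0,\bullet}(\CC^k), \dbar)$ has cohomology $\cO^{\mr{hol}}(\CC^k)$ in degree zero, an infinite-dimensional space. Consequently $\cL_{n,m}$ is \emph{not} quasi-isomorphic to $\Omega^{0,\bullet}(\CC^{2n+k})$: their degree-zero cohomologies are $\cO^{\mr{hol}}(\CC^{2n})$ and $\cO^{\mr{hol}}(\CC^{2n+k})$ respectively. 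So the ``equivalence of classical theories'' you invoke simply does not exist. Your second step is also unsupported: \cite{BWhol} does not prove that holomorphic cotangent theories are anomaly-free; it gives a finiteness result and constrains the \emph{form} of the one-loop anomaly, but holomorphic theories can and do have nonzero anomalies, so there is no ``precise statement giving vanishing'' to cite.

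The paper's argument is related in spirit but essentially different. It uses the Hodge decomposition $\Omega^\bullet(\RR^{2r}) \cong \bigoplus_{j} \Omega^{j,\bullet}(\CC^r)$ to rewrite the full theory on $\CC^{2n}\times\RR^{2r}$ as a genuinely holomorphic theory on $\CC^{2n+r}$ (with the larger field content $\bigoplus_j \Omega^{0,\bullet}(\CC^{2n})\,\Hat\otimes\,\Omega^{j,\bullet}(\CC^r)$) \emph{deformed} by the holomorphic differential operator $\partial_{\CC^r}$. The structural input from \cite{BWhol} is then only that the anomaly of a translation-invariant holomorphic theory is a sum of terms each containing at least one holomorphic derivative from every coordinate direction --- in particular a $\partial_{\CC^r}$. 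A short spectral-sequence argument, filtering by the holomorphic form-degree on $\CC^r$, then shows that any such cocycle is rendered exact by the extra $\partial_{\CC^r}$ differential coming from the deformation. The crux is not that the holomorphic theory is anomaly-free, but that the anomaly's mandatory $\partial_{\CC^r}$ factor makes it a coboundary once the $\partial_{\CC^r}$ piece of the BV differential is taken into account.
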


\begin{proof}
Suppose $m \ge 2$ is even, and let $r = \frac m2$. We can equip $\RR^m = \CC^{r}$ with its standard complex structure.
Notice that we can decompose the fields as follows:
\begin{align*}
A  = \sum_{k=0}^r A_k & \ \text{ in }\  \bigoplus_{k=0}^r \Omega^{0,\bu} (\CC^{2n}) \, \Hat{\otimes} \, \Omega^{k, \bu} (\CC^{r}) [1] \\
B = \sum_{k=0}^r B_k & \ \text{ in }\  \bigoplus_{k=0}^r \Omega^{2n,\bu} (\CC^{2n}) \, \Hat{\otimes} \, \Omega^{k, \bu} (\CC^{r}) [2n+m-2] .
\end{align*}
Using this decomposition, the action is of the form
\begin{equation}\label{holdef}
\int_{\CC^{2n} \times \CC^{r}} B \wedge \bigg( \left(\dbar + \dbar_{\CC^{r}}\right) A + \frac12 \{A , A\}_\Pi \bigg) \; + \; \sum_{k =1}^{{r}} \int_{\CC^{2n} \times \CC^{r}} B_k \wedge \partial_{\CC^{r}} \left(A_{r - k - 1} \right)
\end{equation}

Here, $\dbar$ continues to denote the Dolbeault operator on $\CC^{2n}$ and now $\partial_{\CC^{r}}, \dbar_{\CC^{r}}$ are the holomorphic and anti-holomorphic Dolbeault operators on $\CC^{r}$. 
Forgetting about the second term, the first term describes a purely holomorphic theory in the sense of \cite{BWhol}. 
(In fact, this describes Poisson BF theory for the holomorphic Poisson manifold $\CC^{2n+{r}}$ where we choose the trivial Poisson bivector on the last ${r}$ coordinates.)

Furthermore, the second term in (\ref{holdef}) involves only {\em holomorphic differential operators} with respect to our fixed complex structure. 
Thus we can view Poisson BF theory as a (translation invariant) holomorphic field theory on $\CC^{2n + r}$. As a consequence of \cite[Proposition 4.4]{BWhol} (see \cite[Lemma 7.2.7]{BCOV1} or \cite[Lemma B.1]{GwilliamWilliamsKM}, for a similar calculation) the anomaly $\Theta_{n,m}$ for such a theory is necessarily a sum of local functionals of the form
\begin{equation}\label{holanomaly}
\int_{\CC^{2n+r}} \left(D_{i_0} A\right) \partial \left(D_{i_1} A\right) \cdots \partial \left(D_{i_{2n}} A \right) \partial_{\CC^{r}} \left(D_{i_{2n+1}} A\right) \cdots \partial_{\CC^{r}} \left(D_{i_{2n+r}} A\right) .
\end{equation}
Here, $\partial$ denotes the holomorphic de Rham differential on $\CC^{2n}$ and the $D_{i_j}$'s are all translation invariant holomorphic differential operators. 

Each of these elements are considered as cochains in the cochain complex of local functionals equipped with the classical BV differential $\{S,-\}_{\mr{BV}}$.
In this case, the BV differential is of the form
\begin{equation}\label{holbv}
\{S, -\}_{\mr{BV}} = \dbar + \dbar_{\CC^{r}} + \d_{\Pi} + \partial_{\CC^{r}}
\end{equation}
Here, $\d_{\Pi}$ is the Chevalley--Eilenberg differential for the Poisson bracket $\{-,-\}_\Pi$.

We consider the spectral sequence converging to the local cohomology of $\cL_{n,m}$ induced by the filtration on the $A$-fields
\[
F^k = \Omega^{0,\bu} (\CC^{2n}) \, \Hat{\otimes} \, \Omega^{\geq k, \bu} (\CC^{r}) [1] .
\]
The $E_1$-page is given by the cohomology with respect to the first three terms in (\ref{holbv}). 

Assuming a class $\Theta_{n,m}$ which is a sum of functionals of the form (\ref{holanomaly}) survives to this page, we will show that it is rendered exact by the next term in the spectral sequence. 
By the formula, we see that at least one holomorphic derivative from each of the directions in $\CC^{2n + r}$ necessarily appear. 
For simplicity, consider the direction $z_{r}$.
By this observation, we can write
\[
\Theta_{n,m} = \frac{\partial}{\partial z_{{r}}} \Theta_{n,m}'
\]
for some local functional $\Theta_{n,m}'$ which is also of degree $+1$.
Finally, notice that the class $\iota_{\frac{\partial}{\partial z_{r}}} \Theta_{n,m}'$ satisfies
\[
\partial_{\CC^{r}} \left(\iota_{\frac{\partial}{\partial z_{r}}} \Theta_{n,m}'\right) = \Theta_{n,m}
\]
Here, if $X$ is a holomorphic vector field, the operator $\iota_{X}$ denotes the operator induced from the contraction $A \mapsto \iota_X A$. 
This shows that on the $E_\infty$-page all such functionals $\Theta_{n,m}$ become trivial and the result follows.
\end{proof}

\subsection{The Anomaly as a Gelfand--Fuchs Cocycle}

For the remaining cases of Theorem \ref{thm:quantization}, we must understand the cohomology of local functionals more explicitly. 

To start, we interpret the one-loop anomaly $\Theta_{n,m}$ as a local cocycle in a slightly different local Lie algebra. 

Consider the sheaf of Lie algebras $\cO^{\rm hol} (\CC^{2n})$ which is equipped with the Poisson bracket coming from the standard symplectic structure on $\CC^{2n}$. 
There is a short exact sequence of sheaves of Lie algebras
\begin{equation}\label{eqn:ses}
0 \to \ul{\CC} \to \cO^{\rm hol}(\CC^{2n}) \to {\rm Vect}^{\rm hol}_{\rm symp}(\CC^{2n}) \to 0
\end{equation}
where ${\rm Vect}^{\rm hol}_{\rm symp}(\CC^{2n})$ denotes the sheaf of holomorphic vector fields preserving the symplectic structure on $\CC^{2n}$. 
Here the right-most map associates to a function its Hamiltonian vector field. 

Of course, none of the sheaves in the above short exact sequence are local Lie algebras.
For $\cO^{\rm hol}(\CC^{2n})$ we know that $\Omega^{0,\bu}(\CC^{2n})$ provides a free resolution and hence a presentation as a local Lie algebra.
Similarly, for ${\rm Vect}^{\rm hol}_{\rm symp}(\CC^{2n})$ we have the following presentation as a local Lie algebra
\[
\begin{array}{ccccc}
& & \ul{0} & & \ul{1} \\
\cT_{\rm symp}(\CC^{2n}) & = & \Omega^{0,\bu}(\CC^{2n} , {\rm T}) & \xto{L_{(-)} \omega} & \Omega^{\geq 2, \bu} (\CC^{2n}) .
\end{array} 
\]
Here, ${\rm T}$ denotes the holomorphic tangent bundle and $\Omega^{0,\bu}(\CC^{2n}, {\rm T})$ is its Dolbeault resolution equipped with the $\dbar$ operator. 
Also $\Omega^{\geq 2, \bu}(\CC^{2n})$ is a Dolbeault model for the sheaf of closed two-forms, it is equipped with the differential $\dbar + \partial$. 
Finally, the indicated differential sends a vector field $X$ to the Dolbeault form $L_X \omega$, the Lie derivative of $\omega$ by $X$.  

The sheaf $\cT_{\rm symp}(\CC^{2n})$ becomes a local Lie algebra on $\CC^{2n}$ utilizing the standard Lie bracket of vector fields together with the natural action of vector fields on Dolbeault forms. 
Moreover, the process of taking a Hamiltonian vector field determines a map of local Lie algebras $\Omega^{0,\bu}(\CC^{2n}) \to \cT_{\rm symp}(\CC^{2n})$, providing a resolved version of the map in (\ref{eqn:ses}). 
Upon tensoring with de Rham forms on $\RR^m$ we obtain a map of local Lie algebras
\begin{equation}\label{eqn:resolved}
\cL_{n,m} \to \cT_{\rm symp}(\CC^{2n}) \; \Hat{\otimes} \; \Omega^{\bu}(\RR^m) .
\end{equation}

By Lemma \ref{lemma:Atype}, we a priori only know that the anomaly lives in $\cloc^\bu(\cL_{n,m})$. 
In fact, we have the following.

\begin{lemma} \label{symp_vf_lemma}
The anomaly $\Theta_{n,m}$ lifts to a local cocycle of degree $+1$ for $\cT_{\rm symp}(\CC^{2n}) \; \Hat{\otimes} \; \Omega^{\bu}(\RR^m)$ along the cochain map
\[
\cloc^\bu\left(\cT_{\rm symp}(\CC^{2n}) \; \Hat{\otimes} \; \Omega^{\bu}(\RR^m)\right) \to \cloc^\bu(\cL_{n,m})
\]
induced from (\ref{eqn:resolved}). 
\end{lemma}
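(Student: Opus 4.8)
The plan is to show that the explicit local functional $\Theta_{n,m} \in \cloc^\bu(\cL_{n,m})$ produced by the one-loop computation (Lemma \ref{lemma:Atype}) is in the image of the pullback map $\cloc^\bu(\cT_{\rm symp}(\CC^{2n}) \,\Hat\otimes\, \Omega^\bu(\RR^m)) \to \cloc^\bu(\cL_{n,m})$, or at least that its cohomology class is. The key structural fact is that the map of local Lie algebras (\ref{eqn:resolved}) is the Dolbeault/de Rham resolution of the surjection of sheaves (\ref{eqn:ses}), whose kernel is the \emph{constant} sheaf $\ul\CC$. Thus the homotopy fiber of (\ref{eqn:resolved}) is, up to quasi-isomorphism, the abelian local Lie algebra $\ul\CC \,\Hat\otimes\, \Omega^\bu(\RR^m)$ placed in the $A$-field sector — concretely, the central kernel of the map $\Omega^{0,\bu}(\CC^{2n}) \to \cT_{\rm symp}(\CC^{2n})$ together with the constant $0$-forms. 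The obstruction to lifting a cocycle along a map of dg Lie algebras lives in the cohomology of this fiber (with its induced trivial module structure), so what I must rule out is that $[\Theta_{n,m}]$ pairs nontrivially with that central piece.

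First I would make precise the filtration argument. The functional $\Theta_{n,m}$ is, by Lemma \ref{lemma:Atype} together with the wheel-graph reduction (Costello \cite[Corollary 16.0.5]{CostelloWittengenus}, as in the proof of Lemma \ref{even_m_lemma}), a sum of local functionals built out of the $A$-field in which the Poisson bracket $\{-,-\}_\Pi$ enters only through the combinatorics of the wheel: each such term is a density of the form $(D_{i_0}A)\,\partial(D_{i_1}A)\cdots$ contracted via $\Pi$. The crucial observation is that every such term depends on $A$ only through \emph{derivatives} of $A$ — at least one holomorphic derivative in each $\CC$-direction appears, exactly as in (\ref{holanomaly}) — and the subspace of $\cL_{n,m}$ on which the map to $\cT_{\rm symp}(\CC^{2n})\,\Hat\otimes\,\Omega^\bu(\RR^m)$ fails to be injective is spanned by the constant functions (times de Rham forms on $\RR^m$), on which all holomorphic derivatives vanish. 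Hence $\Theta_{n,m}$, viewed as a functional, factors through the quotient, i.e.\ it is pulled back from a functional on $\cT_{\rm symp}(\CC^{2n})\,\Hat\otimes\,\Omega^\bu(\RR^m)$ at the level of the underlying graded objects; I then need to check that this pulled-back functional is annihilated by the Chevalley--Eilenberg differential of the target, which follows because the differential on $\cloc^\bu(\cT_{\rm symp})$ is compatible with the cochain map (\ref{eqn:resolved}) and the map is surjective on sections, so closedness of $\Theta_{n,m}$ downstairs forces closedness upstairs modulo the kernel — and the kernel contributes nothing since $\Theta_{n,m}$ is independent of the constant functions.

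The main obstacle I anticipate is the last sentence of the previous paragraph: one must ensure that the differential, not merely the functional itself, descends along the resolution. The subtlety is that $\cT_{\rm symp}(\CC^{2n})$ is a \emph{two-term} complex $\Omega^{0,\bu}(\CC^{2n},\mr T) \xrightarrow{L_{(-)}\omega} \Omega^{\geq 2,\bu}(\CC^{2n})$, so there is a genuinely new field component (the closed-$2$-form in degree $1$) that was not present in $\cL_{n,m}$, and I must check that extending $\Theta_{n,m}$ by zero on that new component yields a cocycle — equivalently, that $\{S,-\}_{\mr{BV}}$ applied to the extension has no component landing in functionals that depend on the degree-$1$ field. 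This is a finite computation: the only way such a component could arise is through the differential $L_{(-)}\omega$ pairing a $\cT$-field with the $2$-form slot, and since $\Theta_{n,m}$ is built from $A$-derivatives that are symplectic-invariant (they are the weights of wheels contracted with $\Pi$, hence manifestly $\mr{Vect}^{\rm hol}_{\rm symp}$-natural), the obstruction vanishes. I would phrase this cleanly by noting that the cochain map (\ref{eqn:resolved}) is a quasi-isomorphism onto its image modulo the acyclic central extension by $\ul\CC\,\Hat\otimes\,\Omega^\bu(\RR^m)$, and invoking the long exact sequence relating $\cloc^\bu$ of the three terms; the connecting map sends $[\Theta_{n,m}]$ to a class detected by the constant functions, which is zero by the derivative-dependence established above.
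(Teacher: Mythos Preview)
Your proposal identifies the correct core fact---that the anomaly vanishes whenever any $A$-input is a constant function, because the Poisson bracket with a constant is zero---but your route to the conclusion is more circuitous than the paper's, and the place where you sense trouble (the two-term structure of $\cT_{\rm symp}$ and whether the lift is closed for its Chevalley--Eilenberg differential) is precisely the step your argument leaves vague.

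The paper's proof avoids this difficulty by lifting not the anomaly cocycle $\Theta_{n,m}$ directly, but the \emph{classical interaction} $I = \frac12\int B\wedge\{A,A\}_\Pi$. The observation is that the contragradient action of $\cL_{n,m}$ on $\cL^!_{n,m}$ annihilates constants, so $\cL^!_{n,m}[-3]$ is already a module for $\cT_{n,m} \define \cT_{\rm symp}(\CC^{2n})\,\Hat\otimes\,\Omega^\bu(\RR^m)$, and $I$ lifts to a cocycle in $\cloc^\bu(\cT_{n,m}\ltimes\cL^!_{n,m}[-3])$. Once the interaction lives upstairs, the entire one-loop computation---propagators, wheel weights, renormalization group flow---takes place there, and the wheel-graph argument of Lemma~\ref{lemma:Atype} applied verbatim shows the resulting obstruction depends only on $\cT_{n,m}$. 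This makes closedness for the $\cT_{n,m}$-differential automatic: it is the obstruction cocycle of a genuine (lifted) classical theory, not an ad hoc extension by zero.

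By contrast, your approach fixes $\Theta_{n,m}$ downstairs and tries to lift it as a cocycle after the fact, which forces you to confront the degree-$1$ summand $\Omega^{\geq 2,\bu}(\CC^{2n})$ explicitly. Your justifications there (``symplectic-invariant'', ``manifestly natural'', the long exact sequence sketch) are not wrong in spirit but are not arguments; in particular, the invocation of the explicit form~(\ref{holanomaly}) is misplaced, since that formula is derived only in the even-$m$ holomorphic setting of Lemma~\ref{even_m_lemma}. The derivative-dependence you want follows instead directly from the fact that $I$ involves $\{-,-\}_\Pi$, which is exactly the paper's starting point. In short: lift $I$, not $\Theta_{n,m}$, and the closedness issue dissolves.
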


\begin{proof}
The result actually follows from a similar statement at the level of the classical action. 
For this proof, we denote by $\cT_{n,m}$ the local Lie algebra $\cT_{\rm symp}(\CC^{2n}) \; \Hat{\otimes} \; \Omega^{\bu}(\RR^m)$.
First, we note that the natural contragradient action of an element of $\cL_{n,m}$ on $\cL^!_{n,m}$ is zero if the element is a constant function. 
Thus $\cL^!_{n,m} [-3]$ descends to a module for $\cT_{n,m}$. 
The map (\ref{eqn:resolved}) determines a cochain map
\[
\cloc^\bu \left(\cT_{n,m} \ltimes \cL_{n,m}^! [-3] \right) \to \cloc^\bu \left(\cL_{n,m} \ltimes \cL_{n,m}^![-3]\right) .
\]

Recall, the classical action decomposes as $S = S_{\rm free} + I$ where $I$ is a local cocycle in $\cloc^\bu (\cL_{n,m} \ltimes \cL_{n,m}^![-3])$. 
Since $I$ is identically zero if one of the $A$-inputs is a constant function, it lifts to a local cocycle in $\cloc^\bu \left(\cT_{n,m} \ltimes \cL_{n,m}^! [-3] \right)$. 

Proceeding just as in the proof of Lemma \ref{lemma:Atype}, we see that the one-loop anomaly only depends on $\cT_{n,m}$ and the result follows. 
\end{proof}

We will utilize a description from \cite{BWgf} of the local cohomology of 
\[
\cT_{\rm symp}(\CC^{2n}) \; \Hat{\otimes} \; \Omega^{\bu}(\RR^m)
\]
in terms of the ordinary Lie algebra cohomology of the fiber of this local Lie algebra at zero. 
For this, we first introduce the following Lie algebra of formal Hamiltonian vector fields on the formal (holomorphic) $2n$-disk.

\begin{definition}
The Lie algebra of formal Hamiltonian vector fields $\ham_{2n}$ has underlying $\CC$-vector space
\[
\ham_{2n} = \CC[[p_1, \ldots p_n, q_1, \ldots, q_n]] \; / \; \CC 
\]
On linear elements, the bracket is given by the formula $[p_i, q_j] = \delta_{ij}$. 
\end{definition}
Note that $\ham_{2n}$ is equivalent to the subalgebra of the Lie algebra of all formal vector fields on the $2n$-disk which preserve the standard symplectic structure.

We can now state the concrete relationship between local cohomology and the Lie algebra cohomology of Hamiltonian vector fields. 
Given any symplectic vector field $X$ on $\CC^{2n}$ we can take its Taylor expansion at zero to get a formal Hamiltonian vector field $j_0 (X) \in \fh_{2n}$. 
At the level of cochains this determines a map $j_0^* \colon \clie^\bu(\fh_{2n}) \to \clie^\bu(J (\cT^{\rm symp}(\CC^{2n})))$. 
Here $J(-)$ denotes the bundle of $\infty$-jets.
In \cite{BWgf} it is shown via certain ``descent equations" how to extend this map to the local cohomology to give a cochain map
\[
\clie^\bu(\fh_{2n}) [4n] \to \cloc^\bu(\cT^{\rm sym}(\CC^{2n})) .
\]
In fact, it is shown that this map is a quasi-isomorphism. 
The shift down by $4n$ is related to the {\em real} dimension of $\CC^{2n}$. 

Upon tensoring with the de Rham complex on $\RR^{m}$, which of course does not contribute to cohomology, we obtain the following analogous result.
The proof follows the outline of the proof of the main result of\cite{BWgf} closely, where the case of {\em all} holomorphic vector fields is considered. The modifications in the Hamiltonian case are specified in the proof below.

\begin{prop} \label{ham_prop}
There is a quasi-isomorphism 
\[
\cred^\bu(\fh_{2n}) [4n +m] \xto{\iso} \cloc^\bu \left( \cT_{\rm symp}(\CC^{2n}) \, \Hat{\otimes}\,\Omega^\bu(\RR^m) \right)
\]
Therefore the anomaly cocycle $\Theta$ can be understood as a cocycle in 
\[
\clie^{4n+m+1}(\fh_{2n}) .
\]
\end{prop}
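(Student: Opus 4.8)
The plan is to reduce this to the already-proven main theorem of \cite{BWgf} by comparing the two local Lie algebras $\cT^{\rm sym}(\CC^{2n})$ and $\cT^{\rm sym}(\CC^{2n}) \,\Hat\otimes\, \Omega^\bu(\RR^m)$, and then to identify the fiberwise Lie algebra with $\fh_{2n}$. First I would recall the statement established in \cite{BWgf}: the descent map $\clie^\bu(\fg) [4n] \xto{\iso} \cloc^\bu(\cT^{\rm sym}(\CC^{2n}))$ is a quasi-isomorphism, where $\fg$ is the fiber at zero of the $\infty$-jet bundle of the relevant local Lie algebra, which in the Hamiltonian case is $\fh_{2n}$. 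The argument there proceeds by (i) identifying $\cloc^\bu$ of a local Lie algebra built from a $\dbar$-resolution with the Chevalley--Eilenberg cochains of the $\infty$-jet bundle, using the flat holomorphic connection on jets and a Dolbeault-to-de Rham comparison; and (ii) running a spectral sequence (the ``descent equations'') that collapses the de Rham-along-$\CC^{2n}$ directions, leaving the $\clie^\bu$ of the fiber, with the shift $[4n]$ recording the real dimension $2\cdot 2n$. The key observation is that every step in that argument is $\Omega^\bu(\RR^m)$-linear: tensoring a local Lie algebra on $\CC^{2n}$ with the de Rham complex of $\RR^m$ corresponds, on jets, to tensoring the fiber with the (contractible, cohomologically trivial) de Rham complex, and on local functionals to integrating additionally over $\RR^m$. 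Since $\Omega^\bu(\RR^m)$ is formal with cohomology $\CC$ concentrated in degree zero, it contributes a shift by $m = \dim_\RR \RR^m$ and nothing else.

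Concretely, the steps I would carry out are as follows. Step 1: observe that $\cT_{\rm symp}(\CC^{2n}) \,\Hat\otimes\, \Omega^\bu(\RR^m)$ is again a local Lie algebra whose underlying complex is a $\dbar\!+\!d_{\rm dR}$-resolution, and that its $\infty$-jet bundle at a point of $\CC^{2n}\times\RR^m$ has fiber the $L_\infty$ tensor product $\fh_{2n} \,\Hat\otimes\, \Omega^\bu(D^m)_0$ of formal Hamiltonian vector fields with formal de Rham forms on the $m$-disk; since formal de Rham forms are quasi-isomorphic to $\CC$ (Poincar\'e lemma), this fiber is $L_\infty$-quasi-isomorphic to $\fh_{2n}$, and hence has the same reduced Chevalley--Eilenberg cohomology. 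Step 2: run the same descent/jet-expansion argument as in \cite{BWgf}, now over $\CC^{2n}\times\RR^m$, to obtain a cochain map $\cred^\bu(\fh_{2n})[4n+m] \to \cloc^\bu(\cT_{\rm symp}(\CC^{2n}) \,\Hat\otimes\, \Omega^\bu(\RR^m))$; the only modification is bookkeeping of the extra $m$ real de Rham directions, which account for the additional shift by $m$ and, via the Poincar\'e lemma on $\RR^m$, do not alter cohomology. Step 3: conclude that this map is a quasi-isomorphism, exactly as in the holomorphic case, because the spectral sequence comparison is unaffected by the $\Omega^\bu(\RR^m)$-factor. Step 4: combine with Lemma \ref{symp_vf_lemma}, which says $\Theta_{n,m}$ lifts to a degree $+1$ local cocycle for $\cT_{\rm symp}(\CC^{2n}) \,\Hat\otimes\, \Omega^\bu(\RR^m)$; transporting across the quasi-isomorphism places $[\Theta_{n,m}]$ in $\mr H^{4n+m+1}\bigl(\cred^\bu(\fh_{2n})\bigr) = \clie^{4n+m+1}(\fh_{2n})$ (reduced and unreduced cohomology agree in positive degrees).

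The main obstacle is Step 2: one must check that the descent-equation argument of \cite{BWgf} genuinely carries over verbatim after tensoring with $\Omega^\bu(\RR^m)$, in particular that the flat connection used to trivialize the jet bundle along $\CC^{2n}$ extends compatibly over the $\RR^m$ factor and that the resulting double complex (Dolbeault along $\CC^{2n}$, de Rham along $\RR^m$, versus jet-coordinate weight) degenerates as claimed. This is more a matter of careful assembly than of new ideas, since the $\RR^m$-directions are topological of Schwarz type and the de Rham complex is formal; nonetheless, making the identification of the fiber Lie algebra with $\fh_{2n}$ precise (rather than merely up to the contractible de Rham factor) and tracking the degree shift $4n+m$ correctly requires attention. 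Everything else reduces to invoking the already-established result and the elementary fact that $\Omega^\bu(\RR^m)$ is cohomologically trivial.
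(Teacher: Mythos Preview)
Your proposal is correct and follows essentially the same approach as the paper: both reduce local cohomology to Chevalley--Eilenberg cohomology of the jet-fiber $\fh_{2n}$ via the formal geometry / descent argument of \cite{BWgf}, and both handle the $\RR^m$-factor by noting that the de Rham complex is cohomologically trivial, contributing only the shift by $m$. The paper packages this via the Gelfand--Kazhdan pair $(\fh_{2n}\oplus\fw_m,\,\Sp(2n)\times\GL(m))$ on the product, whereas you phrase it as tensoring the known $\CC^{2n}$-argument with $\Omega^\bu(\RR^m)$ and invoking the Poincar\'e lemma; these are two descriptions of the same computation.
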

\begin{proof}
Consider the Gelfand--Kazhdan pair $\left(\fh_{2n} \oplus \fw_m , \Sp(2n) \times \GL(m) \right)$, where $\fw_m$ denotes the Lie algebra of formal vector fields on $\RR^m$.  Given any module $M$ for this pair, techniques of formal geometry define a $D$-module $\cM$ on any product manifold of the form $X \times M$ where $X$ is a holomorphic symplectic manifold and $M$ is a smooth $m$-manifold. 
For the module $\cred^\bu(\fh_{2n})$ this $D$-module is equivalent to $\cred^\bu\left(J \cT_{\rm symp}(X)\right)$. 
Furthermore, since $\cred^\bu(\fh_{2n})$ is acted upon by the Gelfand--Kazhdan pair in a homotopically trivial way, this $D$-module is trivial and hence its de Rham complex is quasi-isomorphic to
\[
\Omega^{\bu} (X \times M) \otimes \cred^\bu(\fh_{2n}) .
\]
Finally, by \cite[Lemma 3.5.4.1]{Book2} we know that this de Rham complex is equivalent to the shift of the local cohomology $\cloc^\bu\left(\cT_{\rm symp}(X) \Hat{\otimes} \Omega^\bu(\RR^n) \right)[-4n-m]$. 
The result follows by applying this argument to the special case $X = \CC^{2n}$ and $M = \RR^m$.
\end{proof}

We have whittled down our understanding of the anomaly $\Theta_{n,m}$ to a description of the Lie algebra cohomology of Hamiltonian vector fields on the formal disk.
Unfortunately, a complete description of the cohomology of the Lie algebra $\ham_{2n}$ is unknown. Partial results have appeared in the works \cite{GKF, GuilleminShnider, Perchik, Kontsevich}. The first step in order to obtain such a partial description is to take advantage of a natural grading on $\fh_{2n}$ which we will refer to as the {\em weight}.  In fact, there is a bigrading on $\fh_{2n}$ in which the element $p_i^{k+1} q_j^{\ell+1}$ is homogenous of weight $(k, \ell)$.  The cohomology is concentrated in the diagonal piece of this bigrading, so it's enough to consider only the diagonal $\ZZ$-grading in which $p_i^{k+1} q_j^{\ell+1}$ has weight $k + \ell$.  
The bracket respects this grading and hence the Lie algebra cohomology admits a decomposition
\[
\mr H^\bu(\fh_{2n}) = \bigoplus_{j \in 2\ZZ} \mr H^\bu_{(j)} (\fh_{2n}) 
\]
where $H^\bu_{(j)} (\fh_{2n})$ is the cohomology of the weight $j$ subcomplex of the Lie algebra cohomology.

\begin{remark}
Our grading convention differs slightly from the one used in the reference \cite{GKF}, but agrees with the convention in \cite{Perchik}.
For instance, the weight $k$ cohomology, as defined in \cite{GKF}, agrees with $\mr H^\bu_{(2k)}(\fh_{2n})$ in our convention. 
\end{remark}

We return to the characterization of the anomaly cocycle.
The class $[\Theta_{n,m}]$ of the anomaly decomposes according to this weight decomposition of the Lie algebra cohomology group $\mr H^{4n+m+1}(\fh_{2n})$. 
In fact, we have the following.

\begin{lemma} \label{weight_lemma}
The anomaly class $[\Theta_{n,m}]$ lies in the weight zero summand $\mr H_{(0)}^{4n+m+1}(\fh_{2n})$ of the Lie algebra cohomology. 
\end{lemma}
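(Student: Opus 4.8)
The plan is to pin down the weight of $[\Theta_{n,m}]$ by tracking a natural grading that is visible already at the level of the local action functional, and which survives all the identifications made in Lemma \ref{symp_vf_lemma} and Proposition \ref{ham_prop}. The key point is that the local Lie algebra $\cL_{n,m}$, and hence the local cochain complex in which the anomaly lives, carries a $\CC^\times$-action: dilation on $\CC^{2n}$, $z_i \mapsto \lambda z_i$, combined with the induced scaling on the symplectic form. Under the standard symplectic form $\omega = \sum \d p_i \wedge \d q_i$ on $\CC^{2n}$, the Poisson bracket $\{f,g\}$ lowers polynomial degree by $2$; correspondingly the linear element $p_i^{k+1}q_j^{\ell+1}$ of $\fh_{2n}$ has weight $k + \ell$ in exactly the normalization fixed above. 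So the assertion of the lemma is that the anomaly cocycle, viewed in $\clie^{4n+m+1}(\fh_{2n})$ via the quasi-isomorphism of Proposition \ref{ham_prop}, is homogeneous of weight zero for this grading.

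First I would record that the classical interaction $I = \frac12\int_Y B \wedge \{A\wedge A\}_\Pi$ is itself weight-homogeneous: it pairs $B$, which (as a section of $\cL^!_{n,m}[-3] = \Omega^{2n,\bu}(\CC^{2n})\,\Hat\otimes\,\Omega^\bu(\RR^m)[\cdots]$) carries a twist by the canonical bundle and hence by $\omega^{\wedge n}$, against two copies of $A$ with one factor of $\{-,-\}_\Pi$. Computing the net $\CC^\times$-weight: each bracket contributes $-2$ in the $p,q$-scaling, and the top holomorphic form $\d^{2n}z$ in $B$ contributes $+2n$, against which the measure on $\CC^{2n}$ contributes $-2n$; the upshot is that $I$ has weight $0$ — which is forced anyway, since $I$ is a summand of the action $S$, and $S$ must be scale-invariant for the cotangent pairing (valued in densities) to be preserved. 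The one-loop anomaly $\Theta_{n,m}$ is built, by Costello's wheel-graph formula (as used in Lemmas \ref{lemma:Atype} and \ref{symp_vf_lemma}), entirely out of the vertex $I$ and propagators for the free theory $S_{\rm free} = \int B\wedge(\dbar + \d_{\RR^m})A$. The free propagator is manifestly weight $0$ (it is the Green's function for $\dbar$, which commutes with dilation in the appropriate equivariant sense — dilation acts on the propagator kernel with weight $0$ once one accounts for the density on the source point). Therefore every wheel-graph weight is a contraction of weight-$0$ pieces, so $\Theta_{n,m}$ is weight $0$ as a local functional of $A$.

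The remaining step is to check that the weight grading on local functionals matches, under the two zig-zags, the weight grading on $\clie^\bu(\fh_{2n})$: this is essentially bookkeeping, but it is the place where one must be careful. The map of Lemma \ref{symp_vf_lemma} replaces $\cL_{n,m}$ by $\cT_{\rm symp}(\CC^{2n})\,\Hat\otimes\,\Omega^\bu(\RR^m)$; the $\CC^\times$-action on $\CC^{2n}$ lifts to this resolution (a symplectic vector field of polynomial degree $k$ has weight $k - 2$ in the $p,q$-normalization, matching the Hamiltonian of a polynomial of degree $k$, which has weight $k-2$ too), so the cochain map there is weight-preserving. Then the descent-equation quasi-isomorphism $\clie^\bu(\fh_{2n})[4n] \xrightarrow{\iso} \cloc^\bu(\cT^{\rm symp}(\CC^{2n}))$ of \cite{BWgf}, as extended in Proposition \ref{ham_prop}, is built from the jet map $j_0^*$, which is $\CC^\times$-equivariant for the tautological reason that Taylor expansion at $0$ commutes with dilation; and the de Rham forms on $\RR^m$ carry weight $0$. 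Hence the whole identification is weight-graded, and $[\Theta_{n,m}]$, being weight $0$ upstairs, lands in $\mr H_{(0)}^{4n+m+1}(\fh_{2n})$. The one subtlety to watch — and the main obstacle — is that the descent procedure involves integrating out the "form" directions of $\CC^{2n}$ and introduces a shift by $4n$; one must confirm that the accompanying Jacobian factors (the measure $\d^{2n}z\,\d^{2n}\ol z$ and the $\omega^{\wedge n}$ twist of $\cL^!$) cancel in weight, which they do precisely because the cotangent pairing is valued in densities and is therefore dilation-equivariant of weight $0$. Once this cancellation is verified, the lemma follows.
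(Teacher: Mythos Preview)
The proposal is correct and follows essentially the same approach as the paper: both introduce a weight grading via the holomorphic Euler vector field on $\CC^{2n}$ (the paper's shifts by $\pm 2$ on the $A$- and $B$-fields correspond exactly to your density/$\omega^{\wedge n}$ bookkeeping) and argue that the action, the shifted symplectic pairing, and hence the anomaly are all weight zero. Your Feynman-diagrammatic packaging (weights of vertices and propagators) is equivalent to the paper's QME packaging (weight-zero action plus weight-zero BV Laplacian), and your explicit tracking of the grading through the identifications in Lemma~\ref{symp_vf_lemma} and Proposition~\ref{ham_prop} is a useful addendum that the paper leaves implicit.
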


\begin{proof}
We first assign a weight grading to the classical BV theory. 
Let $E$ be the holomorphic Euler vector field on $\CC^{2n}$
\[E = \sum_i \left(z^i \frac{\partial}{\partial z^i} + w^i \frac{\partial}{\partial w^i} \right),\]
where $z^i, w^i$ are holomorphic Darboux coordinates on $\CC^{2n}$.  We say that an element $A \in \mc L$ is of {\em weight} $j \in \ZZ$ if 
\[L_E (A) = (j + 2) A .\]
Notice that the differential and the bracket on the local Lie algebra $\mc L$ are of weight zero with respect to this grading. 

To define the grading on the cotangent theory to the local Lie algebra $\mc L$ we say an element $B \in \mc L^!$ is of weight $k \in \ZZ$ if
\[L_E (B) = (k-2) B .\]
With respect to this grading, the action $S$ of Equation (\ref{eqn:action2}) is of weight 0.  

Recall that the anomaly is characterized as the obstruction to solving the quantum master equation. 
Since the classical BV differential is automatically weight zero, we simply need to check that the BV Laplacian is weight zero. 
This is equivalent to checking that the shifted symplectic pairing
\begin{align*}
\cL_c \times \cL_c^! & \to \CC \\
A \otimes B & \mapsto \int_{\CC^{2n} \times \RR^m} A \wedge B
\end{align*}
is of weight zero. 
Observe that the pairing between a compactly supported $A$-field and a $B$-field can be written as
\[
\int_{\CC^{2n} \times \RR^m} A \wedge B = \int_{\CC^{2n} \times \RR^m} \<\Pi \otimes A, \omega \otimes B\>
\]
where on the right hand side we use the linear pairing between $\wedge^2 T_{\CC^{2n}}$ and $K_{\CC^{2n}}$. 
Since this linear pairing is manifestly weight zero, the claim follows. 

\end{proof}

\begin{remark}
We can provide a geometric description of the weight assignments used in proof of the previous lemma.
Consider the standard Poisson bivector $\Pi = \sum_i \partial_{z^i} \wedge\partial_{w^i}$.
It determines an isomorphism
\[
\Pi \otimes ( \cdot ) \colon \Omega^{0,\bu}(\CC^{2n}) \; \Hat{\otimes} \; \Omega^{\bu}(\RR^m) \xto{\cong} {\rm PV}^{2,\bu}(\CC^{2n}) \;\Hat{\otimes}\; \Omega^\bu(\RR^m) .
\]
The weight on $A$-fields agrees with pulling back the standard dilation on $\CC^{2n}$ acting on elements on the right hand side of this isomorphism. 
That is, $A$ has weight $j$ if and only if $L_E(\Pi \otimes A) = j (\Pi \otimes A)$. 
Similarly, if $\omega = \Pi^{-1}$, then we have an isomorphism
\[
\omega \otimes (\cdot ) \colon \Omega^{2,\bu}(\CC^{2n}) \; \Hat{\otimes} \; \Omega^{\bu}(\RR^m) \xto{\cong} \Omega^{0,\bu}(\CC^{2n}, K_{\CC^{2n}}^{\otimes 2}) \;\Hat{\otimes}\; \Omega^\bu(\RR^m) .
\]
Then, $B$ has weight $k$ if and only if $L_E(\omega \otimes B) = k (\omega \otimes B)$. 
\end{remark}

We now argue the vanishing of the one-loop anomaly of the theory on $\CC^{2n} \times \RR^m$ by collecting facts about the known cohomology of Hamiltonian vector fields.
In fact, a result of Gelfand, Kalinin, and Fuchs \cite{GKF} gives a complete description of the non-positive weight part of the cohomology of Hamiltonian vector fields $\ham_{2n}$.  
The description uses the Hochschild--Serre spectral sequence associated to the subalgebra $\sp(2n) \sub \ham_{2n}$.

Explicitly, the $E_2$-page of this spectral sequence is
\[
E_2^{i,j} = \mr H^i(\fh_{2n}, \sp(2n)) \otimes \mr H^j(\sp(2n))
\]
and the spectral sequence converges to $\mr H^{i+j}(\fh_{2n})$.  
In the remainder, we let $H^\bu_{(\leq 0)}$ denote the non-positiive weight part of the cohomology. 

\begin{theorem}[{\cite{GKF}[Theorem 2]}]
The non-positive weight part of the relative Gelfand-Fuchs cohomology $\mr H^i_{(\le 0)}(\mf{h}_{2n}, \sp(2n)) \otimes \mr H^j(\sp(2n))$ is isomorphic to the algebra
\[\CC[\Gamma, \Psi_1, \ldots, \Psi_n]/I\]
where $\Gamma$ has degree $2n-1$ and weight $-1$, $\Psi_i$ has degree $4i$ and weight $0$, and the ideal $I$ is generated by the elements $\Gamma^k\Psi_1^{k_1} \cdots \Psi_n^{k_n}$ where $k + k_1 + 2k_2 + \cdots nk_n > n$.  

Furthermore, in the Hochschild--Serre spectral sequence, the standard generators $h_i$ of $\mr H^\bullet(\sp(2n))$ of degrees $3, 7, \ldots, 4n-1$ map under transgression to the generators $\Psi_1, \ldots, \Psi_n$.
\end{theorem}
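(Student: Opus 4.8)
\emph{Proof strategy.} The plan is to run the Hochschild--Serre spectral sequence for the reductive subalgebra $\sp(2n) \subset \fh_{2n}$ and to reduce the computation to a finite invariant-theoretic one. Because $\sp(2n)$ is exactly the weight-zero summand of $\fh_{2n}$ and acts reductively on every weight piece (each being a finite-dimensional $\Sp(2n)$-module), the Chevalley--Eilenberg cochains of $\fh_{2n}$ split off their $\sp(2n)$-invariant part, giving a spectral sequence
\[
E_2^{i,j} = \mr H^i(\fh_{2n}, \sp(2n)) \otimes \mr H^j(\sp(2n)) \ \Longrightarrow \ \mr H^{i+j}(\fh_{2n})
\]
compatible with the weight grading, where $\mr H^\bu(\sp(2n)) = \Lambda(h_1, \ldots, h_n)$ with $\deg h_i = 4i-1$. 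It therefore suffices to (i) compute the relative cohomology $\mr H^\bu_{(\le 0)}(\fh_{2n}, \sp(2n))$ as an algebra, and (ii) identify the nonzero differentials, which in non-positive weight are forced by degree counting to be the transgressions $d_{4i} \colon E_{4i}^{0,4i-1} \to E_{4i}^{4i,0}$.

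For step (i), put $V = \CC^{2n}$ and decompose $\fh_{2n}/\sp(2n) \cong \bigoplus_{d \ge 1,\, d \ne 2} P_d$ as $\Sp(V)$-modules, where $P_d$ is the space of homogeneous polynomials of degree $d$, carrying weight $d-2$. The relative cochain complex is $\big(\Lambda^\bu \bigoplus_d P_d^*\big)^{\Sp(V)}$, and a monomial cochain using factors from $P_{d_1}^*, \dots, P_{d_r}^*$ has weight $\sum_j (2-d_j)$; thus $P_1^* = V^*$ contributes weight $+1$ while every $P_d^*$ with $d \ge 3$ contributes weight $\le -1$. The heart of the argument is to prove that, in non-positive weight, the sub-DGA generated by $V^*$ and $P_3^*$ already computes the full relative cohomology. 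I would show this by filtering the non-positive-weight complex by the number of tensor factors taken from $\bigoplus_{d\ge 4} P_d^*$ and checking that, on the associated graded, the part of the differential induced by the Poisson brackets $V \otimes P_d \to P_{d-1}$ (which are surjective, since $\{p_i,-\}$ and $\{q_i,-\}$ realize all constant-coefficient derivations) together with the adjoint action of the cubic generators provides an $\Sp(V)$-equivariant contracting homotopy away from that sub-DGA. The hypothesis ``weight $\le 0$'' is what makes this accounting close; in positive weight the components $P_{\ge 4}$ genuinely contribute, which is why the cohomology there is unknown.

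It then remains to compute the cohomology of the finite complex $\big(\Lambda^\bu(V^* \oplus P_3^*)\big)^{\Sp(V)}$, whose differential is controlled by the Poisson brackets $V \otimes P_3 \to P_2 = \sp(V)$ and $P_3 \otimes P_3 \to P_4$ (the latter annihilated by the projection to this subcomplex). Applying the first and second fundamental theorems of invariant theory for $\Sp(V)$ --- all invariants are iterated contractions with $\omega$ and $\omega^{-1}$, subject only to Pl\"ucker-type syzygies --- one reads off the generating invariant cocycles: the symplectic Pontryagin classes $\Psi_1, \dots, \Psi_n$ of degree $4i$ and weight $0$, together with one further class $\Gamma$ of degree $2n-1$ and weight $-1$. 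The ideal $I$ reflects the finite-dimensionality of the relevant space of invariants: a product $\Gamma^k \Psi_1^{k_1}\cdots\Psi_n^{k_n}$ with $k + k_1 + 2k_2 + \cdots + nk_n > n$ is forced to antisymmetrize more $V^*$-slots than are available (and $\Gamma^2 = 0$ for parity), hence vanishes, while the remaining monomials are verified to be linearly independent in cohomology. This gives $\mr H^\bu_{(\le 0)}(\fh_{2n}, \sp(2n)) \cong \CC[\Gamma, \Psi_1, \dots, \Psi_n]/I$.

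Finally, for step (ii) I would use the Chern--Weil construction for the pair $(\fh_{2n}, \sp(2n))$: an $\sp(2n)$-equivariant splitting $\fh_{2n} = \sp(2n) \oplus \mathfrak m$ defines a ``connection'' whose curvature is a relative $2$-cochain valued in $\sp(2n)$, and evaluating the degree-$2i$ invariant polynomial of $\sp(2n)$ on this curvature represents $\Psi_i$; the corresponding Chern--Simons form then exhibits $\Psi_i$ as the transgression of the primitive generator $h_i$, so $d_{4i} h_i = \Psi_i$ up to a nonzero scalar. As all other potential differentials vanish in non-positive weight for degree and weight reasons, this completes the computation and the theorem follows. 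The step I expect to be the main obstacle is the homotopy argument of the second paragraph --- controlling the contribution of all the higher homogeneous pieces $P_{\ge 4}$ in non-positive weight and ensuring the contracting homotopy can be chosen $\Sp(V)$-equivariantly --- after which everything reduces to a finite and essentially classical calculation.
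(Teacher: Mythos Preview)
The paper does not prove this theorem at all: it is quoted verbatim as a result of Gelfand, Kalinin, and Fuchs (the citation \cite{GKF}[Theorem 2] in the theorem header), and immediately after the statement the paper moves on to ``This implies the following result'' and uses it as a black box to prove Lemma~\ref{ham_cohomology_lemma}. So there is no ``paper's own proof'' to compare your attempt against; your proposal is an attempt to reprove a cited theorem that the authors simply import.

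That said, your outline is broadly the right shape for how the GKF result is actually established: Hochschild--Serre for the reductive subalgebra $\sp(2n)\subset\fh_{2n}$, reduction of the relative complex to $\Sp$-invariants, and identification of the transgression via a Chern--Weil/Chern--Simons mechanism. You have correctly flagged the genuine difficulty yourself: the step where you filter by the number of factors from $\bigoplus_{d\ge 4}P_d^*$ and claim an $\Sp(V)$-equivariant contracting homotopy on the associated graded is the entire content of the theorem in non-positive weight, and your sketch there is not yet a proof --- the brackets $V\otimes P_d\to P_{d-1}$ being surjective is far from enough to build a contracting homotopy on the exterior algebra, since the differential mixes many bracket components simultaneously and acyclicity on the associated graded requires a careful bookkeeping argument (in the original GKF paper this is handled by a rather intricate combinatorial analysis, not by a one-line homotopy). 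The invariant-theory computation of $\big(\Lambda^\bu(V^*\oplus P_3^*)\big)^{\Sp(V)}$ and the identification of $\Gamma$ and the $\Psi_i$ is also more delicate than your sketch suggests: the relations defining $I$ do not all come from ``running out of $V^*$-slots'', and showing the surviving monomials are independent in cohomology (not just as invariants) needs work. If your goal is only to \emph{use} the statement, as the paper does, you should simply cite \cite{GKF}; if you genuinely want to reprove it, expect the filtration step to require substantially more detail than you have given.
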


This implies the following result.
\begin{lemma} \label{ham_cohomology_lemma}
The cohomology group $\mr H^{4n+k}_{(0)}(\mf{h}_{2n}) = 0$ if $k = 1,2, 4$ or $6$, for all $n$.  
\end{lemma}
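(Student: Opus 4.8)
The plan is to read off the weight‐zero cohomology of $\fh_{2n}$ in the relevant total degrees from the Gelfand--Kalinin--Fuchs description quoted above, using the Hochschild--Serre spectral sequence for the pair $\sp(2n) \subset \fh_{2n}$. First I would record what the theorem gives us: the $E_2$-page is $\mr H^i_{(\le 0)}(\fh_{2n}, \sp(2n)) \otimes \mr H^j(\sp(2n))$, and the relative part is the quotient of the free graded-commutative algebra $\CC[\Gamma, \Psi_1, \ldots, \Psi_n]$ by the ideal $I$ generated by monomials $\Gamma^k \Psi_1^{k_1}\cdots\Psi_n^{k_n}$ with $k + k_1 + 2k_2 + \cdots + n k_n > n$; here $\deg \Gamma = 2n-1$, $\mr{wt}\,\Gamma = -1$, and $\deg \Psi_i = 4i$, $\mr{wt}\,\Psi_i = 0$. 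Since $\mr H^\bullet(\sp(2n))$ is generated by the $h_i$ of degrees $3, 7, \ldots, 4n-1$, all of weight $0$, the weight of a class on the $E_2$-page equals the weight of its relative factor. So weight zero forces the relative factor to involve no $\Gamma$'s: it must be a polynomial in the $\Psi_i$ alone, subject to $k_1 + 2k_2 + \cdots + n k_n \le n$.

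Next I would compute the total degree. A weight-zero $E_2$-class is $\Psi_1^{k_1}\cdots\Psi_n^{k_n} \otimes (\text{monomial in the } h_i)$, of degree $\sum_i 4i\,k_i + \sum (\text{degrees of chosen } h_i)$. We want total degree $4n + k$ with $k \in \{1,2,4,6\}$. Every $\Psi_i$ contributes a multiple of $4$, and every $h_i$ contributes a number $\equiv 3 \pmod 4$; so a product of $r$ distinct $h_i$'s contributes $\equiv 3r \pmod 4$. Hence the total degree is $\equiv 3r \pmod 4$, which is $4n+k$ only if $3r \equiv k \pmod 4$. For $k = 1$ this needs $r \equiv 3$, for $k = 2$ it needs $r \equiv 2$, for $k = 4$ (i.e. $k \equiv 0$) it needs $r \equiv 0$, and for $k = 6$ (i.e. $k \equiv 2$) it needs $r \equiv 2 \pmod 4$. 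In each admissible case I then have to check that the constraint $k_1 + 2k_2 + \cdots + n k_n \le n$ together with the degree bookkeeping and the transgression data rules out any surviving class. The key extra input is the last sentence of the GKF theorem: the generators $h_i$ transgress to $\Psi_i$ in the spectral sequence. Concretely, this means that on later pages the differential kills $\Psi_i$ against $h_i$, so after running the spectral sequence the only classes that can survive in weight zero are those in the subalgebra where each $\Psi_i$ has been "used up" — effectively the cohomology of $\fh_{2n}$ in weight zero is governed by the residual part of $\CC[\Psi_1, \ldots, \Psi_n]/(\text{the } \Psi\text{-part of } I)$ after transgression, and a short degree count shows nothing lands in degrees $4n+1, 4n+2, 4n+4, 4n+6$.

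More precisely, the cleanest route is: the transgression identifies the $E_\infty$-page in weight zero with the cohomology of the Koszul-type complex $(\CC[\Psi_1,\ldots,\Psi_n]/I_\Psi) \otimes \Lambda(h_1,\ldots,h_n)$ with differential $\d h_i = \Psi_i$, where $I_\Psi$ is the ideal generated by monomials with $\sum i\,k_i > n$. This is the (truncated) Koszul complex for the regular sequence $\Psi_1,\ldots,\Psi_n$, and its cohomology is concentrated in a narrow band of degrees near $4n$: I would argue that the top nonvanishing weight-zero degree is $4n$ itself (realized by the class dual to $\Gamma^0\Psi_1\cdots$ at the boundary of $I$), with the next classes below it, and in particular degrees $4n+1, 4n+2, 4n+4, 4n+6$ are empty — the first because of the mod-$4$ parity obstruction combined with $k_1 + 2k_2 + \cdots \le n$ capping the $\Psi$-degree at $4n$, the others by the same parity-plus-truncation argument applied case by case. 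I expect the main obstacle to be making the transgression/Koszul reduction precise enough to be sure no class sneaks through in degree $4n+4$ or $4n+6$, where the $\Psi$-monomials are genuinely present and one must carefully use the truncation inequality $\sum i\,k_i \le n$ to see that the available monomials all pair off under the transgression differential; this is where I would spend the bulk of the verification, organizing it as a finite check stratified by the number $r$ of $h$-factors.
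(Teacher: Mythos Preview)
Your setup is right and matches the paper's: weight-zero $E_2$-classes are monomials $\Psi_1^{k_1}\cdots\Psi_n^{k_n}\,h_{j_1}\cdots h_{j_r}$ with $\sum i k_i \le n$, the total degree is $\equiv 3r \pmod 4$, and the transgression $h_i \mapsto \Psi_i$ organizes the $E_\infty$-computation as a Koszul-type complex. Where your plan goes wrong is the assertion that ``the top nonvanishing weight-zero degree is $4n$ itself, with the next classes below it.'' This is false: for every $n$ the element $\Psi_1^{n} h_1$ is a nontrivial weight-zero class in degree $4n+3$, since $d(\Psi_1^n h_1)=\Psi_1^{n+1}=0$ in $\CC[\Psi]/I_\Psi$ and it is not a boundary. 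So weight-zero cohomology genuinely lives \emph{above} $4n$, and your ``$\Psi$-degree capped at $4n$'' heuristic cannot by itself exclude the target degrees; the case $4n+6$ in particular is not handled by parity plus the truncation bound.

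The missing idea is that the cocycle condition in your Koszul complex gives a \emph{lower} bound on the degree, not an upper bound. If $c=\Psi^{\vec k}h_{j_1}\cdots h_{j_r}$ (with $j_1<\cdots<j_r$) is a cocycle, then $\Psi^{\vec k}\Psi_{j_s}$ must lie in $I_\Psi$ for every $s$, i.e.\ $\sum_i i k_i + j_s > n$; taking $s=1$ gives $\sum_i i k_i + j_1 \ge n+1$, so $\deg(\Psi^{\vec k}h_{j_1}) = 4(\sum_i i k_i + j_1)-1 \ge 4n+3$. Adding the remaining $h_{j_2},\ldots,h_{j_r}$ (each of degree $\ge 7$, since $j_s \ge 2$ for $s\ge 2$) forces $\deg c \ge 4n+3$ for $r=1$, $\deg c \ge 4n+10$ for $r=2$, and higher still for $r\ge 3$. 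Combined with your parity observation ($r=1$ gives degree $\equiv 3$, $r=2$ gives $\equiv 2$, etc.) and the fact that $r=0$ monomials of positive degree are transgression images, this immediately kills degrees $4n+1,4n+2,4n+4,4n+6$. This is exactly how the paper argues; your Koszul reformulation is fine, but you need to run the cocycle condition to extract this lower bound rather than trying to cap the degree from above.
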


\begin{proof}
Classes of weight $0$ are generated by monomials of the form $a = \Psi_1^{k_1} \cdots \Psi_n^{k_n} h_1^{\ell_1} \cdots h_n^{\ell_n}$, where each $\ell_i = 0$ or 1.  If $c$ does not include $h_i$ then $a$ is in the image of the transgression map, and therefore vanishes, so from now on we will assume that at least one $\ell_j \ne 0$.  If such a class survives to the $E_\infty$ page of the spectral sequence it must in particular be in the kernel of the transgression map.  

Now, the classes $\Psi_i$ have degree 0 mod 4 and $h_i$ have degree 3 mod 4.  Let us consider the possible degrees of classes involving different numbers of $h_i$.  Applying the transgression to $c$ gives a sum of elements of the form $\Psi_1^{k_1} \cdots \Psi_j^{k_j+1} \cdots \Psi_n^{k_n} h_1^{\ell_1} \cdots \hat h_j \cdots h_n^{\ell_n}$, where for each summand $\sum_{i=1}^n i k_i + j > n$.  The element $\Psi_1^{k_1} \cdots \Psi_n^{k_n} h_j$ therefore has degree at least $4n$
\begin{enumerate}
 \item If only one $\ell_j \ne 0$ then $c$ has degree 3 mod 4.
 \item If two $\ell_j \ne 0$ then $c$ has degree 2 mod 4.  Because $\Psi_1^{k_1} \cdots \Psi_n^{k_n} h_j$ has degree at least $4n$, a class of the form $\Psi_1^{k_1} \cdots \Psi_n^{k_n} h_{j_1} h_{j_2}$ has degree at least $4n + 3$.
 \item If three or more $\ell_j \ne 0$ then, likewise $c$ has degree at least $4n + 3 + 7 = 4n + 10$.
\end{enumerate}
In particular, there are no non-zero classes in degree $4n+1, 4n+2, 4n+4$ or $4n+6$.
\end{proof}

Now, we can use this cohomology calculation to immediately prove the main theorem of this section.
\begin{proof}[Proof of Theorem \ref{thm:quantization}]
By Lemma \ref{weight_lemma}, the class $[\Theta_{n,m}]$ of the one-loop anomaly lies in the weight 0 summand $\mr H_{(0)}^{4n+m+1}(\mf{h}_{2n})$.  
By Lemma \ref{ham_cohomology_lemma} this cohomology group vanishes, and therefore so does the anomaly.
\end{proof}

\begin{remark}
The same calculation shows that this quantization is unique among weight 0 quantizations in the cases where $m=0,1,2,4$ or $6$.  Indeed, deformations of a quantum field theory are controlled by classes in $\mr H^0_{\mr{loc}}(\mc L_{n,m})$.  We have demonstrated that the weight 0 part of this cohomology vanishes if $m=0,1,2,4$ or $6$.
\end{remark}

\begin{remark} \label{general_quantization_remark}
At this point it seems natural to speculate that all Poisson BF theories are anomaly free.  The first example to which our methods do not apply is the 17-dimensional Poisson BF theory on $\CC^4 \times \RR^9$.  The anomaly here is given by a weight 0 class in $\mr H_{(0)}^{18}(\mf h_4) \iso \CC^2$ (this is the smallest non vanishing weight 0 even cohomology group of a Lie algebra $\mf h_{2n}$, it is generated by the classes $\Psi_1^2h_1h_2$ and $\Psi_2h_1h_2$.).
\end{remark}

\section{Occurence as Twisted Supergravity} \label{sugra_section}

The BV theories we have introduced so far have conjectural connections to string theory and supergravity, using the theory of twisted supergravity as discussed by Costello and Li \cite{CostelloLiSUGRA}.  In this section we provide a brief survey of these relationships, but these ideas will not be used in the remainder of the paper.

\begin{conjecture}
The twist of 5d $\mc N = 1$ supergravity on $\RR^5$ is equivalent to Poisson BF theory on $\CC^2 \times \RR$ where $\CC^2$ carries its canonical holomorphic symplectic structure. 
\end{conjecture}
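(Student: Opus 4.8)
This statement is a conjecture, so we only outline the strategy we expect to establish it. The plan is to deduce the equivalence from the conjectural description of the minimal (holomorphic-topological) twist of eleven-dimensional $\mc N = 1$ supergravity due to Costello \cite{CostelloM2,CostelloLiSUGRA}, together with dimensional reduction along a Calabi--Yau threefold. The first step is to take as input that this twist is a BV theory on $\CC^5 \times \RR$ whose fields are assembled from polyvector fields $\PV^{\bu, \bu}(\CC^5) \, \Hat{\otimes} \, \Omega^\bu(\RR)$, with suitable cohomological shifts, and whose cubic interaction is built from the Dolbeault operator $\dbar$ on $\CC^5$, the de Rham differential along $\RR$, and the Schouten--Nijenhuis bracket --- so that it has the shape of a ``Poisson BF'' action in dimension $10+1$. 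Granting this, I would invoke the expectation that twisting commutes with dimensional reduction, so that the twist of the 5d theory obtained by reducing 11d supergravity on a Calabi--Yau threefold $Z$ agrees with the reduction along $Z$ of the twisted 11d theory.

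Next I would carry out that reduction explicitly. Writing $\CC^5 = \CC^2 \times \CC^3$ and replacing the $\CC^3$ factor by $Z$ (or by a suitably completed formal neighbourhood of a point), the twisted 11d theory lives on $\CC^2 \times Z \times \RR$. The essential point is that, since $Z$ is Calabi--Yau, the holomorphic volume form $\Omega_Z$ trivializes the canonical bundle and identifies the $\dbar_Z$-cohomology of polyvector fields on $Z$ with $\mr H^{0, \bu}(Z) \cong \CC$; hence reducing along $Z$ leaves exactly the fields $\PV^{\bu, \bu}(\CC^2) \, \Hat{\otimes} \, \Omega^\bu(\RR)$, with the $\CC^3$-directional and mixed pieces of the Schouten bracket contributing nothing to the reduced interaction. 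One obtains in this way the twist of 5d $\mc N = 1$ supergravity on $\RR^4 \times \RR = \CC^2 \times \RR$.

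The third step is to identify the reduced theory with Poisson BF theory as in Definition \ref{dfn:localLie1} and Equation \eqref{eqn:action2}. Contraction with the canonical holomorphic symplectic form $\omega$ on $\CC^2$ --- equivalently with the Poisson bivector $\Pi = \omega^{-1}$ --- gives isomorphisms $\PV^{2, \bu}(\CC^2) \cong \Omega^{0, \bu}(\CC^2)$ and $\PV^{0, \bu}(\CC^2) \cong \Omega^{2, \bu}(\CC^2)$ under which the Schouten--Nijenhuis bracket on bivector fields becomes exactly the bracket $\{-, -\}_\Pi$ of Definition \ref{dfn:localLie1}. It then remains to check the bookkeeping: with $d = \dim_\CC \CC^2 = 2$ and $m = \dim_\RR \RR = 1$ the $B$-field lands in cohomological degree $d + m - 2 = 1$, the $(-1)$-shifted pairing matches, and the cubic vertex reduces to $\tfrac12 \int_{\CC^2 \times \RR} B \wedge \{A \wedge A\}_\Pi$, so that the reduced action is precisely $S_{\Pi, \RR}$ of \eqref{eqn:action2}. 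As an independent consistency check I would also compute the holomorphic-topological twist of the 5d $\mc N = 1$ gravity multiplet directly: fix the minimal supercharge $Q$, verify that the image of $[Q, -]$ on supertranslations spans the three antiholomorphic and topological translation directions of $\CC^2 \times \RR$ while leaving the two holomorphic translations of $\CC^2$, and confirm that the $Q$-cohomology of the multiplet reassembles into $\Omega^{0, \bu}(\CC^2) \, \Hat{\otimes} \, \Omega^\bu(\RR)[1] \oplus \Omega^{2, \bu}(\CC^2) \, \Hat{\otimes} \, \Omega^\bu(\RR)[1]$, including the spin-$2$ field advertised in the introduction; the analogous argument with $Z$ replaced by a $G_2$-manifold should yield the four-dimensional statement.

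The hard part is twofold. On one hand, the twisted 11d theory is itself only conjecturally identified, so any argument along these lines is conditional on Costello's conjecture, and making the reduction on a genuine Calabi--Yau threefold rigorous requires showing that no unexpected light fields survive along $Z$ and that the interaction reduces without producing spurious terms. On the other hand, the direct route on the 5d side requires pinning down an off-shell formulation of minimal 5d $\mc N = 1$ supergravity together with its supersymmetry variations precisely enough to compute the twist by hand --- this is the principal technical obstacle --- and then matching the result, auxiliary fields included, with the cotangent presentation of Poisson BF theory. Reconciling the two routes is what we expect to complete the proof.
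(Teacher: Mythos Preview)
The statement is a conjecture, and the paper does not prove it; it only offers a remark that it can be deduced from the conjectural twist of $M$-theory on $\RR \times \CC^3 \times \CC^2$ (citing \cite{PhilSurya}) by reducing along $\CC^3$, together with some consistency checks coming from the known twist of 5d $\cN=1$ super Yang--Mills and its dimensional reduction to four dimensions.  Your two-pronged strategy---reduce the twisted 11d theory along a Calabi--Yau threefold, and separately twist the 5d gravity multiplet by hand---is exactly the pair of routes the paper indicates (the second is flagged as ongoing work with Saberi), so at the level of outline you are aligned with the paper.

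That said, two steps of your reduction sketch are not right as written.  First, for a compact Calabi--Yau threefold $Z$ one does \emph{not} have $\mr H^{0,\bu}(Z) \cong \CC$: at the very least $h^{0,3}(Z) = 1$, and the full Dolbeault cohomology of polyvector fields on $Z$ (which under $\Omega_Z$ is all of $\mr H^{\bu,\bu}(Z)$, not just $\mr H^{0,\bu}$) typically contributes many classes.  The paper addresses exactly this point in the remark following the conjecture: reduction along a general $X$ produces 5d $\cN=1$ supergravity \emph{coupled to} a $\U(1)$ vector multiplet and a hypermultiplet valued in $\mr H^3(X)$.  To obtain the pure supergravity of the conjecture you must reduce along $\CC^3$ itself (or a formal disk), not along a compact $Z$; your parenthetical ``formal neighbourhood of a point'' is the correct option, and the compact case should be dropped.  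Second, your description of the twisted 11d theory as ``$\PV^{\bu,\bu}(\CC^5)\,\Hat\otimes\,\Omega^\bu(\RR)$ with the Schouten bracket'' is too coarse: that is essentially BCOV/Kodaira--Spencer theory, whereas the actual conjectural 11d twist has more restrictive field content.  In particular, you need to explain why reducing to $\CC^2$ leaves only the two summands matching $A$ and $B$ in Poisson BF theory and not, for instance, an extra $\PV^{1,\bu}(\CC^2)$ piece; as stated, your reduction would overcount fields.
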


\begin{remark}
This conjecture can be deduced from a description of the twist of $M$-theory on the $11$-manifold $\RR \times \CC^3 \times \CC^2$ discussed in \cite{PhilSurya} upon dimensional reduction along the Calabi--Yau three-fold $\CC^3$. 
More generally, one can study $M$-theory on a geometry of the form $\RR \times X \times \CC^2$, where $X$ is an arbitrary Calabi--Yau three-fold.
The reduction along $X$ is expected to produce $\cN=1$ supergravity coupled to a $\cN=1$ vector multiplet with gauge group $\U(1)$ and a $\cN = 1$ hypermultiplet valued in the symplectic vector space ${\rm H}^3(X)$. 
We thank Kevin Costello for clarifying this point.
\end{remark}

The proof of this conjecture starting from an explicit description of 5d $\mc N=1$ supergravity is currently joint work with Ingmar Saberi. Let us explain some evidence for this conjecture.  First, we recall the form of the twist of 5d $\cN=1$ super Yang--Mills theory on $\RR^5$, which has been worked out in \cite{ESW}. Recall that 5d $\cN=1$ gauge theory is defined for any Lie algebra $\fg$ together with a representation $V$.
The twist of this theory on $\CC^2 \times \RR$ is equivalent to the cotangent theory to the local dg Lie algebra
\[
\Omega^{0,\bu}(\CC^2) \; \Hat{\otimes} \; \Omega^\bu(\RR) \otimes \left(\fg \ltimes V[-1] \right) .
\]
which we will denote by $\cL_{\rm SYM}(\fg, V)$. 

Given this description, it is clear how $\cL_{\rm SYM}(\fg, V)$ is a module for the local dg Lie algebra 
\[
\cL_{2,1} = \Omega^{0,\bu}(\CC^2) \; \Hat{\otimes} \; \Omega^\bu(\RR)
\]
underlying Poisson BF theory. 
Indeed, if $A \in \cL_{2,1}$ and $\gamma \in \cL_{\rm SYM}(\fg, V)$ the action is through the Poisson bracket $\{A, \gamma\}_\Pi$.

In this way, there is a natural way to ``couple" Poisson BF theory to the twist of super Yang--Mills theory.
One considers the cotangent theory to the local dg Lie algebra
\[
\cL_{2,1} \ltimes \cL_{\rm SYM}(\fg, V) .
\]

The dimensional reduction of 5d $\cN=1$ SYM along $\RR$ is 4d $\cN=2$ SYM. 
For pure gauge theory, this means that the dimensional reduction of 5d $\cN=1$ SYM is equivalent to 4d $\cN=1$ SYM coupled to an adjoint-valued 4d $\cN=1$ chiral multiplet. 

Following this line of reasoning, the dimensional reduction of 5d $\cN=1$ supergravity is 4d $\cN=2$ supergravity, which we can hope to further decompose in terms of 4d $\cN=1$ supersymmetry. 
The reduction of our 5d theory along $\RR$ is equivalent to the cotangent theory to the local dg Lie algebra 
\[
\Omega^{0,\bu}(\CC^2)[\ep] = \Omega^{0,\bu}(\CC^2) \ltimes \ep \Omega^{0,\bu}(\CC^2) 
\]
where $\ep$ is a parameter of cohomological degree $+1$. 
The Lie structure is similar to that of $\cL_{2,0}$: we take the dg Lie algebra $\cL_{2,0} = \Omega^{0,\bu}(\CC^2)$ with its Poisson bracket and tensor it with the graded ring $\CC[\ep]$. 

We can decompose the fields of this 4d theory as $A + \ep \gamma$, where $A, \gamma \in \Omega^{0,\bu}(\CC^2)$. 
Denote the anti-fields by $B, \beta$.
The action functional reads
\[
\int_{\CC^2} B\wedge \left(\dbar A + \frac{1}{2} \{A,A\}_\Pi \right) + \int_{\CC^2} \beta \wedge \dbar \gamma + \int_{\CC^2} \beta \wedge \{A, \gamma\}_\Pi .
\]
The first term we recognize as holomorphic Poisson BF theory on $\CC^2$, described by the local dg Lie algebra $\cL$ of Definition \ref{dfn:localLie1}.
The second term is that of the free $\beta\gamma$ system on $\CC^2$, this is equivalent to the twist of a single 4d $\cN=1$ chiral multiplet, see \cite{ESW}. 

This leads us to the following 4d analogue of our conjecture on twisted supergravity.

\begin{conjecture}
The twist of pure 4d $\mc N = 1$ supergravity on $\RR^4$ is equivalent to Poisson BF theory on $\CC^2$ with its standard holomorphic symplectic structure; this is the cotangent theory to the local Lie algebra $\cL_{2,0} = \Omega^{0,\bu}(\CC^2)$. 
\end{conjecture}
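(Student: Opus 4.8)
The plan is to prove this the same way one establishes the analogous statements for twists of super Yang--Mills theories: start from an explicit off-shell description of pure 4d $\cN=1$ supergravity in the BV formalism, identify the holomorphic supercharge inside the $\cN=1$ supersymmetry algebra, and compute the associated twist. Recall that the $\cN=1$ super-Poincar\'e algebra in four dimensions contains chiral supercharges $Q_\alpha$ and $\bar Q_{\dot\alpha}$; any nonzero $Q_\alpha$ satisfies $[Q_\alpha,Q_\alpha]=0$, while $\{Q_\alpha,\bar Q_{\dot\beta}\}=P_{\alpha\dot\beta}$, so after choosing a compatible identification $\RR^4\cong\CC^2$ the operator $[Q_\alpha,-]$ surjects onto the antiholomorphic translations. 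This is the holomorphic twist: one adds the (homotopically trivial) action of $Q_\alpha$ to the classical BV differential and passes to a minimal model for the resulting complex of fields.

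The first, and more routine, step is to compute the underlying graded space of the twist by taking $Q_\alpha$-cohomology of the full linearised BV complex of pure 4d $\cN=1$ supergravity. This complex contains the linearised vielbein (or metric perturbation), the gravitino, the auxiliary fields of whichever off-shell formulation one fixes (old or new minimal), the ghosts for infinitesimal diffeomorphisms, local Lorentz rotations and local supersymmetry, and the antifields of all of these. I expect this cohomology to collapse onto $\Omega^{0,\bu}(\CC^2)[1]\oplus\Omega^{2,\bu}(\CC^2)$, suitably shifted: the degree $+1$ field $A$ should be assembled from the holomorphic part of the diffeomorphism ghost together with the surviving components of the metric and gravitino, organised into a Dolbeault complex on $\CC^2$, while the $B$-field is the dual, coming from the antifields. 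Concretely, the goal is to match the residual gauge symmetry of the twist with a resolution of the Lie algebra of holomorphic Hamiltonian vector fields on $\CC^2$, which is exactly what $\cL_{2,0}=\Omega^{0,\bu}(\CC^2)$ with its Poisson bracket provides, since $\mr H^0$ of that dg Lie algebra is holomorphic functions modulo constants.

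The crux is the interaction term. Having identified the fields, one must show that the twisted action is precisely
\[
\int_{\CC^2} B \wedge \left( \dbar A + \tfrac12 \{A, A\}_\Pi \right),
\]
with no higher corrections and with the quadratic term equal to the holomorphic Poisson bracket of the standard holomorphic symplectic form. The conceptual reason to expect the Poisson bracket, rather than the full bracket of vector fields, is that on a complex surface a nowhere-vanishing holomorphic two-form is simultaneously a symplectic form and a volume form, so the diffeomorphisms that survive the twist of a supergravity theory --- which, as in Costello and Li's and related analyses of twisted ten- and eleven-dimensional supergravity, are the volume-preserving ones --- coincide here with the Hamiltonian ones. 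Making this precise requires tracking the full nonlinear structure through the twist, and it is exactly here that the auxiliary-field content of off-shell $\cN=1$ supergravity makes things delicate; this is the main obstacle. (Quantization is then automatic: once the twisted classical theory is identified as Poisson BF theory on $\CC^2$ it is of cotangent type, and Theorem \ref{thm:quantization} applies with $m=0$, so the equivalence also holds at the quantum level, up to the usual choice of a quasi-isomorphism.)

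As a cross-check, and as an alternative route, one can use the evidence already assembled here: reducing the conjectural twist of 5d $\cN=1$ supergravity on $\CC^2\times\RR$ along $\RR$ produces the cotangent theory to $\Omega^{0,\bu}(\CC^2)[\ep]$, whose first summand is exactly holomorphic Poisson BF theory on $\CC^2$, so identifying the remaining $\beta\gamma$-type summand with the twist of the extra $\cN=1$ multiplet in the $\cN=1$ decomposition of 4d $\cN=2$ supergravity would upgrade the 5d conjecture, via dimensional reduction, to a proof of this one. One could also bypass 5d and deduce the statement by reducing the known twist of eleven-dimensional supergravity on a $G_2$-manifold (in the simplest case $\RR\times X_7$ assembled from $\CC^2$ and flat factors, as in the M-theory discussion above), extracting the pure gravitational sector. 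Either way the hard part is the same: controlling an off-shell or higher-dimensional supergravity Lagrangian precisely enough to see that the twisted interaction is the holomorphic Poisson bracket and nothing more.
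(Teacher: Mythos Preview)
The statement you are addressing is a \emph{conjecture} in the paper, not a theorem, and the paper does not supply a proof. What the paper offers is supporting evidence: the decomposition of the dimensionally reduced 5d theory into Poisson BF on $\CC^2$ plus a $\beta\gamma$ system, and the remark that the statement would follow from the conjectural description of the twist of $M$-theory on $Y \times \CC^2$ (for $Y$ a $G_2$-manifold) via reduction along $Y$. There is therefore no ``paper's own proof'' to compare against.

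Your proposal is a reasonable research outline, and indeed it tracks closely the two avenues the paper itself gestures at: a direct BV computation of the holomorphic twist of off-shell 4d $\cN=1$ supergravity, and a derivation by dimensional reduction from higher-dimensional twisted supergravity. You also correctly identify where the genuine difficulty lies --- controlling the nonlinear interaction through the twist well enough to see the holomorphic Poisson bracket and nothing more --- and you flag this as ``the main obstacle'' rather than claiming to have resolved it. That is honest, but it also means your write-up is a plan, not a proof: the linearised cohomology computation you sketch is plausible but not carried out, and the identification of the interaction is asserted on conceptual grounds (volume-preserving equals Hamiltonian on a symplectic surface) without the actual off-shell calculation. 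So there is no error to point to, but there is also no completed argument; the status of the statement remains conjectural, exactly as in the paper.
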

 
\begin{remark}
Let $Y$ be a G2 manifold.
This conjecture can also be deduced from the dimensional reduction of the conjectural twist of $M$-theory on $Y \times \CC^2$, given in \cite{PhilSurya}, along the G2 manifold $Y$. 
\end{remark}

\begin{remark}
In the previous section we have shown that Poisson BF theory on $\CC^2$ is non-anomalous. 
This is compatible with the conjecture above and the result of Alvarez-Gaum\'{e} and Witten in \cite{GaumeWittenSUGRA} that there are no pure gravitational anomalies in any theory of supergravity in dimension $4k$, where $k \in \ZZ$.
\end{remark}

\section{Factorization Algebras and Symplectic Symmetry} \label{factn_section}

\subsection{Factorization Algebras}

In the introduction we have mentioned the relationship of quantum field theory to factorization algebras. 
Thanks to the foundational work of \cite{Book2}, we know that to every quantum field theory one can associate a factorization algebra of ``observables".  

At the classical level, this idea is based on the following sequence of observations.
First, note that the definition of the space of fields $\cE$ of a theory is local on spacetime; it is given as the sheaf of smooth sections of a (graded) vector bundle. 
For such sheaves, the continuous dual $\cE^\vee$ carries the structure of a cosheaf. 
The natural product on the completed symmetric algebra $\cO (\cE) = \Hat{\Sym}(\cE^\vee)$ endows the factorization structure maps as in Definition \ref{dfn:fact}. 

Together with the BV bracket, the classical action functional $S$ determines a differential $\{S, -\}_{\mr{BV}}$ on $\cO(\cE)$.
\begin{definition}
The factorization algebra of \emph{classical observables} in the theory $(\mc E, S)$ (a factorization algebra valued in cochain complexes)  is defined to be
\[\Obs^{\rm cl} = \bigg(\cO(\cE) \; , \; \{S,-\}_{\mr{BV}} \bigg) .\]
\end{definition}

Consider the example of Poisson BF theory on $X \times M$, where $X$ is a complex manifold equipped with a holomorphic Poisson structure $\Pi$, so $\mc E = \mc L_{\Pi, M} [1] \oplus \mc L^!_{\Pi, M}[-2]$.  
For simplicity, denote $\cL = \cL_{\Pi,M}$ in what follows.

We first note that the continuous dual of $\mc E = \mc L [1] \oplus \mc L^![-2]$ is 
\[
\cE^\vee = \Bar{\cL}^!_c [-1]  \oplus \Bar{\mc L}_c [2] 
\]
where $\Bar{(-)}_c$ denotes the space of compactly supported distributional sections.
Hence, as a graded vector space the observables have the form $\Sym\left(\Bar{\cL}^!_c [-1]\right)  \otimes \Sym\left(\Bar{\mc L}_c [2]\right)$.  

The classical BV differential $\{S,-\}_{\mr{BV}}$ can be identified with the Chevalley--Eilenberg differential of the dg Lie algebra $\cL$ with values in the module $\Sym \left(\Bar{\cL}_c[2] \right)$. 
Thus, the classical observables have the form
\[
\Obs^{\rm cl} = \clie^\bu \bigg(\cL \; ; \; \Sym \left(\Bar{\cL}_c [2]\right) \bigg) .
\]

Fix an open set inside $X \times M$, which for simplcity we'll take of the form $U \times V$ where $U$ is an open subset of $X$ and $V$ is an open subset of $M$.   
Then we have the dg Lie algebra
\[
\cL (U \times V) = \Omega^{0,\bu}(U) \Hat{\otimes} \Omega^\bu(V) 
\]
with differential $\dbar + \d_{\rm dR}$ and bracket $\{-,-\}_\Pi$.
The observables supported on $U \times V$ take the form
\[ 
\Obs^{\rm cl} (U \times V) = \clie^\bu \bigg(\Omega^{0,\bu}(U) \Hat{\otimes} \Omega^\bu(V) \; ; \; \Sym \left(\Bar{\Omega}_c^{0,\bu}(U) \Hat{\otimes} \Bar{\Omega}_c^\bu(V) [2]\right) \bigg) .
\]

We now immediately observe the following.

\begin{prop}
The factorization algebra $\Obs^{\rm cl}$ is not Betti topological.
\end{prop}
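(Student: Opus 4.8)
The plan is to exhibit a concrete homotopy equivalence $U \times V \hookrightarrow U' \times V'$ of open sets along which the structure map on classical observables fails to be a quasi-isomorphism; the natural choice is a pair of concentric balls. Concretely, I would take $X = \CC^{2n}$, $M = \RR^m$ and compare $\Obs^{\rm cl}$ on a small polydisk-times-ball with $\Obs^{\rm cl}$ on a larger one. Since $\Obs^{\rm cl}(U \times V) = \clie^\bu(\Omega^{0,\bu}(U) \,\Hat\otimes\, \Omega^\bu(V); \Sym(\Bar\Omega^{0,\bu}_c(U) \,\Hat\otimes\, \Bar\Omega^\bu_c(V)[2]))$, the first step is to reduce the computation to something tractable by passing to the associated graded for the filtration by symmetric powers, or equivalently by first computing the free-theory piece and arguing that the interaction term does not rescue the failure. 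The free part of $\Obs^{\rm cl}$ is the Chevalley--Eilenberg complex of an abelian dg Lie algebra, so its cohomology is a completed symmetric algebra on the (dual of the) cohomology $\mr H^\bu(\Omega^{0,\bu}(U)\,\Hat\otimes\,\Omega^\bu(V))$ together with the $B$-field contribution.

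The key point is that while $\Omega^\bu(V)$ computes the topological (de Rham / Betti) invariants of $V$, so is a homotopy invariant, the Dolbeault complex $\Omega^{0,\bu}(U)$ is very far from being one: for $U$ a polydisk (or any Stein open) its cohomology is the infinite-dimensional space $\cO^{\rm hol}(U)$ of holomorphic functions, concentrated in degree $0$, and the restriction map $\cO^{\rm hol}(U') \to \cO^{\rm hol}(U)$ for $U \subset U'$ is never a quasi-isomorphism — it is a dense but proper inclusion of Fréchet spaces, and in particular not surjective, hence not an isomorphism of the degree-$0$ cohomology. So the second step is: compute $\mr H^\bu(\Obs^{\rm cl}(U\times V))$ for $U$ a polydisk and $V$ a ball, identify it (for the free theory) with $\Hat\Sym$ of the linear dual of $\cO^{\rm hol}(U) \otimes \mr H^\bu(V) $ in the appropriate degrees, and observe that the map induced by enlarging $U$ is not a quasi-isomorphism because already on linear observables it is the non-isomorphism $\cO^{\rm hol}(U')^\vee \to \cO^{\rm hol}(U)^\vee$ (or its continuous-dual/distributional variant). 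The third step is to upgrade this from the free theory to the interacting Poisson BF theory: here one uses the symmetric-power filtration on $\cO(\cE)$, whose differential's associated graded is the free BV differential, to conclude that the failure of quasi-isomorphism already visible on the $E_1$-page (the free theory) cannot be cancelled — one only needs to check that the relevant cohomology class survives, which is immediate in the lowest filtration weight since $\{I,-\}$ strictly raises symmetric degree.

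I expect the main obstacle to be purely bookkeeping of the functional analysis: keeping straight which spaces are the smooth sections, which are the compactly supported distributional sections $\Bar{(-)}_c$, and ensuring the dualities are handled with the correct completed tensor products so that ``not a quasi-isomorphism'' is a rigorous statement rather than a heuristic about dimensions. A clean way to sidestep most of this is to not compute the full cohomology at all, but simply to test against observables: produce an explicit cochain in $\Obs^{\rm cl}(U'\times V')$ — for instance a linear observable built from pairing the $B$-field against a holomorphic function on $U'$ that does not extend, or rather a suitable compactly-supported dual datum distinguishing $U$ from $U'$ — that is closed, not exact, and whose image under restriction becomes exact (or vice versa, an un-hit class appearing only on the smaller set). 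Since the theory is de Rham topological by the Theorem of the previous section (so the $V$-direction genuinely contributes nothing beyond homotopy type), one is really only exploiting the holomorphic direction, and there the statement is essentially the classical fact that Dolbeault cohomology of Stein opens is not a homotopy invariant. I would therefore structure the proof as: (i) reduce to $X = \CC^{2n}$, $M = \RR^m$ and to concentric polydisks; (ii) reduce to the free theory via the symmetric-degree filtration; (iii) reduce to linear observables; (iv) invoke non-surjectivity of restriction of holomorphic functions between nested polydisks.
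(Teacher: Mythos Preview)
Your proposal is correct and follows essentially the same approach as the paper: both compare classical observables on concentric polydisks $D_r(0) \times V \hookrightarrow D_R(0) \times V$ and trace the failure of quasi-isomorphism to the fact that the restriction map on Dolbeault cohomology $\cO^{\rm hol}(D_R) \to \cO^{\rm hol}(D_r)$ is not an isomorphism. The paper's proof is a three-sentence sketch that invokes this Dolbeault fact directly, whereas you spell out the reduction through the symmetric-degree filtration to the free theory and to linear observables; your extra care is not strictly needed for the level of detail the paper aims at, but it is a sound way to make the argument airtight.
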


\begin{proof}
Write $D_r(0) \sub \CC^{2n}$ for the polydisk around 0 of radius $r$.  Choose $r < R$, and any open subset $V$ of $\RR^m$, and consider the inclusion $D_r(0) \times V \inj D_R(0) \times V$.  The associated map on classical local observables is not an equivalence: this is dual to the observation that the map $\Omega^{0,\bu}(D_r(0)) \to \Omega^{0,\bu}(D_R(0))$ is not surjective on cohomology.
\end{proof}

\subsection{Translation and Dilation Actions} \label{translation_section}

Let us begin by recalling what it means for a group to act on a classical or quantum field theory, using the language of factorization algebras.  We will not include all details here, for a more thorough account see \cite[Chapter 4.8]{Book2} and \cite[Section 2]{ElliottSafronov}.  Let $G$ be a Lie group with Lie algebra $\gg$, and fix a smooth action $\rho$ of $G$ on the spacetime manifold $X$.  If $\obs$ is a factorization algebra on $X$, we can define a smooth action of $G$ on $\obs$ in the following way.

\begin{definition}[{\cite[Definition 2.11]{ElliottSafronov}}]
A \emph{smooth action} of $G$ on $\obs$ consists of an isomorphism $\alpha_g \colon \obs(U) \to \obs(\rho(g)(U))$ for every $g \in G$ and open $U \sub X$, satisfying the following conditions.
\begin{enumerate}
 \item $\alpha_{g_1g_2} = \alpha_{g_1} \circ \alpha_{g_2}$ for all $g_1, g_2$ in $G$.
 \item The map $\alpha_g$ commutes with the factorization structure, for all $g$ in $G$.
 \item For all collections of pairwise disjoint open subsets $U_1, \ldots, U_k$ of an open set $V$, the map
 \[m \colon \{(g_1, \ldots, g_k) \in G^k \colon g_k(U_k) \text{ are disjoint subsets of } V\} \to \hom\left(\bigotimes_{i=1}^k \obs(U_i), \obs(V)\right),\]
 defined by first acting by $(g_1, \ldots, g_k)$ then using the factorization structure, is smooth.
 \item There is an infinitesimal action $\rho \colon \gg \to \mr{Der}(\obs)$ of $\gg$ such that for all $w \in \gg$ and $i = 1, \ldots, k$,
 \[\dd_{w,i}m_{g_1, \ldots, g_k}(\mc{O}_1, \ldots, \mc{O}_k) \iso m_{g_1, \ldots, g_k}(\mc{O}_1, \ldots, \ldots, \rho(w)\mc{O}_i, \ldots, \mc{O}_k)\]
 where the map $\dd_{w,i}$ id the directional derivative on $G^k$ with respect to the tangent vector
 \[(0, \ldots, L_{g_i}(w), \ldots, 0) \in T_{g_1, \ldots, g_k}G^k,\]
 where the non-zero element is placed in the $i^{\mr{th}}$ slot.
\end{enumerate}
\end{definition}

The smooth action extends the infinitesimal action $\rho$ of the Lie algebra $\gg$ to a global action of the Lie group $G$.  With this definition in hand, we can now make precise what it means for a field theory to be de Rham topological.  We say a smooth action is \emph{de Rham} if this infinitesimal action is homotopically trivialized, in the following sense.

\begin{definition}[{\cite[Definition 2.18]{ElliottSafronov}}] \label{dR_action_def}
Define $\gg_{\mr{dR}}$ to be the dg Lie algebra with underlying cochain complex $\gg[1] \overset {\mr{id}} \to \gg$, where the degree zero Lie algebra $\fg$ acts on $\fg[1]$ by the adjoint representation. 

An action $\rho$ of a Lie algebra $\gg$ on a factorization algebra $\obs$ is \emph{de Rham} if $\rho$ is equipped with an extension to
\[\rho_{\mr{dR}} \colon \gg_{\mr{dR}} \to \mr{Der}(\obs).\]
A \emph{de Rham} action of a Lie group $G$ on $\obs$ is a smooth action of $G$ where the infinitesimal action of $\gg$ is extended to a de Rham action.
\end{definition}

\begin{definition}
We say a factorization algebra $\obs$ on $\RR^n$ is \emph{de Rham topological} if there is a de Rham action of $\RR^n$, where $\RR^n$ acts on itself by translations.
\end{definition}

The last general bit of background we need concerns a special type of action of a Lie algebra $\fg$ on the classical factorization algebra of observables $\Obs^{\rm cl}$. 
Recall that the BV bracket $\{-,-\}_{\mr{BV}}$ endows the shift of the cochain complex of local functionals $\oloc(\cE)[-1]$ with the structure of a dg Lie algebra.
The differential is given by $\{S,-\}_{\mr{BV}}$. 

\begin{dfn}
An \emph{inner action} of a dg Lie algebra $\fg$ on a classical field theory is a map of $L_\infty$ algebras
\[
\cP \colon \fg \to \oloc(\cE)[-1] .
\]
\end{dfn}

Giving an inner action on a theory is equivalent to prescribing an element
\[
\cP \in {\rm C}_{\rm red}^\bu(\fg) \otimes \oloc(\cE)[-1]
\]
satisfying the {\em equivariant classical master equation}
\begin{equation}\label{eqn:equivcme}
\d_{\fg} \cP + \{S, \cP\}_{\mr{BV}} + \frac{1}{2} \{\cP, \cP\}_{\mr{BV}} = 0
\end{equation}
where $\d_{\fg}$ denotes the Chevalley--Eilenberg differential for $\fg$. 

Notice that through the BV bracket $\{-,-\}_{\mr{BV}}$ the dg Lie algebra $\oloc(\cE)[-1]$ acts on the classical observables.  That is, there is a dg Lie map
\begin{align*}
\oloc(\cE)[-1] &\to {\rm Der}(\Obs^{\rm cl}) \\
\mc O &\mapsto \{\mc O, -\}_{\mr{BV}}.
\end{align*}
Therefore, any inner action by a Lie algebra $\fg$ determines an action of $\fg$ on the factorization algebra $\Obs^{\rm cl}$ as defined above.

Let us now focus attention on the example of Poisson BF theory.  
As in \S \ref{sec:quantization} we consider Poisson BF theory on $\CC^{2n}\times \RR^m$ where $\CC^{2n}$ is equipped with its standard symplectic structure.
We will begin by constructing a de Rham action of the Lie algebra of holomorphic functions which is equipped with the Poisson bracket. 
As above, $\cE_{n,m}$ denotes the fields of Poisson BF theory and $\cL_n$ denotes the dg Lie algebra $\Omega^{0,\bu}(\CC^{2n})$ resolving the Lie algebra of holomorphic functions on $\CC^{2n}$. 

\begin{theorem} \label{de_rham_holo_action_thm}
There is an inner action of $(\cL_{n})_{\rm dR}$ on Poisson BF theory on $\CC^{2n} \times \RR^m$. 
This induces a de Rham action of the Lie algebra $\mc L_n$ on the classical observables.
\end{theorem}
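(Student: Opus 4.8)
The plan is to exhibit directly an element $\cP = \cP_0 + \cP_1$ of $\cred^\bu\!\left((\cL_n)_{\dR}\right) \otimes \oloc(\cE_{n,m})[-1]$ solving the equivariant classical master equation (\ref{eqn:equivcme}) for the dg Lie algebra $(\cL_n)_{\dR} = (\cL_n[1] \xto{\mr{id}} \cL_n)$, with $\cP_0$ linear in the bottom copy $\cL_n$ and $\cP_1$ linear in the shifted copy $\cL_n[1]$. For $\cP_0$ I take the moment map for the Poisson action of $\cL_n$ on $\cL_{n,m}$,
\[
\cP_0(g) = \int_{\CC^{2n} \times \RR^m} B \wedge \{g, A\}_\Pi ,
\]
and for $\cP_1$ simply
\[
\cP_1(g) = \pm \int_{\CC^{2n} \times \RR^m} B \wedge g ,
\]
using the inclusion $\cL_n = \Omega^{0,\bu}(\CC^{2n}) \hookrightarrow \cL_{n,m}$ and the pairing of $\cL_{n,m}$ with $\cL_{n,m}^!$. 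Geometrically $\cP_1(g)$ is the Hamiltonian generating the translation $A \mapsto A + g$ of the $A$-field, and the crux of the proof is that, infinitesimally, this translation changes the action $S = S_{\rm free} + I$ of (\ref{eqn:action2}) by exactly $\cP_0(g)$ plus a term recording $\dbar g$ --- which is precisely the homotopy triviality we are after.

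Since the shifted copy $\cL_n[1]$ is an abelian ideal of $(\cL_n)_{\dR}$, the pair $\cP = \cP_0 + \cP_1$ is automatically a \emph{strict} $L_\infty$-morphism, with no higher homotopies needed, once three things are checked: (i) $\cP_0$ is a strict inner action of $\cL_n$ --- the standard fact that a strict action by dg Lie derivations induces a strict inner action on the associated cotangent theory; (ii) $\{S, \cP_1(g)\}_{\mr{BV}} = \cP_0(g) + \cP_1(\dbar g)$; and (iii) $\{\cP_0(g), \cP_1(g')\}_{\mr{BV}} = \pm \cP_1(\{g, g'\}_\Pi)$ together with $\{\cP_1(g), \cP_1(g')\}_{\mr{BV}} = 0$. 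Item (iii) is short: $\cP_1(g)$ is linear in $B$ and independent of $A$, so two such functionals have vanishing $\{-,-\}_{\mr{BV}}$-bracket, while the cross bracket with $\cP_0(g)$ is a one-line integration by parts. Item (ii) is where the work is: $\{S_{\rm free}, \cP_1(g)\}_{\mr{BV}}$ produces $\cP_1(\dbar g)$ after one integration by parts, and $\{I, \cP_1(g)\}_{\mr{BV}}$ produces $\cP_0(g)$, using the Poisson integration-by-parts identity $\int \{f, g\} h = -\int g \{f, h\}$, valid mod total derivatives because Hamiltonian vector fields are divergence-free with respect to the holomorphic volume form on $\CC^{2n}$.

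Granting $\cP$, the second assertion is formal: composing with the dg Lie algebra map $\oloc(\cE_{n,m})[-1] \to \mr{Der}(\Obs^{\rm cl})$, $\cO \mapsto \{\cO, -\}_{\mr{BV}}$, recalled above, gives an action of $(\cL_n)_{\dR}$ on the classical observables whose restriction along $\cL_n \hookrightarrow (\cL_n)_{\dR}$ is the Poisson action of $\cL_n$ and whose component on $\cL_n[1]$ trivializes it up to homotopy --- a de Rham action in the sense of Definition \ref{dR_action_def}. The only step requiring real care is item (ii), in particular the sign-consistent check that the interaction $I$ contributes exactly $\cP_0(g)$; I do not anticipate a serious obstacle beyond the bookkeeping of signs and degree shifts in the BV formalism, since the geometric picture --- that $\cP_0(g)$ generates the derivative of the field translation $A \mapsto A + g$ --- makes the homotopy triviality essentially manifest.
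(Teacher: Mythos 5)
Your proposal is correct and follows essentially the same route as the paper: your $\cP_0$ and $\cP_1$ are exactly the potentials $\cP_\alpha = \int B \wedge \{p^*\alpha, A\}_\Pi$ and $\cQ_\alpha = \int B \wedge p^*\alpha$ used there, and your checks (i)--(iii) are precisely the three components into which the paper decomposes the equivariant classical master equation for the strict dg Lie map $(\cP,\cQ)\colon (\cL_n)_{\dR} \to \oloc(\cE_{n,m})[-1]$. The remaining work you defer (the sign-consistent verification that $\{I,\cQ_\alpha\}_{\mr{BV}} = -\cP_\alpha$) is carried out in the paper exactly as you describe.
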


\begin{proof}
We will construct a (strict) map of dg Lie algebras 
\[
\cP_{\dR} \colon (\cL_n)_{\rm dR} = \cL_n \oplus \cL_n [1] \to \oloc(\cE_{n,m}) [-1]
\]
which we split up as a pair of linear maps $\cP_{\dR} = (\cP, \cQ)$. 

Given $\alpha \in \cL_{n}$ define the local functionals $\cP_\alpha, \cQ_\alpha \in \oloc(\cE_{n,m})$ by
\begin{align}
\cP_\alpha (A,B) & = \int B \wedge \{p^* \alpha, A\}_\Pi \\
\cQ_\alpha (A,B) & = \int B \wedge p^* \alpha . \label{eqn:Q}
\end{align}
Here, $p \colon \CC^{2n} \times \RR^m \to \CC^{2n}$ is the projection.
Together, these define the pair of linear maps $\cP \colon \alpha \mapsto \cP_\alpha$ and $\cQ \colon \alpha \mapsto \cQ_\alpha$. 
Note that if $\alpha$ is a $(0,k)$-form then $\cP_\alpha$ is a local functional of degree $k-1$ and $Q_\alpha$ is a local functional of degree $k-2$. 

The equivariant classical master equation (\ref{eqn:equivcme}) for $\cP_{\rm dR}$ is equivalent to the three equations
\begin{align*}
\d_{\cL_n} \cP + \{S, \cP\}_{\mr{BV}} + \frac{1}{2} \{\cP, \cP\}_{\mr{BV}} & = 0 \\
\dbar \cQ + \cP + \{S, \cQ\}_{\mr{BV}} & = 0  \\
\d_{\Pi} \cQ + \{\cP, \cQ\}_{\mr{BV}} & = 0 .
\end{align*}
The first equation is implied by the ordinary classical master equation $\{S, S\}_{\mr{BV}} = 0$.
(This simply says that the classical theory is equivariant for the dg Lie algebra $\cL_n$, which is clear.)

Next, let's consider the third equation.
Here, $\d_{\Pi}$ denotes the component of the Chevalley--Eilenberg differential for $(\cL_n)_{\rm dR}$ arising from the Poisson bracket. 
We apply the left-hand side to a pair of forms $\alpha \in \cL_n$ and $\beta \in \cL_n [1]$. 
Then $(\d_\Pi \cQ)(\alpha,\beta) = \cQ_{\{\alpha, \beta\}_\Pi}$
and 
\begin{align*}
\{\cP, \cQ\}_{\mr{BV}}(\alpha, \beta) = - \{\cP_\alpha, \cQ_\beta\}_{\mr{BV}} & = - \left\{\int B \wedge \{p^* \alpha, A\}_\Pi , \int B \wedge p^* \beta \right\} \\ 
& = - \int B \wedge\{p^* \alpha, p^* \beta\}_\Pi \\ & = - \cQ_{\{\alpha,\beta\}_\Pi} .
\end{align*}

Finally, we turn our attention to the second equation.
Decomposing the action into free and interacting summands, $S = S_{\rm free} + I$ where $I = \int B \wedge \{A, A\}_\Pi$, and recalling that $\{S_{\rm free} , -\}_{\mr{BV}} = \dbar$, we see that the second equation is equivalent to $\cP + \{I, \cQ\}_{\mr{BV}} = 0$. 
The operator $\{I, -\}_{\mr{BV}}$ is the component of the Chevalley--Eilenberg differential for $\cL_{n,m}$ which encodes the Lie bracket $\{-,-\}_\Pi$.
Evaluating this on an element $\alpha \in \cL_n$ we thus obtain
\begin{equation} \label{eqn:Q2}
\{I, \cQ\}_{\mr{BV}} (\alpha) = - \int B \wedge \{p^* \alpha, A\}_\Pi = - \cP_\alpha
\end{equation}
as desired. 
\end{proof}

Given any translation invariant vector field $v$ on $\CC^{2n} \times \RR^m$ we define the local functional
\begin{equation}\label{eqn:T}
\cT_v (A, B) = \int B \wedge L_{v} (A) 
\end{equation}
where $L_v (-)$ denotes the Lie derivative. 
This defines an inner action of infinitesimal translations 
\[\cT \colon \RR^{4n + m} \to \oloc(\cE_{n,m})[-1]\] 
sending $v \mapsto \cT_v$. 

Let us fix Darboux coordinates $\{z_i\}$ and $\{w_j\}$ on $\CC^{2n}$. 
The Lie algebra of holomorphic translations spanned by holomorphic derivatives in $z_i$ and $w_j$ admits a linear map to holomorphic functions via 
\[
\begin{array}{ccccccc}
\RR^{2n} & \to & \cO^{\rm hol}(\CC^{2n})\\
\frac{\partial}{\partial z_i} & \mapsto & w_i \\
\frac{\partial}{\partial w_j} & \mapsto & - z_j .
\end{array}
\]
Of course, this map is {\em not} a Lie map since $\{z_i, w_j \}_\Pi = \delta_{ij}$. 
However, since the Poisson bracket of the constant function $1$ with any holomorphic function is zero, we see that the composition 
\[
\RR^{2n} \to \cO^{\rm hol} (\CC^{2n}) \xto{\simeq} \Omega^{0,\bu}(\CC^{2n}) \xto{\cP} \oloc(\cE_{n,m}) [-1] ,
\]
where $\cP$ is defined in the proof of the previous theorem, {\em is} a map of dg Lie algebras. 
In fact, if $v = \frac{\partial}{\partial z_i}$ or $\frac{\partial}{\partial w_j}$ is a holomorphic translation invariant vector field, then $\cT_v = \cP_{w_i}$ or $-\cP_{z_j}$, respectively. 

\begin{corollary}
The classical Poisson BF theory on $\CC^{2n} \times \RR^m$ is de Rham translation invariant. 
\end{corollary}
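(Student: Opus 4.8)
The plan is to exhibit, for each translation-invariant vector field $v$ on $\CC^{2n}\times\RR^m$, a cohomological degree $-1$ local functional $\mu_v$ with $\{S,\mu_v\}_{\mr{BV}}=\pm\cT_v$, so that the derivation $\eta(v)=\{\mu_v,-\}_{\mr{BV}}$ of $\Obs^{\rm cl}$ homotopically trivializes the infinitesimal translation $\rho(v)=\{\cT_v,-\}_{\mr{BV}}$, and then to check that $(\rho,\eta)$ assembles into a de Rham action of the translation group $\RR^{4n+m}$ in the sense of Definition \ref{dR_action_def}. The smooth action of $\RR^{4n+m}$ is automatic from translation invariance of the theory, so the content is the trivialization data. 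I would organize the argument along the splitting $\RR^{4n+m}=\RR^{2n}_{\rm hol}\oplus\RR^{2n}_{\rm antihol}\oplus\RR^m$ of the translations into holomorphic, antiholomorphic and real directions.

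For $v$ in the antiholomorphic or real summand, Cartan's formula gives $L_v=[\iota_v,\dbar+\d_{\rm dR}]$ on $\cL_{n,m}=\Omega^{0,\bu}(\CC^{2n})\,\Hat{\otimes}\,\Omega^\bu(\RR^m)$ (the contraction $\iota_v$ is non-zero in these form degrees), and $\{S_{\rm free},-\}_{\mr{BV}}$ acts as $\dbar+\d_{\rm dR}$; taking $\mu_v(A,B)=\int B\wedge\iota_v A$ one then checks $\{S,\mu_v\}_{\mr{BV}}=\pm\cT_v$, the interaction term dropping out because $\iota_v$ is a derivation of $\{-,-\}_\Pi$ for these directions. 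For $v=\partial_{z_i}$ or $\partial_{w_j}$ in the holomorphic summand I would instead appeal to Theorem \ref{de_rham_holo_action_thm}: we observed that $\cT_{\partial_{z_i}}=\cP_{w_i}$ and $\cT_{\partial_{w_j}}=-\cP_{z_j}$, and since $\dbar w_i=\dbar z_j=0$ the second identity in the proof of that theorem specializes to $\{S,\cQ_{w_i}\}_{\mr{BV}}=-\cP_{w_i}$ and $\{S,\cQ_{z_j}\}_{\mr{BV}}=-\cP_{z_j}$, so the de Rham partner $\cP_{\rm dR}=(\cP,\cQ)$ supplies $\mu_{\partial_{z_i}}=-\cQ_{w_i}$ and $\mu_{\partial_{w_j}}=\cQ_{z_j}$.

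It then remains to verify the bracket relations that make $(\rho,\eta)$ an inner action of $(\RR^{4n+m})_{\rm dR}$. Many are immediate: $\{\cQ_\alpha,\cQ_\beta\}_{\mr{BV}}=0$ together with the vanishing of $\{\mu_v,\mu_w\}_{\mr{BV}}$ for antiholomorphic/real $v,w$ give the $\eta$--$\eta$ relations, and the mixed relations between the holomorphic and the antiholomorphic/real trivializations vanish because holomorphicity kills the relevant contractions and Lie derivatives (e.g. $\iota_{\partial_{\bar z_i}}(p^*\alpha)=0$ and $L_{\partial_{\bar z_i}}(p^*\alpha)=0$ for holomorphic $\alpha$). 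The point that requires real care is the bracket of two holomorphic translations: $\{\cT_{\partial_{z_i}},\mu_{\partial_{w_j}}\}_{\mr{BV}}=\{\cP_{w_i},\cQ_{z_j}\}_{\mr{BV}}=\cQ_{\{w_i,z_j\}_\Pi}=-\delta_{ij}\cQ_1$, where $\cQ_1(A,B)=\int B$ is built from the constant function $1\in\cL_n$. This is precisely the shadow, at the level of homotopies, of the fact recorded above that the linear lift $\RR^{2n}\to\cO^{\rm hol}(\CC^{2n})$ is not a Lie map, its failure being valued in the constants $\CC$, so that the holomorphic translations genuinely live in $\cO^{\rm hol}(\CC^{2n})/\CC$. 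I expect this to be the main obstacle. I would resolve it by observing that $\{\cQ_1,-\}_{\mr{BV}}$ --- the derivation translating $A$ by a constant --- is, up to homotopy, trivial on $\Obs^{\rm cl}$, so the discrepancy can be absorbed into a higher homotopy of the $L_\infty$-morphism $(\RR^{4n+m})_{\rm dR}\to\mr{Der}(\Obs^{\rm cl})$; equivalently, one may correct $\cQ_{w_i},\cQ_{z_j}$ so that the constant terms cancel, or work with the bare notion of de Rham action (a degree $-1$ abelian action $\eta$ with $\d\eta(v)=\rho(v)$), for which only the $\eta$--$\eta$ relations, which hold on the nose, are required. With this last point settled, the collection $(\rho,\eta)$ constitutes a de Rham action of $\RR^{4n+m}$ on $\Obs^{\rm cl}$, which is the assertion of the corollary.
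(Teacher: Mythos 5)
Your construction follows the paper's route essentially verbatim: the same splitting of $\RR^{4n+m}$ into holomorphic, antiholomorphic and real translations, the potentials $\cS_v=\int B\wedge\iota_v A$ with Cartan's formula for the latter two summands, and the potentials $\pm\cQ_{w_i},\mp\cQ_{z_j}$ imported from Theorem \ref{de_rham_holo_action_thm} via the identity (\ref{eqn:Q2}) for the holomorphic summand. Where you go beyond the paper is in actually expanding the quadratic term of the equivariant classical master equation, and there you have put your finger on a real subtlety: the cross-bracket $\{\cT_{\partial_{z_i}},\cS_{\partial_{w_j}}\}=\{\cP_{w_i},-\cQ_{z_j}\}=\pm\delta_{ij}\cQ_1$ with $\cQ_1(A,B)=\int B$ does not vanish, and it is exactly the shadow of the central extension $0\to\CC\to\cO^{\mr{hol}}(\CC^{2n})\to\mr{Vect}^{\mr{hol}}_{\mr{symp}}(\CC^{2n})\to 0$; the paper's assertion that ``the potentials commute with the infinitesimal translation action'' is too quick at precisely this point, since the argument that kills $\cP_1$ (namely $\{1,-\}_\Pi=0$) does not kill $\cQ_1$.

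The gap in your writeup is that none of the three proposed resolutions is yet a proof. The claim that $\{\cQ_1,-\}$ is homotopically trivial on $\Obs^{\rm cl}$ is the crux and is asserted rather than argued: $\{\cQ_1,-\}$ is the contraction $\iota_1$ against the constant function, which is closed as a derivation (because $\dbar 1=0$ and $1$ is Poisson-central) but corresponds to the \emph{nontrivial} class $[1]\in H^0(\Omega^{0,\bu}(U))$; for it to equal $[\d_{\mr{CE}},h]$ one would need $1$ to be exact in the fields, and it pairs nontrivially with linear observables dual to compactly supported top Dolbeault classes. Unless you can exhibit an explicit null-homotopy, the obstruction cannot be absorbed into higher terms of an $L_\infty$-morphism. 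Your second option fares no better: the only evident $\{S,-\}$-closed corrections in the relevant degree are multiples of $\cQ_1$ itself, and $L_{\partial_{z_i}}(z_j+c)=\delta_{ij}$ for any constant $c$, so the cross-bracket is unchanged. Your third option changes the statement rather than proving it: Definition \ref{dR_action_def} asks for a map of dg Lie algebras out of $\fg_{\dR}$, which for abelian $\fg$ includes the mixed relation $[\rho(v),\eta(w)]=0$ (at least up to coherent homotopy), not merely $\d\eta(v)=\rho(v)$ and $[\eta,\eta]=0$. So either a genuine trivialization of $\iota_1$ must be produced, or the corollary should be established for the weaker, introduction-level notion of de Rham invariance; as written, the argument (yours and the paper's alike) does not yet close this loop.
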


\begin{proof}
First note that Poisson BF theory is translation invariant.  
That is, there is a smooth action of the group $\RR^{4n+m}$ by translations, which acts on the observables infinitesimally by the directional derivative.
The infinitesimal action is inner and defined by the Lie map $\cT \colon \RR^{4n+m} \to \oloc(\cE_{n,m})[-1]$ in (\ref{eqn:T}). 

We must describe an extension of this infinitesimal action to an action of $\RR^{4n+m}_{\mr{dR}}$.
We will do this by finding a potential $\cS$ such that the pair of maps
\[
(\cT, \cS) \colon \RR^{4n+m}_{\rm dR} \to \oloc(\cE_{n,m})[-1]
\]
defines an inner action by $\RR^{4n+m}_{\rm dR}$. 

Choose Darboux coordinates $\{z_i,w_j\}$ for $\CC^{2n}$ as above and denote by $\{t_k\}$ the coordinate on $\RR^m$. 

The potential for translations in the $\Bar{z}_i,  \Bar{w}_j$ and $t_k$ directions can be written as follows.  
Given a translation invariant vector field $v$ in the span of these directions, let $\iota_v$ denote the interior product with $v$, and define the local functional 
\[
\mc S_v (A,B) = \int B \wedge \iota_v(A).\]

The potential for translations in the holomorphic $z_i$ and $w_j$ directions has already been written down in the proof of Theorem \ref{de_rham_holo_action_thm}. 
We set
\[
\cS_{\frac{\partial}{\partial z_i}} = \cQ_{w_i} \quad , \quad \cS_{\frac{\partial}{\partial z_i}} = -\cQ_{z_j} 
\]
where $\cQ$ is defined in (\ref{eqn:Q}). 

We show that the equivariant classical master equation holds for the pair of maps $(\cT, \cS)$:
\begin{equation}
\d_{\RR^{4n+m}_{\rm dR}} (\cT + \cS) + \{S, \cT + \cS\}_{\mr{BV}} + \frac{1}{2} \{\cT + \cS, \cT + \cS\}_{\mr{BV}} = 0
\label{dR_CME}
\end{equation}
where $\d_{\RR^{4n+m}_{\rm dR}}$ is the Chevalley--Eilenberg differential for $\RR^{4n+m}_{\rm dR}$. 
Notice, first, that by translation invariance we have $\{S, \cT\}_{\mr{BV}} = 0$. 
Next, the potentials $\mc Q_v$ for $v$ mutually commute, and they commute with the infinitesimal translation action so the final term also vanishes. 

It suffices to show this equation holds upon applying any fixed translation invariant vector field to the left-hand side.
Applied to a fixed vector field $v$ we have
\[
\d_{\RR^{4n+m}_{\rm dR}} (\cT + \cS) (v) = \cT_v .
\]

For the second term in equation \ref{dR_CME}, suppose first that $v$ is in the span of translations in the $\Bar{z}_i,  \Bar{w}_j$ and $t_k$ directions. 
Then
\[
\{S, \cS\}_{\mr{BV}}(v) = - \left(\dbar + \d_{t} \right) \cS_v - \{I, \cS_v\}_{\mr{BV}} .
\] 
Here, $\d_t$ denotes the de Rham differential in the $\RR^m$ direction. 
Since $\iota_v$ is a derivation for the Poisson bracket $\{-,-\}_\Pi$ the term $\{I, \cS_v\}_{\mr{BV}}$ vanishes. 
Finally,
\[
(\dbar + \d_t) \cS_v = \int B \wedge [\dbar + \d_t, \iota_v] A = \cT_v
\]
by Cartan's formula, and the classical master equation \ref{dR_CME} follows.

Finally, suppose that $v$ is a holomorphic translation, say $\frac{\partial}{\partial z_i}$.
Then $\cT_v = \cP_{w_i}$ and by (\ref{eqn:Q2}) we have 
\[
\{S, \cS\}_{\mr{BV}}(v) = - \{S, \cQ_{w_i}\}_{\mr{BV}} = - \cP_{w_i},
\]
and again equation \ref{dR_CME} follows.
\end{proof}

\begin{remark}
More generally, on a holomorphic symplectic manifold the infinitesimal action of Theorem \ref{de_rham_holo_action_thm} will extend to a smooth de Rham action of the group of holomorphic symplectomorphisms.  On $\CC^{2n}$ such symplectomorphisms include holomorphic translations, and this action combines with the de Rham action of anti-holomorphic translations to make the theory de Rham topological.
\end{remark}

\begin{remark}
We can additionally describe a smooth action of the group $\RR_{>0}$ of dilations, that is, where $c \in \RR_{>0}$ acts on $\CC^{2n} \times \RR^m$ by simultaneously rescaling all the coordinate directions by $c$.  Infinitesimally, this action is described by the action of the Euler vector field on $\mc L_{n,m}$.  This action is not inner and not de Rham, since any de Rham translation invariant factorization algebra which is additionally de Rham dilation invariant is automatically Betti topological \cite[Proposition 3.38]{ElliottSafronov}, which is not the case for Poisson BF theory.
\end{remark}

\subsection{Factorization Algebra of Quantum Observables}

Due to issues of renormalization in the definition of a quantum field theory, as surveyed in \S \ref{sec:renorm}, the definition of the factorization algebra of {\em quantum observables} is much more subtle than the definition of classical observables.  To begin, one first defines a cochain complex of global observables. 
We provide a brief synopsis of the construction of the factorization algebra of quantum observables, but refer the reader to \cite{Book2} for complete details. 

We begin by fixing the data of a quantum field theory described by an effective family $\{I[L]\}$. 
A {\em global observable} $\mc O$ is an assignment of an $\hbar$-dependent functional on the space of fields
\[
O[L] \in \cO(\cE(M)) [[\hbar]]
\]
for each ``length scale" $L > 0$. 
The functionals $O[L], O[L']$ at different length scales $L < L'$ must be related by the renormalization group flow which $O[L'] = W_{L < L'} (O[L])$ (this condition also appears in the definition of an effective family as detailed in \S \ref{sec:renorm}).

The space of global observables is a cochain complex with differential 
\[
\d_L = Q + \{I[L], -\}_L + \hbar \triangle_L
\]
where $I[L]$ is the scale $L$ effective action and $\triangle_L$ is the scale $L$ BV Laplacian. 
The fact that renormalization group flow $W_{L<L'}$ intertwines the differentials $\d_L$ and $\d_{L'}$ turns this into a well-defined definition of the cochain complex of global observables, which we denote by $\Obs^{\rm q}(M)$. 

The next step is to define what a {\em local} observable is.
This is the most technical part of the definition. 
Given an open set $U \subset M$, one says that $\{O[L]\}$ is an element $\Obs^{\rm q} (U)$ if $O[L]$ is supported on $U$ in the limit $L \to 0$. 
Roughly, this means that for $L$ close to zero $O[L]$ has support approximately in the open set. 

The factorization product is described in a similar way to the classical case. 
It utilizes the commutative product on $\cO(\cE)[[\hbar]]$ together with renormalization group flow. 
To get a sense for the definition, let's consider the case of two disjoint open sets $U,V$ in $M$ and the factorization product
\[
m_{U,V} \colon \Obs^{\rm q}(U) \times \Obs^{\rm q} (V) \to \Obs^{\rm q}(U \sqcup V) .
\]
If $O = \{O[L]\} \in \Obs^{\rm q}(U)$ and $O' = \{O'[L]\} \in \Obs^{\rm q}(V)$ then the observable $m_{U,V} (O,O') = \{m_{U,V} (O,O') [L]\}$ is defined by
\[
m_{U,V} (O , O') [L] = \lim_{L' \to 0} W_{L < L'} \left( O [L'] \cdot O[L'] \right) .
\]
Here the $\cdot$ on the right-hand side denotes the commutative product in $\cO(\cE)[[\hbar]]$. 
To see that this is well-defined and gives $U \mapsto \Obs^{\rm q}(U)$ the structure of a (pre)factorization algebra is the content of \cite[Theorem 8.5.1.1]{Book2}. 

\begin{remark}
To handle the issue with supports one must be more careful than using an effective family $\{O[L]\}$ based simply on a ``length scale" $L$.
The correct notion is that of a {\em parametrix}, but since we will not need it here we will skip over this technical detail.
\end{remark}

We finally turn to Poisson BF theory on $\CC^{2n} \times \RR^m$.
Suppose we are in one of the cases of Theorem \ref{thm:quantization} where a translation invariant quantization is guaranteed to exist.  
Fix such a quantization and write $\obs^{\mr{q}}_{n,m}$ for the factorization algebra on $\CC^{2n} \times \RR^m$ of quantum observables.

\begin{theorem} \label{quantum_lc_thm}
The factorization algebra $\obs^{\mr{q}}_{n,m}$ of quantum observables is de Rham but not Betti topological.
\end{theorem}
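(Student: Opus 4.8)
The plan is to treat the two assertions separately: that $\obs^{\mr q}_{n,m}$ is de Rham topological, and that it is not Betti topological. The first is obtained by upgrading the classical de Rham structure --- Theorem \ref{de_rham_holo_action_thm} together with the Corollary that classical Poisson BF theory is de Rham translation invariant --- to the quantum level, and the second by reducing modulo $\hbar$ to the classical statement already proven above.

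For the de Rham property, recall that the quantization fixed in Theorem \ref{thm:quantization} is translation invariant, so the group $\RR^{4n+m}$ acts smoothly on $\obs^{\mr q}_{n,m}$ in the sense of \cite[Chapter 4.8]{Book2}. What remains is to produce the homotopy trivialization of this action, i.e. to lift the classical inner action $(\cT,\cS)\colon\RR^{4n+m}_{\mr{dR}}\to\oloc(\cE_{n,m})[-1]$ of the Corollary to a solution of the \emph{equivariant quantum master equation} at every scale $L$. Since Poisson BF theory is of cotangent type it is one-loop exact (Theorem \ref{thm:oneloop}), so the obstruction to such a lift is $\hbar$-linear, and by the wheel-graph analysis already used in Lemma \ref{lemma:Atype} it is computed by a finite sum of one-loop Feynman weights. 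The key observation I would then make is that every component of the potential --- $\cT_v(A,B)=\int B\wedge L_v A$, $\cS_v(A,B)=\int B\wedge\iota_v A$, and $\cQ_\alpha(A,B)=\int B\wedge p^*\alpha$ --- is at most quadratic in the fields and linear in $B$; hence the regularized BV Laplacian $\triangle_L$ sends each of them to a $c$-number or to zero, and the scale-$L$ equivariant master equation reduces to the classical equivariant master equation (already verified in the Corollary) plus a constant. That constant is the one-loop tadpole of the translation currents and vanishes by translation invariance (momentum conservation) together with the parity of the propagator. Thus the classical de Rham action lifts without correction and $\obs^{\mr q}_{n,m}$ is de Rham topological. (Alternatively, one may phrase the obstruction as a class in a degree-one reduced local Chevalley--Eilenberg complex and invoke the weight-zero Gelfand--Fuchs vanishing from \cite{GKF} used for the anomaly itself, in the style of \cite{BWhol, BWgf}.)

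For the failure of the Betti property, I would argue by reduction modulo $\hbar$. By the construction of \cite{Book2}, $\obs^{\mr q}_{n,m}(W)$ is a cochain complex of flat (in particular $\hbar$-torsion-free) $\CC[[\hbar]]$-modules whose reduction $\obs^{\mr q}_{n,m}(W)\otimes_{\CC[[\hbar]]}\CC$ is quasi-isomorphic to the classical observables $\Obs^{\rm cl}_{n,m}(W)$, compatibly with the factorization structure maps. Suppose for contradiction that $\obs^{\mr q}_{n,m}$ were Betti topological; then for concentric polydisks $D_r(0)\subset D_R(0)$ in $\CC^{2n}$ and any $V\subset\RR^m$ the structure map $\obs^{\mr q}_{n,m}(D_r(0)\times V)\to\obs^{\mr q}_{n,m}(D_R(0)\times V)$ would be a quasi-isomorphism. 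Applying the five lemma to the long exact cohomology sequences coming from the short exact sequences $0\to\obs^{\mr q}_{n,m}\xrightarrow{\hbar}\obs^{\mr q}_{n,m}\to\obs^{\mr q}_{n,m}/\hbar\to 0$ on source and target (using flatness for the injectivity of multiplication by $\hbar$), the induced map modulo $\hbar$ would again be a quasi-isomorphism, hence so would $\Obs^{\rm cl}_{n,m}(D_r(0)\times V)\to\Obs^{\rm cl}_{n,m}(D_R(0)\times V)$. This contradicts the Proposition showing that $\Obs^{\rm cl}$ is not Betti topological, whose proof exhibits precisely this map as failing to be a quasi-isomorphism, dually to the non-surjectivity of $\Omega^{0,\bu}(D_r(0))\to\Omega^{0,\bu}(D_R(0))$ on cohomology.

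The main obstacle is the quantum lift of the de Rham action: although the heuristic ``the potentials are quadratic, so there is no genuine anomaly'' is compelling, making it rigorous requires tracking the renormalization-group flow of $\cT_v[L],\cS_v[L],\cQ_\alpha[L]$ and the exact form of the equivariant master equation at scale $L$, along the lines of \cite[Section 2]{ElliottSafronov}, \cite{CostelloBook} and \cite{BWhol}; one must also be precise about the identification $\obs^{\mr q}_{n,m}/\hbar\simeq\Obs^{\rm cl}_{n,m}$ of factorization algebras (or the corresponding $\hbar$-filtration spectral sequence), which is where I would lean on the general results of \cite{Book2} on the classical limit of quantum observables. By contrast, once that identification is in hand the failure of the Betti property is essentially formal.
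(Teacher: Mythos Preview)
Your overall strategy matches the paper's: for ``not Betti'' you reduce modulo $\hbar$ to the classical statement, and for ``de Rham'' you lift the classical inner $\RR^{4n+m}_{\mr{dR}}$-action by checking that the one-loop equivariant obstruction vanishes. The not-Betti argument is essentially identical to the paper's, with your five-lemma step making explicit what the paper leaves implicit.

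For the de Rham half, however, your key mechanism is weaker than the one the paper actually uses, and as stated it has a gap. You argue that the potentials $\cT_v,\cS_v,\cQ_\alpha$ are at most quadratic and linear in $B$, hence $\triangle_L$ sends them to a $c$-number, so the equivariant QME reduces to the classical one plus a tadpole. But the equivariant one-loop obstruction is not simply $\hbar\triangle_L$ applied to the potential: it is computed by wheel diagrams in which \emph{several} vertices are labelled by the interaction $I$ and one (or more) vertices by the potential, with the background field sitting on an external leg. Your ``quadratic $\Rightarrow$ constant'' observation does not by itself exclude such wheels, and the appeal to ``momentum conservation together with the parity of the propagator'' for the residual constant is heuristic. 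The paper's argument is sharper and avoids this: it observes that the degree $-1$ potential for the holomorphic directions, $\cQ_\alpha(A,B)=\int B\wedge p^*\alpha$, depends only on $B$ and the background field and has \emph{no} $A$-dependence whatsoever. Since every propagator pairs an $A$-leg with a $B$-leg, a vertex with no $A$-leg cannot sit on a wheel at all, so no wheel diagram has a degree $-1$ background field on an external leg, and the equivariant obstruction cocycle vanishes identically on such inputs. This is a stronger statement than ``$\triangle_L(\cQ_\alpha)=0$'' and is what actually closes the argument. Your alternative suggestion of recasting the obstruction as a Gelfand--Fuchs class is viable in principle but is not what the paper does, and would require additional work to set up.
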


\begin{proof}
First, it is straightforward to observe that $\obs^{\mr{q}}_{n,m}$ is not Betti topological.  We can recover the factorization map $\obs^{\mr{cl}}_{n,m}(B_r(0) \times U) \to \obs^{\mr{cl}}_{n,m}(B_R(0) \times U)$ in the classical factorization algebra by reducing the corresponding map in the quantum factorization algebra modulo $\hbar$.  Since this map is not an equivalence at the classical level, it cannot be an equivalence at the quantum level.

To show that $\obs^{\mr{q}}_{n,m}$ is de Rham topological, first note that it admits a smooth translation action by \cite[Proposition 10.1.1.2]{Book2}, using the fact from \ref{thm:oneloop} that our quantization is translation invariant.  We must verify that we can extend the translation action by lifting the infinitesimal $\RR^{4n+m}_{\mr{dR}}$ action to the quantum level.  To do so, we will use a result of Costello and Gwilliam on the equivariant quantization of field theories \cite[Section 12.3]{Book2}.  There is a 1-loop obstruction to lifting the $\RR^{4n+m}_{\mr{dR}}$ from the classical to the quantum level given by a 1-cocycle $\Theta^{\mr{eq}}_{n,m}$ in $C^\bullet_{\mr{red}}(\RR^{4n+m}_{\mr{dR}}, \mc O_{\mr{loc}})$.  We will check that this cocycle automatically vanishes when at least one of the inputs is a degree $-1$ element of $\RR^{4n+m}_{\mr{dR}}$.

The obstruction $\Theta^{\mr{eq}}_{n,m}$ takes a form similar to the description of the anomaly that we saw in equation \ref{holanomaly}. It is computed as a sum of weights of wheel Feynman diagrams where the external legs are labelled by fields $A, B$, or by background fields: elements of $\RR^{4n+m}_{\mr{dR}}$.  It is a straightforward observation that there are no such diagrams whose external legs are labelled by a degree $-1$ auxiliary field in $\RR^{4n+m}[1]$.  Indeed, by the definition \ref{eqn:Q} of the degree $-1$ inner symmetry $\mc Q_v$, it is purely quadratic in $v$ and $B$, and therefore degree $-1$ background fields cannot label the external legs of a wheel diagram.  As such, there is no obstruction to lifting the $\RR^{4n+m}_{\mr{dR}}$ action to the quantum level.
\end{proof}

\begin{remark}
Let $(\CC^n , \Pi)$ be a general (possibly degenerate) translation invariant holomorphic Poisson structure and consider the subspace ${\rm Im} (\Pi) \subset \RR^n$ of holomorphic translations that are in the image of $\Pi$.  
If the quantization of holomorphic BF theory on $\CC^n \times \RR^m$ associated to $\Pi$ exists, the same argument as above shows that the space of translations
\[
{\rm Im}(\Pi) \oplus \left\{\frac{\partial}{\partial \Bar{z}_i}\right\} \oplus \left\{\frac{\partial}{\partial t_j}\right\} \cong {\rm Im}(\Pi) \oplus \RR^n \oplus \RR^m
\]
act homotopically trivially, yet the holomorphic translations in $\RR^n / {\rm Im}(\Pi)$ act in a potentially non-trivial way. 
\end{remark}

\printbibliography

\end{document}